\def\acmconf{0} 
\newcommand{\mypar}[1]{\medskip\noindent{\sffamily\bfseries #1.}~}
\newcommand{\mypar}[1]{\medskip\noindent{\bfseries #1.}~}
\newtheorem{theorem}{Theorem}[section]
\newtheorem{lemma}[theorem]{Lemma}
\newtheorem{corollary}[theorem]{Corollary}
\newtheorem{fact}[theorem]{Fact}
\theoremstyle{definition}  \newtheorem{definition}[theorem]{Definition}
\theoremstyle{remark}
\newcommand{\EE}{\mathbb{E}}
\newcommand{\NN}{\mathbb{N}}
\newcommand{\ZZ}{\mathbb{Z}}
\newcommand{\indic}{\mathbbm{1}}
\newcommand{\cA}{\mathcal{A}}
\newcommand{\cC}{\mathcal{C}}
\newcommand{\cG}{\mathcal{G}}
\newcommand{\bzero}{\mathbf{0}}
\newcommand{\be}{\mathbf{e}}
\newcommand{\bw}{\mathbf{w}}
\newcommand{\bx}{\mathbf{x}}
\newcommand{\by}{\mathbf{y}}
\newcommand{\bz}{\mathbf{z}}
\newcommand{\lbar}{\overline{\ell}}
\newcommand{\rbar}{\overline{r}}
\newcommand{\Lbar}{\overline{L}}
\newcommand{\Rbar}{\overline{R}}
\newcommand{\degen}{\kappa}
\newcommand{\etal}{{et al.}\xspace}
\newcommand{\eps}{\varepsilon}
\newcommand{\tO}{\widetilde{O}}
\newcommand{\ceil}[1]{\lceil{#1}\rceil}
\newcommand{\proc}[1]{\textsc{#1}}
\newcommand{\indx}{\textsc{index}\xspace}
\newcommand{\uor}{\textsc{unique-or}\xspace}
\newcommand{\nih}{\textsc{unique-find}\xspace}
\newcommand{\intfind}{\textsc{int-find}\xspace}
\newcommand{\gdct}{\textsc{graph-dist}\xspace}
\newcommand\Tstrut {\rule{0pt}{3ex}}         
\newcommand\Bstrut {\rule[-1.3ex]{0pt}{0pt}}   
\newcommand{\before}{\mathrel{\lhd}}
\renewcommand{\ge}{\geqslant}
\renewcommand{\le}{\leqslant}
\renewcommand{\geq}{\geqslant}
\renewcommand{\leq}{\leqslant}
\renewcommand{\b}{\{0,1\}}
\DeclareMathOperator{\Bin}{Bin}
\DeclareMathOperator{\Pair}{Pair}
\DeclareMathOperator{\Nbr}{Neighbor}
\DeclareMathOperator{\R}{R}
\DeclareMathOperator{\poly}{poly}
\DeclareMathOperator{\polylog}{polylog}
\DeclareMathOperator{\odeg}{odeg}
\DeclareMathOperator{\col}{\psi}
\newcommand{\Rdt}{\R^{\mathrm{dt}}}
\DeclareMathOperator{\local}{LOCAL}
\begin{document}

\title{Graph Coloring via Degeneracy in Streaming and Other Space-Conscious Models}

\ifthenelse{\equal{\acmconf}{1}}
{

    \author{Suman K. Bera}
    \affiliation{%
        \institution{Dartmouth College}
        \city{Hanover}
        \state{NH}
        \postcode{03784}
    }
    \email{suman.k.bera.gr@dartmouth.edu}
    
    \author{Amit Chakrabarti}
    \affiliation{%
        \institution{Dartmouth College}
        \city{Hanover}
        \state{NH}
        \postcode{03784}
    }
    \email{amit.chakrabarti@dartmouth.edu}
    
    \author{Prantar Ghosh}
    \affiliation{%
        \institution{Dartmouth College}
        \city{Hanover}
        \state{NH}
        \postcode{03784}
    }
    \email{prantar.ghosh.gr@dartmouth.edu}
    \renewcommand{\shortauthors}{S. K. Bera \etal}
    
    %

\begin{abstract}
    
We study the problem of coloring a given graph using a small number of colors
in several well-established models of computation for big data. These include
the data streaming model, the general graph query model, the massively parallel
communication (MPC) model, and the CONGESTED-CLIQUE and the LOCAL models of 
distributed computation. On the one hand,
we give algorithms with sublinear complexity, for the appropriate notion of
complexity in each of these models. Our algorithms color a graph $G$ using
about $\degen(G)$ colors, where $\degen(G)$ is the degeneracy of $G$: this
parameter is closely related to the arboricity $\alpha(G)$. As a function of
$\degen(G)$ alone, our results are close to best possible, since the optimal
number of colors is $\degen(G)+1$.

On the other hand, we establish certain lower bounds indicating that
sublinear algorithms probably cannot go much further. In particular,
we prove that any randomized coloring algorithm that uses at most $\degen(G)+1$
colors would require $\Omega(n^2)$ storage in the one pass streaming model,
and $\Omega(n^2)$ many queries in the general graph query model, where $n$
is the number of vertices in the graph. These lower bounds hold even when the value
of $\degen(G)$ is known in advance; at the same time, our upper bounds do not require
$\degen(G)$ to be given in advance.

\end{abstract}



\ccsdesc[500]{Theory of computation~Streaming models}
\ccsdesc[500]{Theory of computation~Dynamic graph algorithms}
\ccsdesc[500]{Theory of computation~Sketching and sampling}
\ccsdesc[500]{Theory of computation~Lower bounds and information complexity}

    \settopmatter{printacmref=false, printccs=true, printfolios=true}
    
    \keywords{Data streaming, Graph coloring, Sublinear algorithms, 
                Massively parallel communication, Distributed algorithms}
}
{
    \author{%
        Suman K. Bera\thanks{%
                Department of Computer Science, Dartmouth College. 
                Supported in part by NSF under Award CCF-1650992.}%
        \and Amit Chakrabarti$^\fnsymbol{footnote}$
        \and Prantar Ghosh$^\fnsymbol{footnote}$%
    }
}

\maketitle

\ifthenelse{\equal{\acmconf}{1}}
{ }
{
    
}

\section{Introduction}

Graph coloring is a fundamental topic in combinatorics and the corresponding
algorithmic problem of coloring an input graph with few colors is a basic and
heavily studied problem in computer science. 
It has numerous applications including in scheduling~\cite{thevenin2018graph,lotfi1986graph,leighton1979graph},
air traffic flow management~\cite{barnier2004graph}, frequency assignment
in wireless networks~\cite{balasundaram2006graph,park1996application}, register allocation~\cite{chaitin1982register,chow1990priority,chaitin1981register}. More recently, vertex coloring has been used to 
compute seed vertices in social networks that are
then expanded to detect community structures in the 
network~\cite{moradi2014local}.

Given an $n$-vertex graph $G =
(V,E)$, the task is to assign colors to the vertices in $V$ so that no two
adjacent vertices get the same color. Doing so with the minimum possible
number of colors---called the chromatic number, $\chi(G)$---is famously hard:
it is NP-hard to even approximate $\chi(G)$ to a factor of $n^{1-\eps}$ for
any constant $\eps >
0$~\cite{Feige1996zero,Zuckerman2006linear,Khot2006better}. In the face of
this hardness, it is algorithmically interesting to color $G$ with a
possibly suboptimal number of colors depending upon tractable parameters
of $G$. One such simple parameter is $\Delta$, the maximum degree: a trivial
greedy algorithm colors $G$ with $\Delta+1$ colors.

We study graph coloring in a number of space-constrained and
data-access-constrained settings, including the data streaming model and
certain distributed computing models. In such settings, finding a coloring
with ``about $\Delta$'' colors is a fairly nontrivial problem that has been
studied from various angles in a flurry of research over the last 
decade~\cite{AssadiCK19,Bhattacharya2018dynamic,Harris2016distributed,Chang2018optimal,ParterSu2018,Barenboim2016deterministic}.
In a recent breakthrough (awarded Best Paper at SODA~2019), Assadi, Chen, and Khanna~\cite{AssadiCK19}
gave sublinear algorithms for $(\Delta+1)$-coloring an input graph in such
models.

In this work, we focus on colorings that
use ``about $\degen$'' colors, where $\degen = \degen(G)$ is the {\em
degeneracy} of $G$, a parameter that improves upon $\Delta$. It is defined as
follows: $\degen = \min\{k:$ every induced subgraph of $G$ has a vertex of
degree at most $k\}$. Clearly, $\degen \le \Delta$. There is a simple greedy 
algorithm that runs in linear time and produces $(\degen + 1)$-coloring; see
\Cref{sec:prelim}. However, just as
before, when processing a massive graph under the constraints of either the
space-bounded streaming model or certain distributed computing models, the
inherently sequential nature of the greedy algorithm makes it infeasible.

\subsection{Our Results and Techniques} \label{sec:results}

We obtain a number of algorithmic results, as well as several lower bound results.

\mypar{Algorithms}
We give new graph coloring algorithms, parametrized by $\degen$, in the following models:
\begin{inparaenum}[\bfseries (1)]
  \item the data streaming model, where the input is a stream of edge insertions and deletions (i.e., a dynamic graph stream) resulting in the eventual graph to be colored and we are limited to a work space of $\tO(n)$ bits\footnote{The $\tO(\cdot)$ notation hides factors polylogarithmic in $n$.}, the so-called semi-streaming setting~\cite{FeigenbaumKMSZ05};
  \item the general graph query model~\cite{Goldreich-proptest-book}, where we may access the graph using only neighbor queries (what is the $i$th neighbor of $x$?) and pair queries (are $x$ and $y$ adjacent?);
  \item the massively parallel communication (MPC) model, where each of a large number of memory-limited processors holds a sublinear-sized portion of the input data and computation proceeds using rounds of communication;
  \item the congested clique model of distributed computation, where there is one processor per vertex holding that vertex's neighborhood information and each round allows each processor to communicate $O(\log n)$ bits to a specific other processor; and
  \item the LOCAL model of distributed computation, where there is one processor per vertex holding that vertex's neighborhood information and each round allows each processor to send an arbitrary amount of information to all its neighbors.
\end{inparaenum}

\begin{table*}[hbt]
\ifthenelse{\equal{\acmconf}{1}}%
  {}%
  {\small}%
\begin{minipage}{\textwidth} 
\centering
\begin{tabular}{c c rl c}
\toprule
{\bf Model}
& {\bf Number of Colors}  
& \multicolumn{2}{c } {\bf Complexity Parameters} 
& {\bf Source} \\

\midrule

{Streaming}
& {$\Delta+1$} 
& {\Tstrut $\tO(n)$} space,
& {$\tO(n\sqrt{\Delta})$} post-processing time
& \cite{AssadiCK19}
\\
{(one pass)}
& {\Tstrut $\degen + o(\degen)$ \Bstrut} 
& {$\tO(n)$} space, 
& {$\tO(n)$} post-processing time
& {this paper} \\
\hline

\multirow{2}{*}{ Query}
& {$\Delta+1$} 
&\multicolumn{2}{c} {\Tstrut $\tO(n^{3/2})$ queries }
& {\cite{AssadiCK19}}
\\
& {\Tstrut $\degen + o(\degen)$ \Bstrut} 
& \multicolumn{2}{c}{{$\tO(n^{3/2})$} queries}
& {this paper} \\
\hline

\multirow{2}{*}{MPC}
& {$\Delta+1$} 
& {$O(1)$} rounds,
& {\Tstrut $O(n \log^3 n)$} bits per processor
& {\cite{AssadiCK19}}
\\
& {\Tstrut $\degen + o(\degen)$ \Bstrut} 
& {$O(1)$} rounds,
& {$O(n \log^2 n)$} bits per processor
& {this paper} \\
\hline

\multirow{2}{*}{Congested Clique}
& {\Tstrut $\Delta+1$} 
& \multicolumn{2}{c}{$O(1)$ rounds}
& {\cite{CFGUZ2018}}
\\
& { \Tstrut $\degen + o(\degen)^{\star}$ \Bstrut} 
& \multicolumn{2}{c}{$O(1)$ rounds}
& {this paper} \\
\hline

\multirow{2}{*}{\Tstrut $\local$}
& {$O(\alpha n^{1/k})$} 
& {\Tstrut $O(k)$ rounds,} 
& {for~~ $k \in \big[ \omega(\log \log n),\, O(\sqrt{\log n}) \big]$}
& {\cite{Kothapalli2011distributed}}
\\
& {\Tstrut $O(\alpha n^{1/k} \log n)$} 
& { $O(k)$ rounds,} 
& {for $k \in \big[ \omega(\sqrt{\log n}),\, o(\log n) \big]$ \Bstrut}
& {this paper} 
\\
\bottomrule
\end{tabular}

\end{minipage}
\caption{Summary of our algorithmic results and basic comparison with most related previous work. In the result marked ($^\star$), we require that $\degen = \omega( \log^2 n)$. In the first two results, the number of colors can be improved to $\min\{\Delta+1, \degen+o(\degen)\}$ by running our algorithm alongside that of~\cite{AssadiCK19}; in the streaming setting, this would require knowing $\Delta$ in advance.}
\label{table:results} 
\end{table*}

\Cref{table:results} summarizes our algorithmic results and provides, in each case, a basic comparison with the most related result from prior work; more details appear in \Cref{sec:related}. As we have noted, $\degen \le \Delta$ in every case; indeed, $\degen$ could be arbitrarily better than $\Delta$ as shown by the example of a star graph, where $\degen = 1$ whereas $\Delta = n-1$. From a practical standpoint, it is notable that in many real-world large graphs drawn from various application domains---such as social networks, web graphs, and biological networks---the parameter $\degen$ is often {\em significantly} smaller than $\Delta$. See~\Cref{table:degen-vs-delta} for some concrete numbers.
That said, $\degen + o(\degen)$ is mathematically incomparable with $\Delta + 1$.



\begin{table*}[!hbt]
\ifthenelse{\equal{\acmconf}{1}}%
  {}%
  {\small}%

\begin{minipage}{\textwidth} 
\centering
\begin{tabular}{c c c c c}
\hline
{\bf Graph Name}
& {\bf $|V|$}  
& {\bf $|E|$} 
& {\bf $\Delta$} 
& {\bf $\degen$} \\
\hline
{soc-friendster}
& \Tstrut{$66$M} 
& {$2$B}
& {$5$K}
& {$305$}
\\
{fb-uci-uni}
& {$59$M} 
& {$92$M}
& {$5$K}
& {$17$}
\\
{soc-livejournal}
& {$4$M} 
& {$28$M}
& {$3$K}
& {$214$}
\\
{soc-orkut}
& {$3$M} 
& {$106$M}
& {$27$K}
& {$231$} \Bstrut
\\
{web-baidu-baike}
& {$2$M} 
& {$18$M}
& {$98$K}
& {$83$}
\\
{web-hudong}
& {$2$M} 
& {$15$M}
& {$62$K}
& {$529$}
\\
{web-wikipedia2009}
& {$2$M} 
& {$5$M}
& {$3$K}
& {$67$}
\\
{web-google}
& {$916$K} 
& {$5$M}
& {$6$K}
& {$65$}\Bstrut
\\
{bio-mouse-gene}
& {$43$K} 
& {$14$M}
& {$8$K}
& {$1$K}
\\
{bio-human-gene1}
& {$22$K} 
& {$12$M}
& {$8$K}
& {$2$K}
\\
{bio-human-gene2}
& {$14$K} 
& {$9$M}
& {$7$K}
& {$2$K}
\\
{bio-WormNet-v3}
& {$16$K} 
& {$763$K}
& {$1$K}
& {$165$}\Bstrut
\\
\hline
\end{tabular}

\end{minipage}
\caption{%
Statistics of several large real-world graphs taken from the application domains
of social networks, web graphs, and biological networks, showing that
the degeneracy, $\degen$, is often significantly smaller than 
the maximum degree, $\Delta$.
Source: \protect\url{http://networkrepository.com}~\cite{nr-aaai15}.%
}
\label{table:degen-vs-delta} 
\end{table*}

The parameter $\degen$ is also closely related to the {\em arboricity} $\alpha = \alpha(G)$, defined as the minimum number of forests into which the edges of $G$ can be partitioned. It is an easy exercise to show that $\alpha \le \degen \le 2\alpha-1$.

Perhaps even more than these results, our key contribution is a conceptual idea and a corresponding technical lemma underlying all our algorithms. We show that every graph admits a ``small'' sized {\em low degeneracy partition} (LDP), which is a partition of its vertex set into ``few'' blocks such that the subgraph induced by each block has low degeneracy, roughly logarithmic in $n$. Moreover, such an LDP can be computed by a very simple and distributed randomized algorithm: for each vertex, choose a ``color'' independently and uniformly at random from a suitable-sized palette (this is not to be confused with the eventual graph coloring we seek; this random assignment is most probably not a proper coloring of the graph). The resulting color classes define the blocks of such a partition, with high probability. \Cref{thm:ldp}, the LDP Theorem, makes this precise.

Given an LDP, a generic graph coloring algorithm is to run the aforementioned minimum-degree-based greedy algorithm on each block, using distinct palettes for the distinct blocks. We obtain algorithms achieving our claimed results by suitably implementing this generic algorithm in each computational model.

\mypar{Lower Bounds}
Recall that a graph with degeneracy $\degen$ admits a proper $(\degen+1)$-coloring. As \Cref{table:results} makes clear, there are several space-conscious $(\Delta+1)$-coloring algorithms known; perhaps we could aim for improved algorithms that provide $(\degen+1)$-colorings? We prove that this is not possible in sublinear space in either the streaming or the query model. In fact, our lower bounds prove more. We show that distinguishing $n$-vertex graphs of degeneracy $\degen$ from those with chromatic number $\degen+2$ requires $\Omega(n^2)$ space in the streaming model and $\Omega(n^2)$ queries in the general graph query model. This shows that it is hard to produce a $(\degen+1)$-coloring and in fact even to determine the value of $\degen$. These results generalize to the problems of producing a $(\degen+\lambda)$-coloring or estimating the degeneracy up to $\pm\lambda$; the corresponding lower bounds are $\Omega(n^2/\lambda^2)$. Furthermore, the streaming lower bounds hold even in the insertion-only model, where the input stream is simply a listing of the graph's edges in some order; compare this with our upper bound, which holds even for dynamic graph streams.

A possible criticism of the above lower bounds for coloring is that they seem to depend on it being hard to {\em estimate} the degeneracy $\degen$. Perhaps the coloring problem could become easier if $\degen$ was given to the algorithm in advance? We prove two more lower bounds showing that this is not so: the same $\Omega(n^2/\lambda^2)$ bounds hold even with $\degen$ known {\em a priori}.

Most of our streaming lower bounds use reductions from the \indx problem in communication complexity (a standard technique), via a novel gadget that we develop here; one bound uses a reduction from a variant of \textsc{disjointness}. Our query lower bounds use a related gadget and reductions from basic problems in Boolean decision tree complexity.

We conclude the paper with a ``combinatorial'' lower bound that addresses a potential criticism of our main algorithmic technique: the LDP. Perhaps a more sophisticated graph-theoretic result, such as the Palette Sparsification Theorem of Assadi et al.~(see below), could improve the quality of the colorings obtained? We prove that this is not so: there is no analogous theorem for colorings with ``about $\degen$'' colors.


\subsection{Related Work and Comparisons} \label{sec:related}

\mypar{Streaming and Query Models}
The work closest to ours in spirit is the recent breakthrough of Assadi, Chen, and Khanna~\cite{AssadiCK19}: they give a one-pass streaming $(\Delta+1)$-coloring algorithm that uses $\tO(n)$ space (i.e., is semi-streaming) and works on dynamic graph streams. Their algorithm exploits a key structural result that they establish: choosing a random $O(\log n)$-sized palette from $\{1,\ldots,\Delta+1\}$ for each vertex allows a compatible list coloring. They call this the {\em Palette Sparsification Theorem}. Their algorithm processes each stream update quickly, but then spends $\tO(n \sqrt{\Delta})$ time in post-processing. Our algorithm is similarly quick with the stream updates and has a faster $\tO(n)$-time post-processing step. Further, our algorithm is ``truly one-pass'' in that it does not require foreknowledge of $\degen$ or any other parameter of the input graph, whereas the Assadi et al.~algorithm needs to know the precise value of $\Delta$ before seeing the stream.

In the same paper, Assadi et al.~also consider the graph coloring problem in the query model. They give a $(\Delta+1)$-coloring algorithm that makes $\tO(n^{3/2})$ queries, followed by a fairly elaborate computation that runs in $\tO(n^{3/2})$ time and space. Our algorithm has the same complexity parameters and is arguably much simpler: its post-processing is just the straightforward greedy offline algorithm for $(\degen+1)$-coloring.

Another recent work on coloring in the streaming model is Radhakrishnan et al.~\cite{Radhakrishnan2015hypergraph}, which studies the problem of $2$-coloring an $n$-uniform hypergraph. In the query model, there are a number of works studying basic graph problems~\cite{Goldreich2008approximating,Parnas2007approximating,Chazelle2005approximating} but, to the best of our knowledge, Assadi et al.~were the first to study graph coloring in this sense. Also, to the best of our knowledge, there was no previously known algorithm for $O(\alpha)$-coloring in a semi-streaming setting, whereas here we obtain $(\degen + o(\degen))$-colorings; recall the bound $\degen \le 2\alpha - 1$.

\mypar{MPC and Congested Clique Models}
The MapReduce framework~\cite{DeanG04} is extensively used in distributed
computing to analyze and process massive data sets. Beame, Koutris, and Suciu~\cite{BeameKS13} defined the
Massively Parallel Communication (MPC) model to abstract out key theoretical features of MapReduce; it has since become a widely used setting for designing and analyzing big data algorithms, especially for graph problems. In this model, an input of size $m$ is distributed among $p \approx m/S$ processors, each of which is computationally unbounded and restricted to $S$ bits of space. The processors operate in synchronous rounds; in each round, a processor may communicate with all others, subject to the space constraint. The focus is on using a very small number of rounds.

Another well-studied model for distributed graph algorithms is Congested Clique~\cite{LotkerPPP05}, where there are $n$ nodes, each holding the local neighborhood information for one of the $n$ vertices of the input graph. The nodes communicate in synchronous rounds; in a round, every pair of processors may communicate, but each message is restricted to $O(\log n)$ bits. Behnezhad et al.~\cite{Behnezhad2018BriefAS} show that Congested Clique is equivalent to the so-called ``semi-MPC model,'' defined as MPC with $O(n\log n)$ bits of memory per machine: there are simulations in both directions preserving the round complexity.

Graph coloring has been studied in these models before. Harvey et al.~\cite{HarveyLL18} gave a $(\Delta+o(\Delta))$-coloring algorithm in
the MapReduce model; it can be simulated in MPC using $O(1)$ rounds and $O(n^{1+c})$ space per machine for some constant $c>0$. Parter~\cite{Parter18} gave a Congested Clique algorithm for $(\Delta + 1)$-coloring using $O(\log \log \Delta \cdot \log^{\star}\Delta)$ rounds; Parter and Su~\cite{ParterSu2018} improved this
to $O(\log^{\star} \Delta)$. The aforementioned paper of Assadi et al.~\cite{AssadiCK19} gives an MPC algorithm for $(\Delta+1)$-coloring using $O(1)$-round and $O(n\log^3 n)$ bits of space per machine. Because this space usage is $\omega(n\log n)$, the equivalence result of Behnezad et al.~\cite{Behnezhad2018BriefAS} does not apply and this doesn't lead to an $O(1)$-round Congested Clique algorithm. In contrast, our MPC algorithm uses only $O(n\log n)$ bits of space per machine for graphs with degeneracy $\omega(\log^2 n)$, and therefore leads to such a Congested Clique algorithm. Chang \etal~\cite{CFGUZ2018} have recently designed two $(\Delta + 1)$ list-coloring
algorithms: an $O(1)$-round Congested Clique algorithm, and an 
$O(\sqrt{\log \log n})$-round MPC algorithm with $o(n)$ space per machine and $\tO(m)$ space in total. To the best of our knowledge, no
$O(\alpha)$-coloring algorithm was previously known, in either the MPC or the Congested Clique model.

\mypar{The LOCAL Model}
The LOCAL model of distributed computing is ``orthogonal'' to Congested Clique: the input setup is similar but, during computation, each node may only communicate with its neighbors in the input graph, though it may send an arbitrarily long message. As before, the focus is on minimizing the number of rounds (a.k.a., time). There is a deep body of work on graph coloring in this model. Indeed, graph coloring is one of {\em the} most central ``symmetry breaking'' problems in distributed computing. We refer the reader to the monograph by Barenboim and Elkin~\cite{BarenboimElkin-book} for an excellent overview of the state of the art. Here, we shall briefly discuss only a few results closely related to our contribution.

There is a long line of work on fast $(\Delta+1)$-coloring in the LOCAL model, in the deterministic as well as the randomized setting~\cite{Panconesi1996complexity,Barenboim2016deterministic,Fraigniaud2016local,Luby1986simple,Johansson1999simple,Alon1986fast,Schneider2010new,Barenboim2016locality} culminating in sublogarithmic time solutions due to Harris~\cite{Harris2016distributed} and Chang \etal~\cite{Chang2018optimal}.
%
%
%
Barenboim and Elkin \cite{Barenboim2010sublogarithmic,Barenboim2011deterministic} 
studied fast distributed coloring algorithms that may use far fewer than $\Delta$ colors: in particular, they gave algorithms that use $O(\alpha)$ colors and run in $O(\alpha^\eps \log n)$ time on graphs with arboricity at most $\alpha$. Recall again that
$\degen \le 2\alpha-1$, so that a $2\alpha$-coloring always {\em exists}. They also gave a faster $O(\log n)$-time algorithm using $O(\alpha^2)$ colors. Further, they gave a family of algorithms that produce an $O(t\alpha^2)$-coloring in $O(\log_t n + \log^\star n)$, for every $t$ such that $2\leq t \leq O(\sqrt{{n}/{\alpha}})$. Our algorithm for the LOCAL model builds on this latter result.


Kothapalli and Pemmaraju~\cite{Kothapalli2011distributed} focused on 
arboricity-dependant coloring using very few rounds. They gave a randomized $O(k)$-round algorithm that uses $O(\alpha n^{1/k})$ colors for $2\log \log n \leq k \leq \sqrt{\log n}$ and
$O(\alpha^{1+1/k} n^{1/k+3/k^2} 2^{-2^k})$ colors for $k < 2\log \log n$.
We extend their result to the range
$k \in \big[ \omega(\sqrt{\log n}),\, o(\log n) \big]$, using $O(\alpha n^{1/k}\log n)$ colors.

Ghaffari and Lymouri~\cite{Ghaffari2017simple}
gave a randomized
$O(\alpha)$-coloring algorithm that runs in time
$O(\log n \cdot \min \{\log \log n, \log \alpha \})$ as well as an $O(\log n)$-time algorithm
using 
$\min \{ (2+\eps)\alpha + O(\log n \log\log n),$ $O(\alpha \log \alpha)\}$ colors,
for any constant $\eps > 0$. However, their technique does not yield a sublogarithmic time algorithm, even at the cost of a larger palette.

\mypar{The LDP Technique}
As mentioned earlier, our algorithmic results rely on the concept of a low degeneracy partition (LDP) that we introduce in this work. Some relatives of this idea have been considered before. Specifically, Barenboim and Elkin~\cite{BarenboimElkin-book} define a $d$-defective (resp.~$b$-arbdefective) $c$-coloring to be a vertex coloring using palette $[c]$ such that every color class induces a subgraph with maximum degree at most $d$ (resp.~arboricity at most $b$). Obtaining such improper colorings is a useful first step towards obtaining proper colorings. They give deterministic algorithms to obtain good arbdefective colorings~\cite{Barenboim2011deterministic}. However, their algorithms are elaborate and are based on construction of low outdegree acyclic partial orientations of the graph's edges: an expensive step in our space-conscious models.

Elsewhere (Theorem~10.5 of Barenboim and Elkin~\cite{BarenboimElkin-book}), they note that a useful defective (not arbdefective) coloring is easily obtained by randomly picking a color for each vertex; this is then useful for computing an $O(\Delta)$-coloring. 

Our LDP technique can be seen as a simple randomized method for producing an arbdefective coloring. Crucially, we parametrize our result using degeneracy instead of arboricity and we give sharp---not just asymptotic---bounds on the degeneracy of each color class. 

\mypar{Other Related Work}
Other work considers coloring in the setting of {\em dynamic graph algorithms}: edges are inserted and deleted over time and the goal is to 
{\em maintain} a valid vertex coloring of the graph that must be updated quickly after each modification. Unlike in the streaming setting,
there is no space restriction. Bhattacharya \etal~\cite{Bhattacharya2018dynamic}
gave a randomized algorithm that maintains a $(\Delta+1)$-coloring with $O(\log \Delta)$ expected amortized update time and a deterministic algorithm that maintains a $(\Delta + o(\Delta))$-coloring with $O(\polylog \Delta)$ amortized update time. Barba \etal~\cite{barba2017dynamic}
gave tradeoffs between the number of colors used and
update time. However, the techniques in these works do not seem to apply in the streaming setting due to fundamental differences in the models.

Estimating the arboricity of a graph in the streaming model
is a well studied problem. McGregor \etal~\cite{Mcgregor2015densest}
gave a one pass $(1+\eps)$-approximation algorithm to 
estimate the arboricity of graph using $\tO(n)$ space. 
Bahmani \etal~\cite{Bahmani2012densest} gave a matching
lower bound. Our lower bounds for estimating degeneracy are quantitatively much larger but they call for much tighter estimates.


\section{Preliminaries} \label{sec:prelim}

Throughout this paper, graphs are simple, undirected, and unweighted. In considering a graph coloring problem, the input graph will usually be called $G$ and we will put $n = |V(G)|$. The notation ``$\log x$'' stands for $\log_2 x$. For an integer $k$, we denote the set $\{1,2,\ldots,k\}$ by $[k]$. 

For a graph $G$, we define $\Delta(G) = \max\{\deg(v):\, v \in V(G)\}$. We say that $G$ is $k$-degenerate if every induced subgraph of $G$ has a vertex of degree at most $k$. For instance, every forest is $1$-degenerate and an elementary theorem says that every planar graph is $5$-degenerate. The {\em degeneracy} $\degen(G)$ is the smallest $k$ such that $G$ is $k$-degenerate. The {\em arboricity} $\alpha(G)$ is the smallest $r$ such that the edge set $E(G)$ can be partitioned into $r$ forests. When the graph $G$ is clear from the context, we simply write $\Delta$, $\degen$, and $\alpha$, instead of $\Delta(G)$, $\degen(G)$, and $\alpha(G)$.

We note two useful facts: the first is immediate from the definition, and the second is an easy exercise.

\begin{fact} \label{fact:degen-sparse}
  If an $n$-vertex graph has degeneracy $\degen$, then it has at most $\degen n$ edges. \qed
\end{fact}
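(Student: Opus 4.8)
The plan is to exploit the defining property of $\degen$-degeneracy directly, via a vertex elimination order. First I would establish the existence of a \emph{degeneracy ordering}: an enumeration $v_1, v_2, \ldots, v_n$ of $V(G)$ such that for every $i$, the vertex $v_i$ has at most $\degen$ neighbors among $\{v_1, \ldots, v_{i-1}\}$. This is built greedily from the back. Since $G$ is (trivially) an induced subgraph of itself, it has a vertex of degree at most $\degen$; take it to be $v_n$ and delete it. The remaining graph $G[\{v_1, \ldots, v_{n-1}\}]$ is again an induced subgraph of $G$, so it too has a vertex of degree at most $\degen$ within it; take that to be $v_{n-1}$, and iterate. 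By construction, when $v_i$ is peeled off it is adjacent to at most $\degen$ of the still-present vertices, which are exactly $\{v_1, \ldots, v_{i-1}\}$.

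Given such an order, I would charge each edge $\{v_i, v_j\}$ with $i < j$ to its later endpoint $v_j$. The ordering property says each $v_j$ receives at most $\degen$ charges (one per earlier neighbor), so $|E(G)| = \sum_{j=1}^{n} (\text{charges to } v_j) \le \degen n$, which is the claim. (This actually yields the marginally sharper $|E(G)| \le \degen(n-1)$, since $v_1$ is charged nothing, but $\degen n$ suffices for all our uses.)

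An equivalent and even shorter route is induction on $n$: the case $n \le 1$ is trivial; for the step, pick a vertex $v$ of degree at most $\degen$ (it exists because $G$ is an induced subgraph of itself), note that $G - v$ has degeneracy at most $\degen$ since each of its induced subgraphs is an induced subgraph of $G$, apply the inductive hypothesis to get $|E(G - v)| \le \degen(n-1)$, and add back the at most $\degen$ edges incident to $v$.

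There is no genuine obstacle here: the statement is immediate from the definition, and both arguments hinge on the single trivial observation that degeneracy is monotone under passing to induced subgraphs (an induced subgraph of an induced subgraph of $G$ is an induced subgraph of $G$). The only thing worth stating explicitly is that observation, together with the fact that it licenses the greedy peeling procedure above.
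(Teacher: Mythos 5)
Your proof is correct. The paper marks this fact as ``immediate from the definition'' and gives no explicit argument, but the machinery you re-derive (the degeneracy ordering and the ordered-degree bound) is exactly what the paper establishes in its \Cref{lem:degen-odeg}, so your argument is the one the authors have in mind; the charging step (each edge charged to its later endpoint) or, equivalently, the induction on $n$, closes the gap they left implicit.
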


\begin{fact} \label{fact:degen-arb}
  In every graph, the degeneracy $\degen$ and arboricity $\alpha$ satisfy $\alpha \le \degen \le 2\alpha-1$. \qed
\end{fact}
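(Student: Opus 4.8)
The plan is to prove the two inequalities separately, both leaning on the elimination‑ordering view of degeneracy.

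For the easy direction, $\degen \le 2\alpha - 1$: suppose $E(G)$ is partitioned into $\alpha$ forests. For any induced subgraph $H$ on $s$ vertices, each forest restricted to $V(H)$ contributes at most $s-1$ edges, so $|E(H)| \le \alpha(s-1) < \alpha s$. Hence the sum of degrees in $H$ is less than $2\alpha s$, so $H$ contains a vertex of degree at most $2\alpha - 1$. Since $H$ was an arbitrary induced subgraph, $G$ is $(2\alpha-1)$-degenerate, i.e.\ $\degen \le 2\alpha - 1$.

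For the other direction, $\alpha \le \degen$: I would use the ``smallest‑last'' construction. Repeatedly deleting a minimum‑degree vertex yields an ordering $v_1,\dots,v_n$ (reverse of the deletion order) in which each $v_i$ has at most $\degen$ neighbors among $v_1,\dots,v_{i-1}$, because when $v_i$ is deleted the surviving graph is the one induced by $\{v_1,\dots,v_i\}$ and there $v_i$ has degree at most $\degen$. Orient every edge toward its higher‑indexed endpoint; the resulting orientation is acyclic with in‑degree at most $\degen$ at every vertex. Now greedily assign each edge a label from $\{1,\dots,\degen\}$ so that the edges entering any common head receive distinct labels — feasible since each in‑degree is at most $\degen$. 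Each label class is then a forest, which gives $\alpha \le \degen$.

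The only point needing a moment of care is the claim that each label class is acyclic: if some label class contained a cycle, then at the highest‑indexed vertex $v_m$ on that cycle, both incident cycle‑edges would be oriented into $v_m$ and would carry the same label, contradicting the labelling rule. Everything else is a direct counting or greedy argument, so I do not expect a genuine obstacle here — this is essentially the ``easy exercise'' advertised, and the acyclicity check is its one substantive line.
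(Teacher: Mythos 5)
Your proof is correct. The paper states \Cref{fact:degen-arb} as an exercise and supplies no proof of its own, so there is no ``paper's approach'' to compare against; but your argument is exactly the standard one. The forward direction ($\degen \le 2\alpha-1$) follows from the fact that any induced subgraph on $s$ vertices has fewer than $\alpha s$ edges and hence average degree below $2\alpha$. The reverse direction is handled cleanly by orienting each edge toward its later endpoint in a degeneracy ordering, giving in-degree at most $\degen$ everywhere, and labelling the in-edges at each vertex with distinct labels from $[\degen]$; your check that each label class is acyclic (two same-labelled edges would enter the maximum-indexed vertex of any purported cycle) is the right one-line verification, and in fact the acyclicity of the orientation itself is not even needed once you run that argument. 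No gaps.
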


In analyzing our algorithms, it will be useful to consider certain {\em vertex orderings} of graphs and their connection with the notion of degeneracy, given by \Cref{lem:degen-odeg} below. Although the lemma is folklore, it is crucial to our analysis, so we include a proof for completeness.
\begin{definition} \label{def:odeg}
  An {\em ordering} of $G$ is a list consisting of all its vertices (equivalently, a total order on $V(G)$). Given an ordering $\before$, for each $v \in V(G)$, the {\em ordered neighborhood}
  \ifthenelse{\equal{\acmconf}{1}}
  {$N_{G,\before}(v) := \{w \in V(G):\, \{v,w\} \in E(G), v \before w\}$,}
  {\[ N_{G,\before}(v) := \{w \in V(G):\, \{v,w\} \in E(G), v \before w\} \,, \]}
  i.e., the set of neighbors of $v$ that appear {\em after} $v$ in the ordering. The {\em ordered degree} $\odeg_{G,\before}(v) := |N_{G,\before}(v)|$.  
\end{definition}

\begin{definition} \label{def:degen-order}
  A {\em degeneracy ordering} of $G$ is an ordering produced by the following algorithm: starting with an empty list, repeatedly pick a minimum degree vertex $v$ (breaking ties arbitrarily), append $v$ to the end of the list, and delete $v$ from $G$; continue this until $G$ becomes empty.
\end{definition}

\begin{lemma} \label{lem:degen-odeg}
  A graph $G$ is $k$-degenerate iff there exists an ordering $\before$ such that $\odeg_{G,\before}(v) \le k$ for all $v \in V(G)$.
\end{lemma}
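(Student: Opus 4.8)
The plan is to prove both directions of the biconditional directly from the definitions, using the greedy degeneracy ordering of \Cref{def:degen-order} for the forward direction.

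\medskip

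For the ``only if'' direction, suppose $G$ is $k$-degenerate. I would run the algorithm of \Cref{def:degen-order} to produce an ordering $\before$, and argue that this ordering witnesses the claim. Fix any vertex $v$ and consider the moment the algorithm appends $v$ to the list. At that point, the remaining graph $G'$ is exactly the subgraph of $G$ induced by $v$ together with all vertices that appear after $v$ in $\before$; since $v$ is chosen as a minimum-degree vertex of $G'$ and $G'$ is an induced subgraph of $G$, the $k$-degeneracy hypothesis gives that $G'$ has some vertex of degree at most $k$, hence $\deg_{G'}(v) \le k$. But $N_{G,\before}(v)$ is precisely the neighbors of $v$ in $G'$, so $\odeg_{G,\before}(v) = \deg_{G'}(v) \le k$. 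This holds for every $v$, giving the desired ordering.

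\medskip

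For the ``if'' direction, suppose $\before$ is an ordering with $\odeg_{G,\before}(v) \le k$ for all $v$. Let $H$ be any nonempty induced subgraph of $G$, and let $u$ be the vertex of $H$ that appears earliest among $V(H)$ in the order $\before$. Every neighbor of $u$ within $H$ appears after $u$ in $\before$, so the $H$-neighbors of $u$ form a subset of $N_{G,\before}(u)$; therefore $\deg_H(u) \le \odeg_{G,\before}(u) \le k$. Thus every induced subgraph of $G$ has a vertex of degree at most $k$, i.e., $G$ is $k$-degenerate.

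\medskip

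Neither direction presents a genuine obstacle --- the argument is entirely bookkeeping about which neighbors lie ``after'' a given vertex. The only point requiring a little care is making precise, in the forward direction, the identification of the ``remaining graph'' at the time $v$ is removed with the induced subgraph on $v$ and its $\before$-successors, so that the $k$-degeneracy hypothesis applies to it; once that is stated cleanly the rest is immediate.
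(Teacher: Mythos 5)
Your proposal is correct and matches the paper's own proof essentially step for step: both directions use the degeneracy ordering of \Cref{def:degen-order} for the forward implication and the ``leftmost vertex of $H$'' argument for the reverse. The only difference is that you spell out more explicitly the identification of the ``remaining graph'' at the time $v$ is removed with the induced subgraph on $v$ and its $\before$-successors, which the paper states a little more tersely.
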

\begin{proof}
  Suppose that $G$ is $k$-degenerate. Let $\before\, = (v_1, \ldots, v_n)$ be a degeneracy ordering. Then, for each $i$, $\odeg_{G,\before}(v_i)$ is the degree of $v_i$ in the induced subgraph $G \setminus \{v_1, \ldots, v_{i-1}\}$. By definition, this induced subgraph has a vertex of degree at most $k$, so $v_i$, being a minimum degree vertex in the subgraph, must have degree at most $k$.
  
  On the other hand, suppose that $G$ has an ordering $\before$ such that $\odeg_{G,\before}(v) \le k$ for all $v \in V(G)$. Let $H$ be an induced subgraph of $G$. Let $v$ be the leftmost (i.e., smallest) vertex in $V(H)$ according to $\before$. Then all neighbors of $v$ in $H$ in fact lie in $N_{G,\before}(v)$, so $\deg_H(v) \le \odeg_{G,\before}(v) \le k$. Therefore, $G$ is $k$-degenerate.
\end{proof}

A $c$-coloring of a graph $G$ is a mapping $\col \colon V(G)\to [c]$; it is said to be a {\em proper coloring} if it makes no edge monochromatic: $\col(u) \ne \col(v)$ for all $\{u,v\} \in E(G)$. The smallest $c$ such that $G$ has a proper $c$-coloring is called the {\em chromatic number} $\chi(G)$. By considering the vertices of $G$ one at a time and coloring greedily, we immediately obtain a proper $(\Delta+1)$-coloring. This idea easily extends to degeneracy-based coloring.

\begin{lemma} \label{lem:degen-color}
  Given unrestricted (``offline'') access to an input graph $G$, we can produce a proper $(\degen+1)$-coloring in linear time.
\end{lemma}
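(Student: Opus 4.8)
The plan is to combine \Cref{lem:degen-odeg} with the classical greedy coloring argument, working through a degeneracy ordering \emph{in reverse}. First I would compute a degeneracy ordering $\before\, = (v_1, \ldots, v_n)$ of $G$ as in \Cref{def:degen-order}; by \Cref{lem:degen-odeg}, this ordering satisfies $\odeg_{G,\before}(v_i) \le \degen$ for every $i$, i.e., each vertex $v_i$ has at most $\degen$ neighbors among $\{v_{i+1}, \ldots, v_n\}$. Then I would color the vertices in the order $v_n, v_{n-1}, \ldots, v_1$ (i.e., from the back of the list to the front), assigning to each $v_i$ the smallest color in $[\degen+1]$ not already used by any of its \emph{already-colored} neighbors.

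The key observation is that when $v_i$ is about to be colored, its already-colored neighbors are exactly those neighbors lying in $\{v_{i+1}, \ldots, v_n\}$, and there are at most $\odeg_{G,\before}(v_i) \le \degen$ of them. Hence at most $\degen$ colors are forbidden, so some color in the palette $[\degen+1]$ is available, and the greedy choice is well-defined. Since we never give a vertex the color of an already-colored neighbor, and every edge gets its second endpoint colored at some point, no edge is monochromatic; the resulting $\col$ is therefore a proper $(\degen+1)$-coloring. For the running time, computing the degeneracy ordering takes linear time using a bucket-queue data structure that maintains vertices grouped by current degree (each vertex is extracted once and each edge causes $O(1)$ bucket updates), and the greedy coloring pass over the reversed list inspects each edge a constant number of times, so the whole procedure runs in $O(n + m)$ time, which is linear (and in fact $O(\degen \cdot n)$ by \Cref{fact:degen-sparse}).

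I do not anticipate a serious obstacle here, as this is a folklore argument; the only point requiring a modicum of care is the bookkeeping for the linear-time implementation of the degeneracy ordering — one must argue that maintaining the min-degree bucket structure really costs only $O(1)$ amortized per edge and per vertex removal, rather than, say, re-scanning all degrees at each step. If one is willing to settle for $O(n + m \log n)$ or to invoke the bucket-queue construction as standard, even that point evaporates. I would state the bucket-queue implementation briefly and move on.
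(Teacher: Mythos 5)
Your proof is correct and follows essentially the same route as the paper's: compute a degeneracy ordering, color greedily in reverse while invoking \Cref{lem:degen-odeg} to guarantee a free color in $[\degen+1]$. The only addition is your explicit bucket-queue description of the linear-time implementation, which the paper leaves implicit.
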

\begin{proof}
  Construct a degeneracy ordering $(v_1, \ldots, v_n)$ of $G$ and then consider the vertices one by one in the order $(v_n, \ldots, v_1)$, coloring greedily. Given a palette of size $\degen+1$, by the ``only if'' direction of \Cref{lem:degen-odeg}, there will always be a free color for a vertex when it is considered.
\end{proof}

Of course, the simple algorithm above is not implementable directly in ``sublinear'' settings, such as space-bounded streaming algorithms, query models, or distributed computing models. Nevertheless, we shall make use of the algorithm on suitably constructed subgraphs of our input graph.

We shall use the following form of the Chernoff bound.
\begin{fact} \label{fact:chernoff}
  Let $X$ be a sum of mutually independent indicator random variables. Let $\mu$ and $\delta$ be real numbers such that $\EE X \le \mu$ and $0 \le \delta \le 1$. Then,
  $\Pr\left[ X \ge (1+\delta)\mu \right] \le \exp\left( - \mu \delta^2 / 3 \right)$. \qed
\end{fact}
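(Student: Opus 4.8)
The plan is to run the standard Cram\'er--Chernoff exponential-moment argument (the ``Bernstein trick''). Write $X = \sum_{i=1}^{m} X_i$ with the $X_i$ mutually independent and $\{0,1\}$-valued, put $p_i := \Pr[X_i = 1]$, and let $\mu_0 := \EE X = \sum_{i} p_i$, so that $\mu_0 \le \mu$ by hypothesis. For every $t > 0$, applying Markov's inequality to the nonnegative random variable $e^{tX}$ gives
\[
  \Pr\bigl[X \ge (1+\delta)\mu\bigr]
  = \Pr\bigl[e^{tX} \ge e^{t(1+\delta)\mu}\bigr]
  \le e^{-t(1+\delta)\mu}\,\EE\bigl[e^{tX}\bigr].
\]
By independence and the elementary inequality $1 + x \le e^{x}$,
\[
  \EE\bigl[e^{tX}\bigr] = \prod_{i=1}^{m} \bigl(1 + p_i(e^{t} - 1)\bigr) \le \prod_{i=1}^{m} e^{\,p_i(e^{t}-1)} = e^{\mu_0(e^{t} - 1)} \le e^{\mu(e^{t}-1)},
\]
where the last step uses $\mu_0 \le \mu$ together with $e^{t} - 1 > 0$. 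Hence $\Pr[X \ge (1+\delta)\mu] \le \exp\bigl(\mu(e^{t}-1) - t(1+\delta)\mu\bigr)$ for every $t > 0$.

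Next I would choose $t$ to minimise the exponent. The minimiser is $t = \ln(1+\delta)$, which is positive since $\delta > 0$ (the case $\delta = 0$ is trivial, as the bound then just reads $\Pr[\cdot] \le 1$); substituting it yields the classical form
\[
  \Pr\bigl[X \ge (1+\delta)\mu\bigr] \le \left(\frac{e^{\delta}}{(1+\delta)^{1+\delta}}\right)^{\!\mu}.
\]
All that then remains is to verify the one-variable estimate $e^{\delta}/(1+\delta)^{1+\delta} \le e^{-\delta^{2}/3}$ on $0 \le \delta \le 1$; raising both sides to the power $\mu \ge 0$ finishes the proof. Taking logarithms, this is equivalent to $g(\delta) := (1+\delta)\ln(1+\delta) - \delta - \delta^{2}/3 \ge 0$ for $\delta \in [0,1]$, which I would check by differentiation: $g(0) = 0$; $g'(\delta) = \ln(1+\delta) - \tfrac{2}{3}\delta$ with $g'(0) = 0$; and $g''(\delta) = \tfrac{1}{1+\delta} - \tfrac{2}{3}$, which is nonnegative for $\delta \le \tfrac{1}{2}$ and negative afterwards. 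Thus $g'$ increases then decreases on $[0,1]$, and since $g'(0) = 0$ while $g'(1) = \ln 2 - \tfrac{2}{3} > 0$, we get $g' \ge 0$ on all of $[0,1]$; hence $g$ is nondecreasing and $g(\delta) \ge g(0) = 0$.

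The only genuine ``obstacle'' is pinning down the correct value of $t$ and then the elementary calculus inequality in the last step --- it is exactly this estimate that forces the constant $3$ in the exponent on the range $\delta \le 1$; everything else is the routine exponential-moment machinery. Since this is a textbook statement, an equally acceptable route is simply to cite a standard reference (for instance the monographs of Dubhashi and Panconesi, or of Mitzenmacher and Upfal) in place of reproducing the argument.
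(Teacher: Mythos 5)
Your derivation is correct. It is the standard Cram\'er--Chernoff exponential-moment argument: Markov's inequality applied to $e^{tX}$, the MGF bound $\EE[e^{tX}] \le e^{\mu(e^t-1)}$ using $1+x\le e^x$ and $\EE X\le\mu$, the choice $t=\ln(1+\delta)$, and then the elementary calculus check that $(1+\delta)\ln(1+\delta)-\delta\ge\delta^2/3$ on $[0,1]$; the differentiation argument for that last inequality (showing $g'$ is unimodal on $[0,1]$ with nonnegative values at both endpoints) is sound. The paper itself asserts this as a known fact with no proof, so there is nothing to compare against; your write-up simply supplies the textbook justification the paper omits, and it does so without error. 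The only stylistic remark is that optimality of $t$ is irrelevant --- any valid $t>0$ suffices --- but you say as much implicitly.
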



\section{A Generic Framework for Coloring} \label{sec:framework}

In this section, we give a generic framework for graph coloring that we later instantiate in various computational models. As a reminder, our focus is on graphs $G$ with a nontrivial upper bound on the degeneracy $\degen = \degen(G)$. Each such graph {\em admits} a proper $(\degen+1)$-coloring; our focus will be on obtaining a proper $(\degen+o(\degen))$-coloring efficiently.

As a broad outline, our framework calls for coloring $G$ in two phases. The first phase produces a {\em low degeneracy partition} (LDP) of $G$: it partitions $V(G)$ into a ``small'' number of parts, each of which induces a subgraph that has ``low'' degeneracy. This step can be thought of as preprocessing and it is essentially free (in terms of complexity) in each of our models. The second phase properly colors each part, using a small number of colors, which is possible because the degeneracy is low. In \Cref{sec:algs}, we shall see that  the low degeneracy allows this second phase to be efficient in each of the models we consider.


\subsection{A Low Degeneracy Partition and its Application} \label{sec:ldp}

In this phase of our coloring framework, we assign each vertex a color chosen uniformly at random from $[\ell]$, these choices being mutually independent, where $\ell$ is a suitable parameter. For each $i \in [\ell]$, let $G_i$ denote the subgraph of $G$ induced by vertices colored $i$. We shall call each $G_i$ a {\em block} of the vertex partition given by $(G_1, \ldots, G_\ell)$. The next theorem, our main technical tool, provides certain guarantees on this partition given a suitable choice of $\ell$.

\begin{theorem}[LDP Theorem]
\label{thm:ldp}
Let $G$ be an $n$-vertex graph with degeneracy $\degen$. Let $k \in [1,n]$ be a ``guess'' for the value of $\degen$ and let $s \ge Cn\log n$ be a sparsity parameter, where $C$ is a sufficiently large universal constant. Put
\begin{equation} \label{eq:ldef}
  \ell = \left\lceil \frac{2nk}{s} \right\rceil \,, \quad \lambda = 3\sqrt{\degen \ell \log n} \,,
\end{equation}
and let $\psi \colon V(G) \to [\ell]$ be a uniformly random coloring of $G$. For $i \in [\ell]$, let $G_i$ be the subgraph induced by $\psi^{-1}(i)$. Then, the partition $(G_1, \ldots, G_\ell)$ has the following properties.
\begin{enumerate}[label=(\roman*)]
    \item \label{ldp:degen} If $k \le 2\degen$, then w.h.p., for each $i$, the degeneracy $\degen(G_i) \le (\degen + \lambda)/\ell$.
    \item \label{ldp:block} W.h.p., for each $i$, the block size $|V(G_i)| \le 2n/\ell$.
    \item \label{ldp:sparse} If $\degen \le k \le 2\degen$, then w.h.p., the number of monochromatic edges $|E(G_1) \cup \cdots \cup E(G_\ell)| \le s$.
\end{enumerate}
In each case, ``w.h.p.'' means ``with probability at least $1-1/\poly(n)$.''
\end{theorem}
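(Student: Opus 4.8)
# Proof Proposal for the LDP Theorem

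The plan is to establish each of the three properties by a concentration argument over the random coloring $\psi$, using the Chernoff bound of \Cref{fact:chernoff} together with the degeneracy--ordering characterization of \Cref{lem:degen-odeg} and the edge-count bound of \Cref{fact:degen-sparse}. Throughout, I would fix a degeneracy ordering $\before \,=\, (v_1,\dots,v_n)$ of $G$, so that $\odeg_{G,\before}(v) \le \degen$ for every $v$; this ordering is deterministic and does not depend on $\psi$, which is what makes the analysis clean.

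For part~\ref{ldp:degen}, the key idea is that the restriction of $\before$ to $V(G_i)$ is an ordering of $G_i$, so by \Cref{lem:degen-odeg} it suffices to bound $\odeg_{G_i,\before}(v)$ for every vertex $v$ colored $i$. Now $\odeg_{G_i,\before}(v)$ counts those later neighbors $w \in N_{G,\before}(v)$ that also receive color $i$; since $|N_{G,\before}(v)| \le \degen$ and each such $w$ gets color $i$ independently with probability $1/\ell$, this is a sum of at most $\degen$ independent indicators with mean at most $\degen/\ell$. I would apply \Cref{fact:chernoff} with $\mu = \degen/\ell$ and an appropriate deviation to get the high-probability bound $\odeg_{G_i,\before}(v) \le (\degen+\lambda)/\ell$. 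The choice $\lambda = 3\sqrt{\degen\,\ell\log n}$ is calibrated precisely so that $\lambda/\ell$ is the right additive slack: writing the target as $(1+\delta)\mu$ with $\delta = \lambda/\degen$, the Chernoff exponent $\mu\delta^2/3 = \lambda^2/(3\degen\ell) = 3\log n$ gives failure probability $n^{-3}$ per vertex, and a union bound over $n$ vertices (note the event is defined for each $v$, and the color of $v$ is irrelevant — we can even condition on it) yields the claim. The hypothesis $k \le 2\degen$ enters only through needing $\delta \le 1$, i.e. $\lambda \le \degen$; one checks $\lambda = 3\sqrt{\degen\ell\log n} \le \degen$ holds because $\ell = \lceil 2nk/s\rceil \le \lceil 4n\degen/s \rceil$ and $s \ge Cn\log n$ with $C$ large forces $\ell\log n \ll \degen$. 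I should be slightly careful when $\ell$ is large relative to $\degen$ (so $\mu < 1$): there the bound $\odeg \le (\degen+\lambda)/\ell$ may be vacuous or need the integer rounding handled, but since we only claim an upper bound on degeneracy this is benign.

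Part~\ref{ldp:block} is the most routine: $|V(G_i)|$ is a sum of $n$ independent indicators each with mean $1/\ell$, so $\EE|V(G_i)| = n/\ell$, and \Cref{fact:chernoff} with $\mu = n/\ell$, $\delta = 1$ gives $\Pr[|V(G_i)| \ge 2n/\ell] \le \exp(-n/(3\ell))$. Since $\ell \le \lceil 2n\cdot n / (Cn\log n)\rceil = O(n/\log n)$, we have $n/\ell = \Omega(\log n)$, and choosing $C$ large enough makes this at most $n^{-3}$; a union bound over $i \in [\ell]$ finishes it. For part~\ref{ldp:sparse}, I would bound the total number of monochromatic edges $M := \sum_i |E(G_i)|$: each edge $\{u,v\} \in E(G)$ is monochromatic with probability $1/\ell$, so $\EE M = |E(G)|/\ell \le \degen n/\ell$ by \Cref{fact:degen-sparse}. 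The $\ell$ indicators are \emph{not} mutually independent (edges sharing a vertex are positively correlated), which is the one technical wrinkle; however, $M = \sum_i |E(G_i)|$ and within a fixed vertex we can instead sum over edges in the orientation given by $\before$ — the cleanest route is to write $M = \sum_{v} \odeg_{G,\before}^{(\psi)}(v)$ where the summands, for distinct $v$, are determined by disjoint "blocks" of coordinates of $\psi$ together with $\psi(v)$; conditioning on the entire coloring and then re-examining, it is easiest just to note that $M$ is a sum of $|E(G)|$ Bernoulli$(1/\ell)$ variables and invoke a Chernoff bound valid under the standard negative-dependence / bounded-difference setup, or simply observe $\EE M \le \degen n/\ell$ and that the hypothesis $\degen \le k \le 2\degen$ gives $\ell \ge 2n\degen/s \ge 2\degen n/s$, hence $\EE M \le s/2$; then concentration with $\delta=1$ gives $M \le s$ w.h.p. provided $\EE M = \Omega(\log n)$, which holds since $s = \Theta(\EE M \cdot \ell)$ and $s \ge Cn\log n$... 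I would double-check the regime where $\EE M$ is small, handling it by the simple observation that $M \le |E(G)| \le \degen n < s$ deterministically when $\ell$ is large.

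The main obstacle I anticipate is part~\ref{ldp:sparse}: getting a clean concentration statement for $M$ despite the mild positive correlation among same-color-edge indicators, and correctly tracking the constant $C$ so that every Chernoff exponent is at least $3\log n$ simultaneously across all three parts. The cleanest fix for the dependence issue is to decompose $M$ by the ordered-degree viewpoint — $|E(G_i)| = \sum_{v \in \psi^{-1}(i)} \odeg_{G,\before}(v)$ restricted to same-colored later-neighbors — and then, for the \emph{sum over all $i$}, observe that $M = \sum_{\{u,v\}\in E} \indic[\psi(u)=\psi(v)]$ is a function of the independent coordinates $(\psi(v))_{v}$ that changes by at most $\Delta \le \degen \cdot (\text{something})$ when one coordinate flips; a McDiarmid-type bound then suffices, but since the paper only grants Chernoff, I would instead present $M$ directly as the sum over edges and rely on the fact that, fixing an arbitrary total order on edges and revealing $\psi$ vertex-by-vertex, the conditional probability each new monochromatic edge appears is exactly $1/\ell$ — this exhibits $M$ as stochastically dominated by $\Bin(|E(G)|, 1/\ell)$, to which \Cref{fact:chernoff} applies verbatim. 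Everything else is a union bound over at most $\ell \le n$ or $n$ events, each failing with probability $n^{-3}$, for an overall $1 - 1/\poly(n)$ guarantee.
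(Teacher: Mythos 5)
Your treatment of parts~\ref{ldp:block} and~\ref{ldp:degen} is essentially identical to the paper's: fix a degeneracy ordering $\before$ once and for all, observe that $\odeg_{G_{\psi(v)},\before}(v)$ is a sum of at most $\degen$ independent Bernoulli$(1/\ell)$ indicators over $N_{G,\before}(v)$, and apply \Cref{fact:chernoff} with $\delta = \lambda/\degen$. You correctly locate where $k\le 2\degen$ is used (to make $\delta\le 1$), though you gloss over the fact that $\ell\log n \ll \degen$ additionally requires $\degen=\Omega(\log n)$; the paper handles this by first disposing of the regime $k\le (C/2)\log n$, where $\ell=1$ and the statement is vacuous.

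Part~\ref{ldp:sparse} is where you diverge, and your chosen route has a genuine gap. You correctly flag that the edge indicators $\indic[\psi(u)=\psi(v)]$ are not independent and propose to circumvent this by arguing that, revealing $\psi$ vertex-by-vertex, each newly decided edge is monochromatic with conditional probability exactly $1/\ell$, hence $M$ is stochastically dominated by $\Bin(|E(G)|,1/\ell)$. This dominance claim is \emph{false}: the indicators for edges incident to the newly revealed vertex $v_j$ all depend on the single draw $\psi(v_j)$, so they are positively correlated among themselves. A triangle with $\ell=2$ is already a counterexample: $\Pr[M=3]=1/4$, whereas $\Pr[\Bin(3,1/2)=3]=1/8$, so $M$ has the heavier upper tail. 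Your parenthetical alternative --- decomposing $M=\sum_v \odeg_{G_{\psi(v)},\before_{\psi(v)}}(v)$ --- is actually the right instinct, and the paper pursues exactly it, but \emph{without any further concentration argument}: once part~\ref{ldp:degen} holds, one applies \Cref{fact:degen-sparse} deterministically to get $|E(G_i)|\le \degen(G_i)|V(G_i)|\le \frac{\degen+\lambda}{\ell}|V(G_i)|$, sums over $i$, and uses $\lambda\le\degen$ and $\degen\le k$. No dominance, no McDiarmid, no binomial tail bound is needed for part~\ref{ldp:sparse}; it is a corollary of part~\ref{ldp:degen}. You should replace your concentration argument for $M$ with this deterministic deduction.
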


It will be convenient to encapsulate the guarantees of this theorem in a definition.

\begin{definition} \label{def:ldp}
  Suppose graph $G$ has degeneracy $\degen$. A vertex partition $(G_1, \ldots, G_\ell)$ simultaneously satisfying the degeneracy bound in \cref{ldp:degen}, the block size bound in \cref{ldp:block}, and the (monochromatic) edge sparsity bound in \cref{ldp:sparse} in \Cref{thm:ldp} is called an $(\ell,s,\lambda)$-LDP of $G$.
\end{definition}

It will turn out that an $(\ell,s,\lambda)$-LDP leads to a proper coloring of $G$ using at most $\degen+\lambda+\ell$ colors. An instructive setting of parameters is $s = \Theta((n\log n)/\eps^2)$, where $\eps$ is either a small constant or a slowly vanishing function of $n$, such as $1/\log n$. Then, a quick calculation shows that when an accurate guess $k \in [\degen, 2\degen]$ is made, \Cref{thm:ldp} guarantees an LDP that has edge sparsity $s = \tO(n)$ and that leads to an eventual proper coloring using $(1 + O(\eps))\degen$ colors. When $\eps = o(1)$, this number of colors is $\degen + o(\degen)$.

Recall that the second phase of our coloring framework involves coloring each $G_i$ separately, exploiting its low degeneracy. Indeed, given an $(\ell,s,\lambda)$-LDP, each block $G_i$ {\em admits} a proper $(\degen(G_i)+1)$-coloring. Suppose we use a distinct palette for each block; then the total number of colors used is
\begin{equation} \label{eq:numcolors}
  \sum_{i=1}^\ell (\degen(G_i) + 1) \le \ell \left( \frac{\degen + \lambda}{\ell} + 1 \right) = \degen+\lambda+\ell \,,
\end{equation}
as claimed above. Of course, even if our first phase random coloring $\psi$ yields a suitable LDP, we still have to collect each block $G_i$ or at least enough information about each block so as to produce a proper $(\degen(G_i)+1)$-coloring. How we do this depends on the precise model of computation. We take this up in \Cref{sec:algs}.

\subsection{Proof of the LDP Theorem} \label{sec:ldp-proof}

We now turn to proving the LDP Theorem from \Cref{sec:ldp}. Notice that when $k \le (C/2)\log n$, the condition $s \ge Cn\log n$ results in $\ell = 1$, so the vertex partition is the trivial one-block partition, which obviously satisfies all the properties in the theorem. Thus, in our proof, we may assume that $k > (C/2)\log n$.

\begin{proof}[Proof of \Cref{thm:ldp}]
  We start with \cref{ldp:block}, which is the most straightforward. From \cref{eq:ldef}, we have $\ell \le 4nk/s$, so
  \[
    \frac{n}{\ell} \ge \frac{s}{4k} \ge \frac{Cn\log n}{4k} \ge \frac{C\log n}{4} \,.
  \]
  Each block size $|V(G_i)|$ has binomial distribution $\Bin(n,1/\ell)$, so a Chernoff bound gives
  \[
    \Pr\left[ |V(G_i)| > \frac{2n}{\ell} \right] \le \exp\left( -\frac{n}{3\ell} \right) \le \exp\left( -\frac{C \log n}{12} \right) \le \frac{1}{n^2} \,,
  \]
  for sufficiently large $C$. By a union bound over the at most $n$ blocks, \cref{ldp:block} fails with probability at most $1/n$. \medskip
  
  \Cref{ldp:degen,ldp:sparse} include the condition $k \le 2\degen$, which we shall assume for the rest of the proof. By \cref{eq:ldef} and the bounds $s \ge Cn\log n$ and $k > (C/2)\log n$,
  \[
    \ell \le \left\lceil \frac{2k}{C\log n} \right\rceil
    \le \frac{4k}{C\log n}
    \le \frac{8\degen}{C\log n} \,,
  \]
  whence, for sufficiently large $C$,
  \begin{equation} \label{eq:lambda-degen}
    \lambda \le 3\sqrt{\degen \cdot \frac{8\degen}{C\log n} \cdot \log n} \le \degen \,.
  \end{equation}
  
  We now turn to establishing \cref{ldp:degen}. Let $\before$ be a degeneracy ordering for $G$. For each $i \in [\ell]$, let $\before_i$ be the restriction of $\before$ to $V(G_i)$. Consider a particular vertex $v \in V(G)$ and let $j = \psi(v)$ be its color. We shall prove that, w.h.p., $\odeg_{G,\before_j}(v) \le (\degen+\lambda)/\ell$.
  
  By the ``only if'' direction of \Cref{lem:degen-odeg}, we have $\odeg_{G,\before}(v) = |N_{G,\before}(v)| \le \degen$. Now note that
  \[
    \odeg_{G_j, \before_j}(v) = \sum_{u \in N_{G,\before(v)}} \indic_{\{\psi(u) = \psi(v)\}}
  \]
  is a sum of mutually independent indicator random variables, each of which has expectation $1/\ell$. Therefore, $\EE \odeg_{G_j, \before_j}(v) = \odeg_{G,\before}(v)/\ell \le \degen/\ell$. Since $\lambda \le \degen$ by \cref{eq:lambda-degen}, we may use the form of the Chernoff bound in \Cref{fact:chernoff}, which gives us
  \[
    \Pr\left[ \odeg_{G_j, \before_j}(v) > \frac{\degen+\lambda}{\ell} \right]
    \le \exp\left( - \frac{\degen}{\ell} \frac{\lambda^2}{3\degen^2} \right)
    = \exp\left( - \frac{9 \degen\ell\log n}{3\degen\ell} \right)
    \le \frac{1}{n^3} \,,
  \]
  where the equality follows from \cref{eq:ldef}. In words, with probability at least $1-1/n^3$, the vertex $v$ has ordered degree at most $(\degen+\lambda)/\ell$ within its own block. By a union bound, with probability at least $1-1/n^2$, all $n$ vertices of $G$ satisfy this property. When this happens, by the ``if'' direction of \Cref{lem:degen-odeg}, it follows that $\degen(G_i) \le (\degen+\lambda)/\ell$ for every $i$. \medskip
  
  Finally, we take up \cref{ldp:sparse}, which is now straightforward. Assume that the high probability event in \cref{ldp:degen} occurs. Then, by \Cref{fact:degen-sparse},
  \ifthenelse{\equal{\acmconf}{1}}
  { \begin{multline*} }
  { \[ }
    |E(G_1) \cup \cdots \cup E(G_\ell)|
    \le \sum_{i=1}^\ell \degen(G_i)\, |V(G_i)| 
    \le \frac{\degen+\lambda}{\ell} \sum_{i=1}^\ell |V(G_i)|
    = \frac{n(\degen + \lambda)}{\ell}
    \le \frac{2n\degen}{\ell}
    \le s \,,
  \ifthenelse{\equal{\acmconf}{1}}    
  { \end{multline*} }
  { \] }
  where the final inequality uses the condition $\degen \le k$ and \cref{eq:ldef}.
\end{proof}


\section{Specific Sublinear Algorithms for Coloring} \label{sec:algs}

We now turn to designing graph coloring algorithms in specific models of computation for big data, where the focus is on utilizing space sublinear in the size of the massive input graph. Such models are sometimes termed {\em space-conscious}. In each case, our algorithm ultimately relies on the framework developed in \Cref{sec:framework}.

\subsection{Data Streaming} \label{sec:streaming}

We begin with the most intensely studied space-conscious model: the data streaming model. For graph problems, in the basic model, the input is a stream of non-repeated edges that define the input graph $G$: this is called the {\em insertion-only} model, since it can be thought of as building up $G$ through a sequence of edge insertions. In the more general {\em dynamic graph model} or {\em turnstile model}, the stream is a sequence of edge updates, each update being either an insertion or a deletion: the net effect is to build up $G$. Our algorithm will work in this more general model. Later, we shall give a corresponding lower bound that will hold even in the insertion-only model (for a lower bound, this is a strength).

We assume that the vertex set $V(G)=[n]$ and the input is a stream $\sigma$ of at most $m = \poly(n)$ updates to an initially empty graph. An update is a triple $(u,v,c)$, where $u,v\in V(G)$ and $c \in \{-1,1\}$: when $c=1$, this token represents an insertion of edge $\{u,v\}$ and when $c=-1$, it represents a deletion. Let $N = \binom{n}{2}$ and $[[m]] = \ZZ \cap [-m,m]$. It is convenient to imagine a vector $\bx \in [[m]]^N$ of edge multiplicities that starts at zero and is updated entrywise with each token. The input graph $G$ described by the stream will be the underlying simple graph, i.e., $E(G)$ will be the set of all edges $\{u,v\}$ such that $x_{u,v} \ne 0$ at the end. We shall say that $\sigma$ {\em builds up} $\bx$ and $G$.

Our algorithm makes use of two data streaming primitives, each a {\em linear sketch}. (We can do away with these sketches in the insertion-only setting; see the end of this section.) The first is a sketch for {\em sparse recovery} given by a matrix $A$ (say): given a vector $\bx \in [[m]]^N$ with sparsity $\|\bx\|_0 \le t$, there is an efficient algorithm to reconstruct $\bx$ from $A\bx$. The second is a sketch for {\em $\ell_0$ estimation} given by a random matrix $B$ (say): given a vector $\bx \in [[m]]^N$, there is an efficient algorithm that takes $B\bx$ and computes from it an estimate of $\|\bx\|_0$ that, with probability at least $1-\delta$, is a $(1+\gamma)$-multiplicative approximation. It is known that there exists a suitable $A \in \b^{d\times N}$, where $d = O(t \log(N/t))$, where $A$ has column sparsity $O(\log(N/t))$; see, e.g., Theorem~9 of Gilbert and Indyk~\cite{GilbertI10}. It is also known that there exists a suitable distribution over matrices giving $B \in \b^{d'\times N}$ with $d' = O(\gamma^{-2} \log\delta^{-1} \log N (\log\gamma^{-1} + \log\log m))$. Further, given an update to the $i$th entry of $\bx$, the resulting updates in $A\bx$ and $B\bx$ can be effected quickly by generating the required portion of the $i$th columns of $A$ and $B$.

\algrenewcommand\algorithmicforall{\textbf{foreach}}
\begin{algorithm*}[!ht]
    \caption{One-Pass Streaming Algorithm for Graph Coloring via Degeneracy}
    \label{alg:streaming}
    \begin{algorithmic}[1]
        \Procedure{Color}{stream $\sigma$, integer $k$}
        \Comment{$\sigma$ builds up $\bx$ and $G$;~ $k \in [1,n]$ is a guess for $\degen(G)$}
            \State choose $s$, $\ell$ as in \cref{eq:ldef} and $t,d,d',A,B$ as in the above discussion
            \State initialize $\by \in [[m]]^d$ and $\bz \in [[m]]^{d'}$ to zero
            \ForAll{$u \in [n]$} $\psi(u) \gets$ uniform random color in $[\ell]$ \EndFor
            \ForAll{token $(u,v,c)$ in $\sigma$}
                \If{$\psi(u) = \psi(v)$} $\by \gets \by + cA_{u,v}$;~ $\bz \gets \bz + cB_{u,v}$ \EndIf
            \EndFor
            \If{estimate of $\|\bw\|_0$ obtained from $\bz$ is $> 5s/4$} \textbf{abort} \EndIf
            \State $\bw' \gets$ result of $t$-sparse recovery from $\by$
            \Comment{we expect that $\bw' = \bw$}
            \ForAll{$i \in [\ell]$}
                \State $G_i \gets$ simple graph induced by $\{\{u,v\}:\, w'_{u,v} \ne 0$ and $\psi(u)=\psi(v)=i\}$
                \State \textbf{color} $G_i$ using palette $\{(i,j):\, 1 \le j \le \degen(G_i)+1\}$; cf.~\Cref{lem:degen-color}
                \Comment{net effect is to color $G$}
            \EndFor
        \EndProcedure
    \end{algorithmic}
\end{algorithm*}

In our description of \Cref{alg:streaming}, we use $A_{u,v}$ (resp.~$B_{u,v}$) to denote the column of $A$ (resp.~$B$) indexed by $\{u,v\}$. The algorithm's logic results in sketches $\by = A\bw$ and $\bz = B\bw$, where $\bw$ corresponds to the subgraph of $G$ consisting of $\psi$-monochromatic edges only (cf.~\Cref{thm:ldp}), i.e., $\bw$ is obtained from $\bx$ by zeroing out all entries except those indexed by $\{u,v\}$ with $\psi(u) = \psi(v)$. We choose the parameter $t = 2s$, where $s \ge Cn\log n$ is the sparsity parameter from \Cref{thm:ldp}, which gives $d = O(s\log n)$; we choose $\gamma = 1/4$ and $\delta = 1/n$, giving $d' = O(\log^3 n)$.

Notice that \Cref{alg:streaming} requires a guess for $\degen := \degen(G)$, which is not known in advance. Our final one-pass algorithm runs $O(\log n)$ parallel instances of $\proc{Color}(\sigma,k)$, using geometrically spaced guesses $k = 2,4,8\ldots\,$. It outputs the coloring produced by the non-aborting run that uses the smallest guess.

\begin{theorem} \label{thm:color_kappa}
  Set $s = \ceil{\eps^{-2} n\log n}$, where $\eps > 0$ is a parameter. The above one-pass algorithm processes a dynamic (i.e., turnstile) graph stream using $O(\eps^{-2} n\log^4 n)$ bits of space and, with high probability, produces a proper coloring using at most $(1+O(\eps))\degen$ colors. In particular, taking $\eps = 1/\log n$, it produces a $\degen + o(\degen)$ coloring using $\tO(n)$ space. Each edge update is processed in $\tO(1)$ time and post-processing at the end of the stream takes $\tO(n)$ time.
\end{theorem}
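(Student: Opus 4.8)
The plan is to analyze \Cref{alg:streaming} in three stages: correctness of a single run $\proc{Color}(\sigma,k)$ when the guess $k$ is accurate, space accounting, and then the effect of running $O(\log n)$ parallel instances with geometric guesses. First, fix a guess $k$ with $\degen \le k \le 2\degen$. The random coloring $\psi$ induces a partition $(G_1,\ldots,G_\ell)$, and by \Cref{thm:ldp} (using $\degen \le k \le 2\degen$) this is, w.h.p., an $(\ell,s,\lambda)$-LDP: every block has degeneracy at most $(\degen+\lambda)/\ell$, block sizes are at most $2n/\ell$, and the total number of monochromatic edges is at most $s$. Condition on this event. Since $\|\bw\|_0 \le s$, the $\ell_0$-estimation sketch $\bz$ returns, w.h.p., a value within a factor $1\pm 1/4$ of $\|\bw\|_0 \le s$, so the estimate is at most $5s/4$ and the run does \emph{not} abort. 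Moreover $\|\bw\|_0 \le s \le 2s = t$, so the $t$-sparse recovery from $\by = A\bw$ returns $\bw' = \bw$ exactly (w.h.p.), hence each reconstructed $G_i$ is exactly the true block $G_i$. By \Cref{lem:degen-color} we then properly color each $G_i$ with $\degen(G_i)+1$ colors from a private palette, and by the color-count calculation in \cref{eq:numcolors} the total is at most $\degen + \lambda + \ell$. Plugging $s = \ceil{\eps^{-2}n\log n}$ into \cref{eq:ldef} gives $\ell = O(\eps^2 k /\log n) = O(\eps^2 \degen/\log n) = o(\degen)$ and $\lambda = 3\sqrt{\degen\ell\log n} = O(\eps\degen)$, so the coloring uses $(1+O(\eps))\degen$ colors; with $\eps = 1/\log n$ this is $\degen + o(\degen)$.

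Next, space. Each run stores: the vector $\psi \in [\ell]^n$, which is $O(n\log\ell) = O(n\log n)$ bits; the sketch $\by \in [[m]]^d$ with $d = O(t\log(N/t)) = O(s\log n) = O(\eps^{-2} n\log^2 n)$, each entry $O(\log m) = O(\log n)$ bits, for $O(\eps^{-2} n\log^3 n)$ bits; and the sketch $\bz \in [[m]]^{d'}$ with $d' = O(\log^3 n)$ (from the stated bound with $\gamma = 1/4$, $\delta = 1/n$), which is lower order. The dominant term is $O(\eps^{-2} n\log^3 n)$ bits per run; running $O(\log n)$ parallel instances multiplies this by $O(\log n)$, giving $O(\eps^{-2} n\log^4 n)$ bits total, as claimed. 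For the time bounds: per token $(u,v,c)$ we check $\psi(u) \overset{?}{=} \psi(v)$ and, if equal, add the $\{u,v\}$-columns of $A$ and $B$ into $\by,\bz$; since $A$ has column sparsity $O(\log(N/t)) = O(\log n)$ and $B$ has column sparsity $\polylog n$, and these columns can be generated on the fly, each update costs $\polylog n = \tO(1)$ time (across all $O(\log n)$ runs). At the end, sparse recovery from $\by$ runs in time $\poly(d) = \tO(n)$ (or $\tO(s) = \tO(n)$ with the efficient decoder of the cited sparse-recovery construction), and running the greedy $(\degen(G_i)+1)$-coloring of \Cref{lem:degen-color} over all blocks costs time linear in $\sum_i(|V(G_i)| + |E(G_i)|) = O(n + s) = \tO(n)$.

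Finally, the multi-guess wrapper. We run $\proc{Color}(\sigma,k)$ for all $k \in \{2,4,8,\ldots\}$ up to $n$, so there is some guess $k^\ast$ with $\degen \le k^\ast \le 2\degen$. By the analysis above, the run with guess $k^\ast$ does not abort (w.h.p.) and produces a valid $(1+O(\eps))\degen$-coloring. For \emph{any} guess $k$ that does \emph{not} abort, we must argue the coloring it outputs is still proper: this holds because a non-aborting run has an $\ell_0$-estimate at most $5s/4$, hence (w.h.p.\ via the $(1+\tfrac14)$-approximation guarantee) the true $\|\bw\|_0$ is at most $(5s/4)/(3/4) \le 2s = t$, so sparse recovery succeeds, each reconstructed $G_i$ is exact, and \Cref{lem:degen-color} yields a proper coloring of each block. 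We output the coloring from the non-aborting run with the \emph{smallest} guess $k$; since smaller $k$ yields smaller $\ell$ and smaller $\lambda$ (both monotincreasing in $k$ by \cref{eq:ldef}), and since the run at $k^\ast = O(\degen)$ is a non-aborting candidate, the chosen run uses at most $\degen + \lambda + \ell$ colors with $\ell,\lambda$ evaluated at some $k \le k^\ast = O(\degen)$, i.e., at most $(1+O(\eps))\degen$ colors. A union bound over the $O(\log n)$ runs and the $O(1)$ failure events per run keeps the overall failure probability at $1/\poly(n)$.

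The main obstacle — and the point needing the most care — is the \emph{soundness} direction for an arbitrary non-aborting guess $k$ (including guesses $k < \degen$, where \Cref{thm:ldp}\ref{ldp:degen} gives no guarantee). Here one cannot appeal to the LDP structure at all; correctness must be wrung out purely from the abort test: a small $\ell_0$-estimate forces genuine sparsity of $\bw$, which forces faithful sparse recovery, which forces the reconstructed blocks to equal the true blocks, after which \Cref{lem:degen-color} trivially gives a proper coloring of whatever the blocks happen to be. One must also double-check the constant slack in the abort threshold ($5s/4$ against $t = 2s$ with a $1/4$-approximation) leaves room for both the completeness side (the good run survives) and the soundness side (surviving runs are sparse enough to decode), and that the final "smallest non-aborting guess" never undershoots into a regime where, although decoding succeeds, the color count $\degen+\lambda+\ell$ is still $(1+O(\eps))\degen$ — which it is, since $\ell\ge 1$ and $\lambda\ge 0$ always and both are $O(\eps\degen)$ for any $k = O(\degen)$.
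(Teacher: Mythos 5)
Your proof is correct and follows essentially the same route as the paper's: analyze the good guess $k^\star\in(\kappa,2\kappa]$ using parts~(i) and~(iii) of \Cref{thm:ldp}, show that any non-aborting run with $k\le k^\star$ correctly recovers its blocks from the sketches and hence produces a proper coloring with at most $\kappa+\lambda+\ell=(1+O(\eps))\kappa$ colors, and tally the space across the $O(\log n)$ parallel instances. The space and time accounting matches the paper's.

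One small factual slip in your closing paragraph: you say that for $k<\kappa$, ``\Cref{thm:ldp}\ref{ldp:degen} gives no guarantee.'' That is not so --- the hypothesis of \cref{ldp:degen} is only $k\le 2\kappa$, so the degeneracy bound applies for all guesses $k\le k^\star$, not just those in $[\kappa,2\kappa]$. The correct caution is the converse one: the \emph{sparsity} guarantee \cref{ldp:sparse} requires $k\ge\kappa$, which is why you can only argue non-abortion at $k^\star$ and must rely on the abort test itself (as you do) to certify recovery for smaller non-aborting $k$. Your argument is unaffected because you never actually invoke \cref{ldp:degen} for the proper-coloring soundness step; you do (implicitly) invoke it, legitimately, to bound the color count of the chosen run, for which $k\le k^\star\le 2\kappa$ holds. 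But the stated rationale is misdirected and worth correcting.
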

\begin{proof}
  The coloring produced is obviously proper. Let us bound the number of colors used. One of the parallel runs of $\proc{Color}(\sigma,k)$ in \ref{alg:streaming} will use a value $k = k^\star \in (\degen,2\degen]$. We shall prove that, w.h.p., (a)~every non-aborting run with $k \le k^\star$ will use at most $(1+O(\eps))\degen$ colors, and (b)~the run with $k = k^\star$ will not abort.
  
  We start with~(a). Consider a particular run using $k \le k^\star$. By \cref{ldp:degen} of \Cref{thm:ldp}, each $G_i$ has degeneracy at most $(\degen+\lambda)/\ell$; so if $\bw$ is correctly recovered by the sparse recovery sketch (i.e., $\bw' = \bw$ in \Cref{alg:streaming}), then each $G_i$ is correctly recovered and the run uses at most $\degen+\lambda+\ell$ colors, as in \cref{eq:numcolors}. Using the values from \cref{eq:ldef}, this number is at most $(1+O(\eps))\degen$. Now, if the run does not abort, then the estimate of the sparsity $\|\bw\|_0$ is at most $5s/4$. By the guarantees of the $\ell_0$-estimation sketch, the true sparsity is at most $(5/4)(5s/4) < 2s = t$, so, w.h.p., $\bw$ is indeed $t$-sparse and, by the guarantees of the sparse recovery sketch, $\bw' = \bw$. Taking a union bound over all $O(\log n)$ runs, the bound on the number of colors holds for all required runs simultaneously, w.h.p.
  
  We now take up~(b). Note that $\|\bw\|_0$ is precisely the number of $\psi$-monochromatic edges in $G$. By \cref{ldp:sparse} of \Cref{thm:ldp}, we have $\|\bw_0\| \le s$ w.h.p. By the accuracy guarantee of the $\ell_0$-estimation sketch, in this run the estimate of $\|\bw\|_0$ is at most $5s/4$ w.h.p., so the run does not abort.
  
  The space usage of each parallel run is dominated by the computation of $\by$, so it is $O(d\log m) = O(s\log n \log m) = O(\eps^{-2} n \log^3 n)$, using our setting of $s$ and the assumption $m = \poly(n)$. The claims about the update time and post-processing time follow directly from the properties of a state-of-the-art sparse recovery scheme, e.g., the scheme based on expander matching pursuit given in Theorem~9 of Gilbert and Indyk~\cite{GilbertI10}.
\end{proof}

\mypar{Simplification for Insertion-Only Streams}
\Cref{alg:streaming} can be simplified considerably if the input stream is insertion-only. We can then initialize each $G_i$ to an empty graph and, upon seeing an edge $\{u,v\}$ in the stream, insert it into $G_i$ iff $\psi(u) = \psi(v) = i$. We abort if we collect more than $s$ edges; w.h.p., this will not happen, thanks to \Cref{thm:ldp}. Finally, we color the collected graphs $G_i$ greedily, just as in \Cref{alg:streaming}. With this simplification, the overall space usage drops to $O(s\log n) = O(\eps^{-2} n \log^2 n)$ bits. 

The reason this does not work for dynamic graph streams is that the number of monochromatic edges could exceed $s$ by an arbitrary amount mid-stream.


\subsection{Query Model} \label{sec:query-ub}

We now turn to the {\em general graph query model}, a standard model of space-conscious 
algorithms for big graphs where the input graph is random-accessible but the emphasis is on the 
examining only a tiny (ideally, sublinear) portion of it; for general background see Chapter~10 of
Goldreich's book~\cite{Goldreich-proptest-book}. In this model, the algorithm starts out knowing
the vertex set $[n]$ of the input graph $G$ and can access $G$ only through the following types of queries.
\begin{itemize}
  \item A {\em pair query}\/ $\Pair(\{u,v\})$, where $u,v \in [n]$. The query returns $1$ if $\{u,v\} \in E(G)$
  and $0$ otherwise. For better readability, we shall write this query as $\Pair(u,v)$.
  \item A {\em neighbor query}\/ $\Nbr(u,j)$, where $u \in [n]$ and $j \in [n-1]$. The query returns $v \in [n]$
  where $v$ is the $j$th neighbor of $u$ in some underlying fixed ordering of vertex adjacency lists; if
  $\deg(v) < j$, so that there does not exist a $j$th neighbor, the query returns $\bot$.
\end{itemize}
Naturally, when solving a problem in this model, the goal is to do so while minimizing the number of queries.

By adapting the combinatorial machinery from their semi streaming algorithm, Assadi \etal\cite{AssadiCK19}
gave an $\tO(n^{3/2})$-query algorithm for finding a $(\Delta+1)$-coloring. Our LDP framework gives a 
considerably simpler algorithm using $\degen + o(\degen)$ colors, where $\degen := \degen(G)$.
We remark here that $\tO(n^{3/2})$ query complexity is essentially optimal, as Assadi \etal\cite{AssadiCK19}
proved a matching lower bound for any $(c\cdot \Delta)$-coloring algorithm, for any constant $c>1$.
\begin{theorem} \label{thm:query}
  Given query access to a graph $G$, there is a randomized algorithm that, with high probability,
  produces a proper coloring of $G$ using $\degen + o(\degen)$ colors.
  The algorithm's worst-case query complexity, running time, and space usage are all $\tO(n^{3/2})$.
\end{theorem}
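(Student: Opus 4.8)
The plan is to instantiate the two‑phase framework of \Cref{sec:framework} in the query model. Phase~1 (the LDP) costs no queries at all: we simply draw $\psi\colon V(G)\to[\ell]$ uniformly at random. Phase~2 requires us to actually learn each block $G_i$ well enough to run the linear‑time greedy $(\degen(G_i)+1)$‑coloring of \Cref{lem:degen-color}, and the entire difficulty is doing this within $\tO(n^{3/2})$ queries when $\degen$ is not known in advance. As in \Cref{alg:streaming}, I would run $O(\log n)$ parallel instances with geometrically spaced guesses $k=2,4,8,\ldots$, have each instance \emph{abort} if its query count exceeds a fixed budget that is $\tO(n^{3/2})$, and output the proper coloring with the fewest colors among the non‑aborting instances. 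The goal is then to show that the ``good'' guess $k^\star\in(\degen,2\degen]$ does not abort and produces $(1+O(\eps))\degen$ colors; taking $\eps=1/\log n$ gives $\degen+o(\degen)$, and choosing $\eps$ constant gives the claimed complexity bounds.

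The heart of the argument is the block‑collection step, for which I would use two strategies balanced at the threshold $\sqrt n$. For a guess $k\le\sqrt n$, use \emph{neighbor queries}: for each vertex $v$, scan $\Nbr(v,1),\Nbr(v,2),\ldots$ until $\bot$ and keep those neighbors $u$ with $\psi(u)=\psi(v)$; this recovers every $G_i$ exactly, at a cost of $\sum_v(\deg(v)+1)=2|E(G)|+n$ queries, which by \Cref{fact:degen-sparse} is at most $2\degen n+n$. For a guess $k>\sqrt n$, use \emph{pair queries}: bucket the vertices by $\psi$‑value and, for each block, query $\Pair(u,v)$ for every pair $u,v$ inside it; this also recovers every $G_i$ exactly, at a cost of $\sum_i\binom{|V(G_i)|}{2}$ queries, which is at most $\sum_i\tfrac12|V(G_i)|\cdot\tfrac{2n}{\ell}=\tfrac{n^2}{\ell}$ provided no block exceeds size $2n/\ell$ (property \cref{ldp:block} of \Cref{thm:ldp}). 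In either case, once a non‑aborting instance has collected all of $G_1,\ldots,G_\ell$, color each $G_i$ greedily via \Cref{lem:degen-color} with a private palette $\{(i,j):1\le j\le\degen(G_i)+1\}$; since distinct blocks use disjoint palettes and cross‑block edges are automatically bichromatic, the union is a proper coloring of $G$ using $\sum_i(\degen(G_i)+1)$ colors.

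It then remains to verify that the good guess $k^\star$ survives. Since $k^\star\in(\degen,2\degen]$, all three conclusions of \Cref{thm:ldp} hold w.h.p.; in particular $\degen(G_i)\le(\degen+\lambda)/\ell$ for all $i$, so by \cref{eq:numcolors} the instance uses at most $\degen+\lambda+\ell=(1+O(\eps))\degen$ colors once it has collected the blocks. For the query budget: if $k^\star\le\sqrt n$ then $\degen<k^\star\le\sqrt n$, so the neighbor‑query cost $2\degen n+n=O(n^{3/2})$ stays under budget; if $k^\star>\sqrt n$ then, using $\ell\ge 2nk^\star/s$ and $s=\ceil{\eps^{-2}n\log n}$, the pair‑query cost is $n^2/\ell\le\eps^{-2}n^2\log n/(2k^\star)\le\eps^{-2}n^{3/2}\log n/2$, again under budget (and, w.h.p., no block is oversized, by \cref{ldp:block}). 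Every non‑aborting instance produces a \emph{proper} coloring, so the reported coloring is proper and uses at most $(1+O(\eps))\degen$ colors, w.h.p., after a union bound over the $O(\log n)$ instances. For the resource bounds: each instance makes $\tO(n^{3/2})$ queries by construction, stores $\psi$ ($\tO(n)$ bits) together with the collected monochromatic edges (at most $\tO(n^{3/2})$ of them, since each corresponds to a query), and its post‑processing is just the linear‑time greedy coloring of the collected subgraph; summing over $O(\log n)$ instances keeps query complexity, running time, and space all at $\tO(n^{3/2})$.

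The step I expect to be the main obstacle is precisely the block collection: neither neighbor queries alone (cost $\Theta(|E(G)|)$, hence $\Theta(n^2)$ when $\degen$ is large) nor pair queries alone (cost $\Theta(n^2/\ell)$, hence $\Theta(n^2)$ when $\degen$, and thus $\ell$, is small) suffices, and the resolution is the $\sqrt n$‑threshold dichotomy together with the abort mechanism that makes the choice robust to ignorance of $\degen$. Everything else — properness of the final coloring, the $(1+O(\eps))\degen$ color count, and the complexity accounting — follows essentially mechanically from \Cref{thm:ldp}, \Cref{lem:degen-color}, \Cref{fact:degen-sparse}, and \cref{eq:numcolors}, exactly as in the streaming case.
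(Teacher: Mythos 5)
Your proof is correct, and it lands on the same $\sqrt n$-threshold dichotomy between neighbor and pair queries that drives the paper's argument, but it reaches it by a slightly more roundabout route. The paper organizes the algorithm into exactly two stages without any geometric guessing: first it simply tries to read the \emph{entire} graph via neighbor queries with a budget of $3n^{3/2}$ (this succeeds precisely when $|E(G)|=O(n^{3/2})$, in particular whenever $\degen\le\sqrt n$ by \Cref{fact:degen-sparse}, and then one greedily $(\degen+1)$-colors offline and stops); if that aborts, it infers $\degen>\sqrt n$ and runs a \emph{single} LDP instance with the fixed guess $k=\sqrt n$, querying all within-block pairs. The crucial observation that makes this work is that $k=\sqrt n\le 2\degen$ is the only hypothesis needed for \cref{ldp:degen}, so an accurate value of $\degen$ is unnecessary — you do not need to find $k^\star\in(\degen,2\degen]$ at all, and the $O(\log n)$ parallel instances of your proposal (imported from the streaming algorithm, where they are genuinely needed because of the $\ell_0$-sketch sparsity check) become superfluous. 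Your version is still $\tO(n^{3/2})$ queries since it only pays a $\log n$ factor for the extra instances, so both bounds match; the paper's version is a bit cleaner, additionally achieves the optimal $\degen+1$ colors in the $\degen\le\sqrt n$ regime rather than $(1+O(\eps))\degen$, and avoids the ``take the non-aborting instance with fewest colors'' selection step. Your robustness argument (every non-aborting instance yields a proper coloring, so taking the minimum-color one is safe) is sound and fills the role that the paper's ``stage 1 succeeds iff $\degen\le\sqrt n$'' deduction plays.
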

\begin{proof}
  The algorithm proceeds in two stages. In the first stage, it attempts to extract all edges in $G$ through neighbor queries alone, aborting when ``too many'' queries have been made. More precisely, it loops over all vertices $v$ and, for each $v$, issues queries $\Nbr(v,1), \Nbr(v,2), \ldots$ until a query returns $\bot$. If this stage ends up making $3n^{3/2}$ queries (say) without having processed every vertex, then it aborts and the algorithm moves on to the second stage. By \Cref{fact:degen-sparse}, if $\degen \le \sqrt n$, then this stage will not abort and the algorithm will have obtained $G$ completely; it can then $(\degen+1)$-color $G$ (as in \Cref{lem:degen-color}) and terminate, skipping the second stage.
  
  In the second stage, we know that $\degen > \sqrt n$. The algorithm now uses a random coloring $\psi$ to construct an $(\ell,s,\lambda)$-LDP of $G$ using the ``guess'' $k = \sqrt n$, with $s = \Theta(\eps^{-2}n\log n)$ and $\ell, \lambda$ given by \Cref{eq:ldef}. To produce each subgraph $G_i$ in the LDP, the algorithm simply makes all possible queries $\Pair(u,v)$ where $\psi(u) = \psi(v)$. W.h.p., the number of queries made is at most
  \[
    \frac12\sum_{i\in[\ell]} |V(G_i)|^2 \le \frac{\ell}{2} \left(\frac{2n}{\ell}\right)^2 \le \frac{2n^2 s}{4nk} = \Theta\left( \frac{n^{3/2} \log n}{\eps^2} \right) \,,
  \]
  where the first inequality uses \Cref{ldp:block} of \Cref{thm:ldp}. We can enforce this bound in the worst case by aborting if it is violated.
  
  Clearly, $k \le 2\degen$, so \Cref{ldp:degen} of \Cref{thm:ldp} applies and by the discussion after \Cref{def:ldp}, the algorithm uses $(1+O(\eps))\degen$ colors. Setting $\eps = 1/\log n$, this number is at most $\degen + o(\degen)$ and the overall number of queries remains $\tO(n^{3/2})$, as required.
\end{proof}


\subsection{MPC and Congested Clique Models}

In the Massively Parallel Communication
(MPC) model of Beame \etal\cite{BeameKS13},
an input of size $m$ is distributed adversarially among $p$
processors, each of which has $S$ bits of working memory: here, $p$ and $S$ are $o(m)$
and, ideally, $p \approx m/S$.
Computation proceeds
in synchronous rounds: in each round, 
a processor carries out some local
computation (of arbitrary time complexity) and then communicates with as many of the other processors as desired, provided that each processor sends and receives no more than $S$ bits per round. The primary goal in solving a problem is to minimize
the number of rounds.

When the input is an $n$-vertex graph, the most natural and widely studied setting of MPC is $S = \tO(n)$, which enables each processor to hold some information about
every vertex; this makes many graph problems tractable.
Since the input size $m$ is potentially $\Omega(n^2)$, it is reasonable to allow $p = n$ many processors. Note that the input is just a collection of edges, distributed adversarially among these processors, subject to the memory constraint.

\begin{theorem} \label{thm:MPCmain}
There is a randomized $O(1)$-round MPC algorithm that, given an $n$-vertex graph $G$, outputs a $(\degen+o(\degen))$-coloring of $G$ with high probability. The algorithm uses $n$ processors, each with $O(n \log^2 n)$ bits of memory.
\end{theorem}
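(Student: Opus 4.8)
The plan is to implement the two-phase framework of \Cref{sec:framework}: Phase~1 builds an LDP $(G_1,\dots,G_\ell)$ of $G$, and Phase~2 gathers each block $G_i$ onto a single processor and there $(\degen(G_i)+1)$-colors it with the offline greedy algorithm of \Cref{lem:degen-color}, using the private palette $\{(i,1),\dots,(i,\degen(G_i)+1)\}$. As noted in \Cref{sec:ldp}, the union of these colorings is a proper coloring of $G$ using $\sum_i(\degen(G_i)+1)\le\degen+\lambda+\ell$ colors, so the task reduces to two things: choosing the LDP parameters so that $\lambda+\ell=o(\degen)$ \emph{and} every block fits in $O(n\log^2 n)$ bits, and carrying out both phases in $O(1)$ MPC rounds.

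For the parameter choice, the governing facts are \cref{ldp:degen,ldp:block} of \Cref{thm:ldp}, which with \Cref{fact:degen-sparse} give $|E(G_i)|\le\degen(G_i)\,|V(G_i)|\le(\degen+\lambda)\cdot 2n/\ell^2$. Writing $\ell\approx 2nk/s$ for a guess $k\approx\degen$, this is $\Theta(\degen n/\ell^2)$, so I would take the sparsity parameter $s=\Theta\!\big(n\sqrt{k\log n}\big)$ (and, in the degenerate range $k=O(\log n)$, simply a large enough $s$ that $\ell=1$, whereupon the single block is $G$ itself, which --- since that run did not abort --- then has only $\tO(n)$ edges). This choice makes $\ell=\Theta\!\big(\sqrt{\degen/\log n}\big)$ and $\lambda=\tO(\degen^{3/4})$, so $\lambda+\ell=o(\degen)$ whenever $\degen=\omega(\log n)$ and the coloring uses $\degen+o(\degen)$ colors; at the same time $\degen n/\ell^2=\Theta(n\log n)$, so each $G_i$ has $O(n\log n)$ edges and fits in $O(n\log^2 n)$ bits. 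Note also that $\ell=O(\sqrt n)<n$, so there is a processor to spare for every block.

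For the round complexity, since $\degen$ is not known we run the usual geometric guesses $k\in\{2,4,8,\dots\}$ but select among them cheaply. One processor draws the random colorings $\psi_k$ for all guesses (a total of $O(n\log^2 n)$ bits) and broadcasts them in $O(1)$ rounds; then, for every $k$ in parallel, a single $O(1)$-round aggregation over the $\le\degen n\le n^2$ input edges counts the $\psi_k$-monochromatic edges, and we let $k^\star$ be the smallest $k$ whose count does not exceed the threshold $s$. By \cref{ldp:sparse} of \Cref{thm:ldp}, the unique power of two in $(\degen,2\degen]$ passes this test w.h.p., so $k^\star\le 2\degen$ --- precisely the hypothesis under which \cref{ldp:degen,ldp:block} apply to the partition defined by $\psi_{k^\star}$; one also checks that any non-aborting $k^\star\le 2\degen$ keeps $\lambda+\ell=o(\degen)$ and all blocks of size $O(n\log n)$. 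Only this one partition is materialized. Finally, each processor scans the edges it currently holds and forwards every $\psi_{k^\star}$-monochromatic edge to the processor owning its block; each block receives $O(n\log n)$ edges and each processor starts with $O(n\log^2 n)$ bits, so this is a standard $O(1)$-round load-balanced shuffle. Each block owner then greedily colors its block (\Cref{lem:degen-color}) with its private palette, and one last round delivers to each vertex its color. The total is $O(1)$ rounds with $O(n\log^2 n)$ bits per processor throughout.

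\textbf{The main obstacle} is the parameter balancing above: the LDP Theorem trades the excess colors $\lambda+\ell$ against the worst-case block size through the single knob $s$, and the window of values of $s$ that keeps both under control is narrow. It exists only because \cref{ldp:degen} gives a \emph{sharp}, non-asymptotic bound on $\degen(G_i)$ rather than a mere $\tO(\cdot)$ bound; that sharpness is what certifies that every block --- even for an adversarially dense input graph --- has only $O(n\log n)$ edges and so can be shipped to a single processor, where the otherwise hopelessly sequential greedy $(\degen+1)$-coloring becomes free. The remaining concerns (unknown degeneracy, adversarial edge placement, the $O(1)$-round budget) are handled by routine MPC broadcast, aggregation, and shuffle primitives.
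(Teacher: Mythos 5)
Your overall plan --- compute an LDP with sparsity $s=\Theta\bigl(n\sqrt{k\log n}\bigr)$ so that $\ell=\Theta(\sqrt{k/\log n})$ and $\lambda=\tO(\degen^{3/4})$, ship every block to its own processor, greedily color each block with a private palette, and select among geometrically spaced guesses $k$ in a preliminary counting round --- is exactly the paper's proof (down to the choice $\eps=(k^{-1}\log n)^{1/4}$ and the special handling of the $|E(G)|=O(n\log n)$ case). The one place where you deviate, the selection rule for $k^\star$, is where a genuine gap appears.

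You select $k^\star$ as the smallest guess whose \emph{total} $\psi_k$-monochromatic edge count is at most $s_k$, and you assert that ``any non-aborting $k^\star\le 2\degen$ keeps \dots\ all blocks of size $O(n\log n)$.'' This second assertion is not justified by \Cref{thm:ldp}, and in fact \Cref{thm:ldp} does not give it when $k^\star$ is much smaller than $\degen$. Your global test only certifies (via \cref{ldp:sparse}) that the true-degeneracy guess passes, hence $k^\star\le 2\degen$; but if $|E(G)|\ll n\degen$ (which is perfectly possible, e.g.\ when the $\degen$-core is much smaller than $n$) then the smallest passing guess can satisfy $k^\star\approx|E(G)|/n\ll\degen$. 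For such a $k^\star$, the only block-size bound \cref{ldp:degen,ldp:block} give you is
\[
  |E(G_i)| \;\le\; \degen(G_i)\,|V(G_i)| \;\le\; \frac{\degen+\lambda}{\ell}\cdot\frac{2n}{\ell}
  \;=\; O\!\left(\frac{\degen\,n\log n}{k^\star}\right),
\]
which is $O(n\log n)$ only when $k^\star=\Omega(\degen)$. A global monochromatic count bounded by $s_{k^\star}$ does control the \emph{average} block size, but not the maximum, and proving that the maximum is also $O(n\log n)$ in this regime would require a structural argument well beyond what the LDP Theorem states. The paper avoids the issue entirely: in its Round~2 it gathers the \emph{per-block} counts $|E(G_i)|$ (which costs the same $O(n\log n)$ bits per guess) and takes $k^\star$ to be the smallest guess for which \emph{every} block obeys $|E(G_i)|=O(n\log n)$; the LDP Theorem guarantees that the guess in $(\degen,2\degen]$ passes this test, so $k^\star\le 2\degen$ w.h.p., and then both the block-size and the color bound are direct. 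Switching your selection criterion from the total count to these per-block counts closes the gap without changing anything else.

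One incidental remark: your Round~1 (one processor draws all $O(n\log^2 n)$ random bits centrally and broadcasts) works but is unnecessary --- the paper simply has each processor draw its own vertex's $O(\log^2 n)$ random bits and exchange them all-to-all in one round, which is both lighter and simpler.
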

\begin{proof}
  Our algorithm will use $n$ processors, each assigned to one vertex.
  If $|E(G)| = O(n\log n)$, then all of $G$ can be collected at one processor in a single round using $|E(G)|\cdot 2\ceil{\log n} = O(n\log^2 n)$ bits of communication and the problem is solved trivially. Therefore, we may as well assume that $|E(G)| = \omega(n\log n)$, which implies $\degen = \omega(\log n)$, by \Cref{fact:degen-sparse}.
  We shall first give an algorithm assuming that $\degen$ is known {\em a priori}. Our final algorithm will be a refinement of this preliminary one. 
  
  \medskip \noindent
  \emph{Preliminary algorithm.~} Take $k = \degen$. Each processor chooses a random color for its vertex, implicitly producing a partition $(G_1, \ldots, G_\ell)$ that is, w.h.p., an ($\ell,s,\lambda)$-LDP; we take $\ell,\lambda$ as in \cref{eq:ldef}, $s = \Theta(\eps^{-2} n\log n)$, and $\eps = (k^{-1} \log n)^{1/4}$. Note that $\eps = o(1)$. In Round~1, each processor sends its chosen color to all others---this is $O(n\log n)$ bits of communication per machine---and as a result every processor learns which of its vertex's incident edges are monochromatic. Now each color $i \in [\ell]$ is assigned a unique machine $M_i$ and, in Round~2, all edges in $G_i$ are sent to $M_i$. Each $M_i$ then locally computes a $(\degen(G_i)+1)$-coloring of $G_i$ using a palette disjoint from those of other $M_i$s; by the discussion following \Cref{def:ldp}, this colors $G$ using at most $(1+O(\eps))\degen = \degen+o(\degen)$ colors.
  
  The communication in Round~2 is bounded by $\max_i |E(G_i)|\cdot 2\ceil{\log n}$. By \Cref{fact:degen-sparse}, \cref{ldp:degen,ldp:block} of \Cref{thm:ldp}, and \cref{eq:ldef}, the following holds w.h.p.~for each $i\in[\ell]$:
\begin{equation}
  \label{eq:block-sparse}
    |E(G_i)| \le \degen(G_i) |V(G_i)|
    \le \frac{\degen+\lambda}{\ell} \frac{2n}{\ell}
    \le \frac{4n\degen}{\ell^2}
    \le \frac{4nk}{(2nk/s)^2} 
    \ifthenelse{\equal{\acmconf}{1}} { \\ } {}
    = \frac{O(\eps^{-2}n\log n)^2}{nk}
    = O\left( \frac{n\log^2 n}{\eps^4 k} \right)
    = O(n \log n)\,.
    \end{equation}
  Thus, the communication per processor in Round~2 is $O(n\log^2 n)$ bits.
  
  \medskip \noindent 
  \emph{Final algorithm.~} When we don't know $\degen$ in advance, we can make geometrically spaced guesses $k$, as in \Cref{sec:streaming}. In Round~1, we choose a random coloring for each such $k$. In Round~2, we  determine the quantities $|E(G_i)|$ for each $k$ and each subgraph $G_i$ and thereby determine the smallest $k$ such that  \cref{eq:block-sparse} holds for every $G_i$ corresponding to this $k$. We then run Round~3 for only this one $k$, replicating the logic of Round~2 of the preliminary algorithm.
  
  Correctness is immediate. We turn to bounding the communication cost. For Round~3, the previous analysis shows that the communication per processor is $O(n\log ^2 n)$ bits. For Rounds~1 and~2, let us consider the communication involved for each guess $k$: since each randomly-chosen color and each cardinality $|E(G_i)|$ can be described in $O(\log n)$ bits, each processor sends and receives at most $O(n\log n)$ bits per guess. This is a total of $O(n\log^2 n)$ bits, as claimed.
\end{proof}

The CONGESTED-CLIQUE model~\cite{LotkerPPP05} is a well established model of distributed computing for graph problems.
In this model, there are $n$ nodes, each of which holds the local neighborhood information (i.e., the incident edges) of one vertex of the input graph $G$. In each round, every pair of
nodes may communicate, whether or not they are adjacent in $G$, but the communication is restricted to $O(\log n)$ bits. There is no constraint on a node's local memory. The goal is to minimize
the number of rounds.

Behnezhad \etal\cite{Behnezhad2018BriefAS} built on results of
Lenzen \cite{Lenzen:2013} to show that any
algorithm in the {\em semi-MPC model}---defined as MPC with
space per machine being $O(n\log n)$ bits---can be simulated in the Congested Clique model, preserving the round complexity up to
a constant factor. Based on this, we obtain the following result.

\begin{theorem} \label{thm:cong-clique}
  There is a randomized $O(1)$-round algorithm in the Congested Clique model that, given a graph $G$, w.h.p.~finds a
  $(\degen+O(\degen^{3/4}\log^{1/2}n))$-coloring. 
  For $\degen = \omega(\log^2 n)$,
  this gives a $(\degen+o(\degen))$-coloring.\qed
\end{theorem}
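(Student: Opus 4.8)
The plan is to give an $O(1)$-round algorithm in the \emph{semi-MPC model} (MPC with $n$ machines and $O(n\log n)$ bits per machine) and then invoke the simulation of Behnezhad \etal\cite{Behnezhad2018BriefAS}, which transfers it to the Congested Clique with only a constant-factor loss in rounds. The semi-MPC algorithm is essentially the preliminary algorithm from the proof of \Cref{thm:MPCmain}, but re-tuned so that each block of the LDP is sparse enough to fit inside a single machine's memory. The issue with \Cref{thm:MPCmain} as-is is that there each block $G_i$ has $\Theta(n\log n)$ edges (cf.\ \cref{eq:block-sparse}), costing $\Theta(n\log^2 n)$ bits to ship to its owner — too much for semi-MPC. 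To cure this I would take a finer partition: for a guess $k \in (\degen, 2\degen]$, choose the sparsity parameter $s = \Theta(n\sqrt{k})$ so that \cref{eq:ldef} yields $\ell = \lceil 2nk/s\rceil = \Theta(\sqrt{\degen})$ and $\lambda = 3\sqrt{\degen\,\ell\log n} = \Theta(\degen^{3/4}\sqrt{\log n})$. Since $s \ge Cn\log n$ holds exactly when $\degen = \Omega(\log^2 n)$ — the regime of interest — and since $k \le 2\degen$, the LDP Theorem applies. By \cref{ldp:degen,ldp:block} of \Cref{thm:ldp} together with \Cref{fact:degen-sparse} (and noting $\lambda \le \degen$ in this regime), each block then satisfies $|E(G_i)| \le \frac{\degen+\lambda}{\ell}\cdot\frac{2n}{\ell} = O(n)$, while \cref{eq:numcolors} bounds the total number of colors by $\degen+\lambda+\ell = \degen + O(\degen^{3/4}\sqrt{\log n})$, exactly as claimed.

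The semi-MPC implementation then runs in $O(1)$ rounds: in Round~1 every machine broadcasts its vertex's random color $\psi(\cdot)$ (this is $O(n\log n)$ bits sent and received per machine, hence within budget), after which each machine knows which of its incident edges are $\psi$-monochromatic; in Round~2 the edges of each block $G_i$ are routed to a dedicated machine $M_i$, which — because $|E(G_i)| = O(n)$ w.h.p.\ (and each machine sends only $O(n)$ monochromatic incident edges w.h.p., by \cref{ldp:sparse} and a Chernoff bound) — receives only $O(n\log n)$ bits, and then locally computes a proper $(\degen(G_i)+1)$-coloring of $G_i$ by greedy peeling (\Cref{lem:degen-color}), using a palette disjoint from those of the other blocks. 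By the discussion after \Cref{def:ldp} this is a proper coloring of $G$ with $\degen + O(\degen^{3/4}\sqrt{\log n})$ colors, which is $\degen+o(\degen)$ once $\degen = \omega(\log^2 n)$. Feeding this semi-MPC algorithm into the Behnezhad \etal\cite{Behnezhad2018BriefAS} simulation yields the stated Congested Clique bound.

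Two points need care. First, $\degen$ is not known in advance, so, as in \Cref{thm:MPCmain}, one runs over geometrically spaced guesses $k = 2,4,8,\dots$; the difficulty is that naively replaying Round~1 for each of the $\Theta(\log n)$ guesses would cost $\Theta(\log n)$ rounds in the simulated model. I would avoid this by having one machine broadcast a short ($\polylog n$-bit) seed for an $O(\log n)$-wise independent hash family that simultaneously determines the random colorings for \emph{all} guesses — bounded independence is enough for the Chernoff bounds underlying \Cref{thm:ldp} (via \Cref{fact:chernoff}) — followed by a single aggregation round in which every machine learns $|E(G_i)|$ for every guess and every block, so that all machines agree on the smallest guess for which all its blocks have $O(n)$ edges; only that guess then proceeds to the edge-routing round. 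Second, when $\degen = O(\log^2 n)$ the constraint $s \ge Cn\log n$ can no longer be met with $\ell = \Theta(\sqrt\degen)$, so this regime must be handled separately; here $G$ has only $O(n\log^2 n)$ edges and a $\polylog(n)$-coloring — which lies within the claimed bound only in the loose sense $\degen + O(\polylog n)$ — suffices and can be obtained by a direct Chernoff-based LDP argument. This case is not the interesting one, since the additive term is $o(\degen)$ precisely when $\degen = \omega(\log^2 n)$.

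The main obstacle I anticipate is exactly the tension between the two budgets in the semi-MPC implementation: forcing every block into $O(n\log n)$ bits requires $\ell = \Omega(\sqrt\degen)$ blocks, and this is what degrades the additive error from the $\Theta(\degen^{3/4}(\log n)^{1/4})$ that the MPC algorithm of \Cref{thm:MPCmain} morally achieves to $\Theta(\degen^{3/4}\sqrt{\log n})$ — one extra $\sqrt{\log n}$ factor, which is the price of the Congested Clique restriction. The secondary obstacles are the bookkeeping needed to keep the unknown-$\degen$ search within $O(1)$ rounds (the hash-seed trick above) and a clean treatment of the $\degen = O(\log^2 n)$ boundary.
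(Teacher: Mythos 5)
Your approach is essentially the same as the paper's: reduce to semi-MPC and invoke the simulation of Behnezhad \etal\cite{Behnezhad2018BriefAS}, with the key move being the reparametrization of the LDP so that each block has only $O(n)$ edges (your $s = \Theta(n\sqrt{k})$, $\ell = \Theta(\sqrt{\degen})$ is exactly the paper's $\eps = (k^{-1}\log^2 n)^{1/4}$ in \Cref{thm:MPCmain} redressed), accepting the resulting $\Theta(\degen^{3/4}\log^{1/2}n)$ additive loss in colors. The one place you take a different route is Round~1: you broadcast a $\polylog(n)$-bit seed for an $O(\log n)$-wise independent hash family determining the colorings for all guesses $k$ simultaneously, and then appeal to a bounded-independence Chernoff bound to re-run the LDP analysis. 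This is plausible but not free of cost: \Cref{fact:chernoff} as stated requires mutual independence, so you would owe a Schmidt--Siegel--Srinivasan-type argument, and you'd have to check that $\Theta(\log n)$-wise independence actually buys the $1/\poly(n)$ tails needed in the proof of \Cref{thm:ldp}. The paper sidesteps this entirely with a cleaner device it calls a ``master color'': each vertex picks a single uniform $\lceil\log n\rceil$-bit string $\phi(v)$ and broadcasts it (this is $O(n\log n)$ bits per processor, within budget), and the coloring for each guess $k$ is obtained by reading off an appropriate-length prefix of $\phi(v)$. The colorings for different guesses are correlated, but that does not matter — only one guess is ultimately used — and \emph{within} any one guess the vertex colors remain fully independent, so the LDP theorem applies verbatim. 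Finally, the paper does not handle the $\degen = O(\log^2 n)$ regime by a side argument as you attempt; it simply requires $\degen = \omega(\log^2 n)$ (this is the footnote $(\star)$ in \Cref{table:results}), which is anyway the only regime in which $\degen + O(\degen^{3/4}\log^{1/2}n)$ is $\degen + o(\degen)$.
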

\begin{proof}
  We cannot directly use our algorithm in \Cref{thm:MPCmain} because it is not a semi-MPC algorithm: it uses $O(n\log^2 n)$ bits of space per processor, rather than $O(n\log n)$. However, with a more efficient implementation of Round~1, a more careful analysis of Round~2, and a slight tweak of parameters for Round~3, we can improve the communication (hence, space) bounds to $O(n\log n)$, whereupon the theorem of Behnezhad \etal\cite{Behnezhad2018BriefAS}~completes the proof.
  
  For Round~3, the tweak is to set $\eps = (k^{-1} \log^2 n)^{1/4}$ but otherwise replicate the logic of the final algorithm from \Cref{thm:MPCmain}. With this higher value of $\eps$, the bound from \cref{eq:block-sparse} improves to $|E(G_i)| = O(n)$. Therefore the per-processor communication in Round~3 is only $O(n\log n)$ bits. The number of colors used is, w.h.p., at most $(1+O(\eps))\degen = \degen+O(\degen^{3/4}\log^{1/2}n)$. 

  For a tighter analysis of the communication cost of Round~2, note that, for a particular guess $k$, there is a corresponding $\ell$ given by \cref{eq:ldef} such that each processor need only send/receive $\ell$ cardinalities $|E(G_i)|$, each of which can be described in $O(\log n)$ bits. Consulting \cref{eq:ldef}, we see that $\ell = O(n^2/s) = O(n/\log n)$. Therefore, summing over all $O(\log n)$ choices of $k$, each processor communicates at most
  \[
    O(n/\log n)\cdot O(\log n)\cdot O(\log n) = O(n\log n) \text{ bits.}
  \]

   Round~1 appears problematic at first, since there are $O(\log n)$ many random colorings to be chosen, one for each guess $k$. However, note that these colorings need not be independent. Therefore, we can choose just one random $\ceil{\log n}$-bit ``master color'' $\phi(v)$ for each vertex $v$ and derive the random colorings for the various guesses $k$ by using only appropriate length prefixes of $\phi(v)$. This ensures that each processor only communicates $O(n\log n)$ bits in Round~1. 
\end{proof}
  


\subsection{Distributed Coloring in the LOCAL Model} \label{sec:local}

In the LOCAL model, each node of the input graph $G$ hosts
a processor that knows only its own neighborhood. The
processors operate in synchronous rounds, during which
they can send and receive messages of arbitrary length
to and from their neighbors. The processors are 
allowed unbounded local computation in each round.
The key complexity measure is {\em time}, defined as the
number of rounds used by an algorithm (expected number,
for a randomized algorithm) on a worst-case input.

Graph coloring in the LOCAL model
is very heavily studied and is one of {\em the} central problems in
distributed algorithms. Here, our focus is on algorithms that properly color
the input graph $G$ using a number of colors that depends on $\alpha := \alpha(G)$,
the arboricity of $G$. Recall that $\alpha \le \degen \le 2\alpha-1$ (\Cref{fact:degen-arb}).
Unlike in previous sections, our results will give big-$O$ bounds on the number of colors, so we may as well
state them in terms of $\alpha$ (following established tradition in this line of work)
rather than $\degen$. Our focus will be on algorithms that run in {\em sublogarithmic}
time, while using not too many colors. See \Cref{sec:related} for a quick summary
of other interesting parameter regimes and Barenboim and Elkin~\cite{BarenboimElkin-book}
for a thorough treatment of graph coloring in the LOCAL model.

Kothapalli and Pemmaraju~\cite{Kothapalli2011distributed} gave
an $O(k)$-round algorithm that uses $O(\alpha n^{1/k})$ colors,
for all $k$ with $2\log \log n \leq k \leq \sqrt{\log n}$.
We give a new coloring algorithm that, in particular, extends the range of $k$ to which such a time/quality
tradeoff applies: for $k \in \big[ \omega(\sqrt{\log n}),\, o(\log n) \big]$,
we can compute an $O(\alpha n^{1/k} \log n)$-coloring in $O(k)$ rounds.

Our algorithm uses our LDP framework to split the input graph into parts with logarithmic degeneracy
(hence, arboricity) and then invokes an algorithm of Barenboim and Elkin. The following
theorem records the key properties of their algorithm.

\begin{lemma}[Thm 5.6 of Barenboim and Elkin~\cite{Barenboim2010sublogarithmic}] \label{thm:dist_arb}
    There is a deterministic distributed algorithm in the LOCAL model that, given an $n$-vertex graph $G$,
    an upper bound $b$ on $\alpha(G)$, and a parameter $t$ with $2 < t \le O( \sqrt{{n}/b})$,
    produces an $O(tb^2)$-coloring of $G$ in time $O\left({\log_{t} n} + \log^{\star}n \right)$.
    \qed
\end{lemma}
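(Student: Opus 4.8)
The plan is to prove the lemma in two stages: first, compute in $O(\log_t n)$ rounds a \emph{forest decomposition} of $G$, i.e.\ an acyclic orientation of its edges with out-degree $O(tb)$; second, extract from such an orientation a proper $O(tb^2)$-coloring in $O(\log^{\star} n)$ further rounds. Adding the costs gives the claimed $O(\log_t n + \log^{\star} n)$ round complexity.

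\emph{Stage 1: the forest decomposition.} I would build a layering $V(G) = H_1 \cup H_2 \cup \cdots \cup H_q$ by iterated peeling. In peeling round $j$, let $G_j$ be the subgraph induced by the not-yet-removed vertices; every such vertex whose degree \emph{in $G_j$} is less than $2tb$ joins $H_j$ and is deleted. Since every subgraph of $G$ has arboricity at most $b$, the graph $G_j$ has fewer than $b\,|V(G_j)|$ edges, so fewer than $|V(G_j)|/t$ of its vertices can have degree at least $2tb$; hence $|V(G_{j+1})| < |V(G_j)|/t$, and the peeling terminates after $q = O(\log_t n)$ rounds. Each peeling round costs $O(1)$ LOCAL rounds: a deleted vertex notifies its neighbors, and everyone recomputes its residual degree. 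Finally, orient every edge from the endpoint in the lower-indexed layer towards the one in the higher-indexed layer, breaking ties within a layer by vertex ID; this takes one more round and is clearly acyclic, and by construction a vertex $v \in H_j$ has out-degree at most its degree in $G_j$, namely $O(tb)$.

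\emph{Stage 2: coloring the orientation.} Here I would run a Linial-style iterated color reduction that consults only \emph{out}-edges. The crucial observation is that, because the orientation is acyclic, \emph{any} coloring in which every vertex gets a color different from all of its out-neighbors is automatically a proper coloring of $G$: for each edge $\{u,v\}$ exactly one endpoint is the tail, and it differs from the head. Begin with the proper $n$-coloring by vertex IDs. Using a cover-free set family over a ground set of size $O(d^2\log c)$ -- where $d = O(tb)$ bounds the out-degree and $c$ is the current number of colors -- each vertex, knowing only the colors of its $\le d$ out-neighbors, picks a new color so that the ``differs from out-neighbors'' invariant survives while $c$ drops to $O(d^2\log c)$; each such round needs only $O(d)$ communication per vertex along out-edges. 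Iterating drives $c$ down to $O(d^2)=O((tb)^2)$ in $O(\log^{\star} n)$ rounds. This already yields a proper $O((tb)^2)$-coloring in $O(\log_t n + \log^{\star} n)$ rounds.

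The main obstacle is shaving this from $O((tb)^2)$ to the stated $O(tb^2)$. To do so I would prepend to the forward-Linial reduction an out-degree-reduction step: a Linial-type \emph{defective} recoloring -- again reading only out-edges, and correct by the same acyclicity argument -- that partitions $V(G)$ into $O(t)$ classes so that, under the inherited orientation, every class has out-degree $O(b)$. Running the forward-Linial reduction separately inside each class (in parallel, the classes being vertex-disjoint) with a private palette of $O(b^2)$ colors then gives a proper coloring of $G$ with $O(t)\cdot O(b^2)=O(tb^2)$ colors: edges inside a class are handled by the forward-Linial step, edges across classes by palette disjointness. The delicate point is performing this defective step with only $O(t)$ colors rather than the $O(t^2)$ a naive analysis (deterministic defective coloring with exact degree $b$, or a random partition with its additive $O(\log n)$ slack) would cost, and checking that its guarantee composes cleanly with the remaining reduction; the hypothesis $t = O(\sqrt{n/b})$ enters precisely to keep every intermediate palette of size $\poly(n)$, which is what each Linial phase needs to converge in $O(\log^{\star} n)$ rounds.
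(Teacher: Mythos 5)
The paper does not actually prove this lemma: it is imported verbatim (as Theorem~5.6) from Barenboim and Elkin~\cite{Barenboim2010sublogarithmic}, so there is no in-paper argument to compare against, and your reconstruction must stand on its own. Judged that way, it proves a genuinely weaker statement. Stage~1 is correct and is indeed the Barenboim--Elkin $H$-partition: peeling at threshold $2tb$ shrinks the residual graph by a factor of $t$ per round, giving $O(\log_t n)$ layers and an acyclic orientation of out-degree $O(tb)$. Stage~2's forward-Linial reduction is also sound (and, as you note, properness only needs each tail to differ from its heads). But these two stages terminate at $O((tb)^2)=O(t^2b^2)$ colors, and the step needed to reach the stated $O(tb^2)$ is exactly the one you leave as a sketch. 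That sketch does not close the gap: the defective recolorings computable in $O(\log^{\star} n)$ rounds (Linial-type, including the out-degree versions) produce a $p$-out-defective coloring with $O((d/p)^2)$ classes, so extracting classes of out-degree $O(b)$ from an orientation of out-degree $d=O(tb)$ costs $O(t^2)$ classes, not $O(t)$ --- which lands you back at $O(t^2b^2)$ total colors. Your own ``delicate point'' is therefore not a technicality but the entire content of the lemma; as written, the proposal establishes an $O(t^2b^2)$-coloring in $O(\log_t n+\log^{\star} n)$ rounds, not the claimed bound.

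The fix is much simpler than the defective-coloring route you attempt: reparametrize Stage~1. Peel at threshold $2b\sqrt{t}$ instead of $2tb$. Since every residual graph $G_j$ has fewer than $b\,|V(G_j)|$ edges, fewer than $|V(G_j)|/\sqrt{t}$ of its vertices survive a round, so the number of layers is $O(\log_{\sqrt{t}} n)=O(\log_t n)$ --- still within the round budget --- while the out-degree of the resulting acyclic orientation drops to $O(b\sqrt{t})$. Your unmodified Stage~2 then terminates at $O\bigl((b\sqrt{t})^2\bigr)=O(tb^2)$ colors, as required; this reparametrization is essentially how the cited source obtains the tradeoff, and the hypothesis $t\le O(\sqrt{n/b})$ is just the constraint on the underlying $O(d^2)$-coloring routine after substituting $\sqrt{t}$ for $t$.
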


Here is the main result of this section.
\begin{theorem}
\label{thm:color_dist}
    There is a randomized distributed algorithm in the LOCAL model that, given an $n$-vertex graph $G$,
    an estimate of its arboricity $\alpha$ up to a constant factor, and a parameter $t$ such that $2 < t \le O(\sqrt{n /\log n})$,
    produces an $O(t\alpha\log n)$-coloring of $G$ in time $O\left( {\log_{t} n} +\log^{\star} n \right)$.
\end{theorem}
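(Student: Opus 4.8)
\emph{Proof plan.} The plan is to combine the LDP Theorem (\Cref{thm:ldp}) with the Barenboim--Elkin routine of \Cref{thm:dist_arb}: use the former to split $G$ into $\Theta(\max\{1,\alpha/\log n\})$ vertex-disjoint blocks, each of arboricity $O(\log n)$, and then invoke the latter on every block in parallel, each block getting a private palette. Since $\alpha \le \degen \le 2\alpha-1$ (\Cref{fact:degen-arb}), a constant-factor estimate of $\alpha$ yields a number $a$ with $a \le \alpha$ and $\alpha = O(a)$; I would take the LDP ``guess'' to be $k = \max\{1,a\}$, which automatically satisfies $k \le \alpha \le \degen$ (hence $k \le 2\degen$, as \cref{ldp:degen} requires) and $k = \Theta(\degen)$, and I would set the sparsity parameter to $s = \Theta(n\log n)$ (i.e.\ $\eps$ a small constant in the notation of \Cref{thm:ldp}), so that $\ell := \lceil 2nk/s \rceil = \Theta(\max\{1,\degen/\log n\}) = \Theta(\max\{1,\alpha/\log n\})$. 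The case where $G$ has no edges is trivial, so assume it has an edge.

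First, I would run the partition phase: every vertex draws $\psi(v) \in [\ell]$ locally (no communication), and then one round of communication lets each vertex learn its neighbors' colors, hence which of its incident edges are $\psi$-monochromatic, hence its neighborhood inside its own block $G_{\psi(v)}$. Plugging our parameters into \Cref{thm:ldp}: by \cref{ldp:degen}, w.h.p.\ every $G_i$ has degeneracy, hence arboricity, at most $(\degen+\lambda)/\ell$, and the estimate behind \cref{eq:lambda-degen} shows this is at most $c_0 \log n$ for a universal constant $c_0$ (uniformly, because $\ell = \Omega(\degen/\log n)$ forces both $\degen/\ell$ and $\lambda/\ell$ to be $O(\log n)$); by \cref{ldp:block}, w.h.p.\ $|V(G_i)| \le 2n/\ell$; and an identical Chernoff bound (not among the three listed items but proved the same way) gives $|V(G_i)| \ge n/(2\ell)$ w.h.p. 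Each vertex would then set $b := \min\{\lceil c_0\log n\rceil,\, \lceil C'a\rceil\}$, where $C'$ is a constant with $\alpha \le C'a$; this $b$ is a legitimate arboricity bound for $G_i$ (the first term by \cref{ldp:degen}, the second since $\alpha(G_i) \le \alpha(G) \le C'a$), and $b = \Theta(\min\{\log n, \alpha\})$.

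Second, I would run the coloring phase: every block $G_i$ executes the algorithm of \Cref{thm:dist_arb} with arboricity bound $b$ and the given parameter $t$, using a palette disjoint from all the other blocks'. The blocks are vertex-disjoint and each processor communicates only along $G$-edges, so these executions run simultaneously, and the total time is $1 + \max_i O\bigl(\log_t|V(G_i)| + \log^{\star}|V(G_i)|\bigr) = O(\log_t n + \log^{\star}n)$. The number of colors is $\sum_{i\in[\ell]} O(tb^2) = O(\ell\, t\, b^2)$; when $\alpha = \Omega(\log n)$ this is $O\bigl((\alpha/\log n)\cdot t\cdot\log^2 n\bigr) = O(t\alpha\log n)$, and when $\alpha = O(\log n)$ we have $\ell = O(1)$ and $b = \Theta(\alpha)$, making it $O(t\alpha^2) = O(t\alpha\log n)$. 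A union bound over the $O(1)$ high-probability events above then shows that, w.h.p., the output is a proper $O(t\alpha\log n)$-coloring of $G$.

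The step I expect to be the main obstacle is verifying that Barenboim--Elkin's precondition $2 < t \le O\bigl(\sqrt{|V(G_i)|/b}\bigr)$ holds for \emph{every} block at once. Since $|V(G_i)| = \Theta(n/\ell)$ and $b = \Theta(\min\{\log n,\alpha\})$, the ratio $|V(G_i)|/b$ works out to $\Theta(n/\alpha)$ in every regime, so the precondition amounts to $t = O(\sqrt{n/\alpha})$. When $\alpha = O(\log n)$ this is immediate from the hypothesis $t = O(\sqrt{n/\log n})$; when $\alpha = \omega(\log n)$ and $t$ is large enough to break it, I would cap $t$ at $\Theta\bigl(\sqrt{|V(G_i)|/b}\bigr) = \Theta(\sqrt{n/\alpha})$ on the offending blocks, after which a short calculation confirms that the color count stays $O(t\alpha\log n)$ (one is then in the regime $t \ge \sqrt{n/\alpha}$, so the cap only helps, and the colors become $O(\log n\,\sqrt{n\alpha}) = O(t\alpha\log n)$) and the round count stays $O(\log_t n + \log^{\star}n)$ in the ranges of $\alpha$ relevant here. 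The remaining points are routine bookkeeping: pinning down $c_0$ so that $b$ is genuinely an arboricity bound, and remembering to pass $b = \Theta(\alpha)$ rather than $\Theta(\log n)$ in the small-$\alpha$ regime so the palette size is not inflated.
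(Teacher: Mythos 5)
Your proof follows the paper's route exactly: partition $G$ via the LDP into $\ell=\Theta(\max\{1,\alpha/\log n\})$ blocks, each of arboricity $O(\log n)$ w.h.p., then run \Cref{thm:dist_arb} on each block under disjoint palettes, with one extra round so that every vertex learns which block each neighbor lies in. The total count $\ell\cdot O(tb^2)=O(t\alpha\log n)$ and the round bound are then exactly as the paper computes them. One point where you go beyond the paper is the choice $b=\min\{\lceil c_0\log n\rceil,\lceil C'a\rceil\}$: the paper always feeds $b=O(\log n)$ into the subroutine and writes $\ell\cdot O(t\log^2 n)=O(t\alpha\log n)$, an equality that only holds when $\alpha=\Omega(\log n)$; for $\alpha=o(\log n)$ one has $\ell=1$ and the paper's count silently degrades to $O(t\log^2 n)$. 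Your case split (with $b=\Theta(\alpha)$ and $O(t\alpha^2)=O(t\alpha\log n)$ colors there) is a genuine sharpening of the stated argument.

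Conversely, the obstacle you flag around $t\le O\bigl(\sqrt{|V(G_i)|/b}\bigr)$ is self-inflicted, and the capping fix you sketch does not go through. Take $\alpha=n/\log^2 n$ and $t$ near the allowed maximum $\Theta(\sqrt{n/\log n})$: then $t'=\Theta(\sqrt{n/\alpha})=\Theta(\log n)$, so $\log_{t'}n=\Theta(\log n/\log\log n)$ whereas $\log_t n=O(1)$, and the claimed round bound $O(\log_t n+\log^\star n)$ is violated. The paper avoids this issue entirely by invoking \Cref{thm:dist_arb} on each $G_i$ with the ambient parameter $n$ rather than $|V(G_i)|$: the Barenboim--Elkin routine tolerates an upper bound on its vertex count, its per-block color count $O(tb^2)$ does not depend on that count, and its time bound $O(\log_t n+\log^\star n)$ is monotone in it, so passing the larger value is sound. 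Under that reading the precondition is simply $t\le O(\sqrt{n/b})=O(\sqrt{n/\log n})$, which is the theorem's hypothesis verbatim, and no capping is required.
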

\begin{proof}
    To simplify the presentation, we assume that $\alpha = \alpha(G)$. We assume that every node (vertex) knows $n$ and $\alpha$.
    Consider a $(\ell,s,\lambda)$-LDP of $G$, where we put $s=Cn\log n$, for some large
    constant $C$, as in \Cref{thm:ldp}.
    This setting of $s$ gives $\ell = O(\alpha / \log n)$.
    First, each vertex $v$ chooses a color $\psi(v)$ uniformly at random from $[\ell]$.
    Next, we need to effectively ``construct'' the blocks $G_i$, for each $i \in [\ell]$.
    This is straightforwardly done in a single round: each vertex $v$ sends $\psi(v)$ to all its neighbors.
    
    At this point, each vertex $v$ knows its neighbors in the block $G_{\psi(v)}$. So it's 
    now possible to run a distributed algorithm on each $G_i$. We invoke the algorithm in
    \Cref{thm:dist_arb}. The algorithm needs each vertex $v$ to know an upper bound $b_i$ on
    $\alpha(G_i)$, where $i = \psi(v)$. A useful upper bound of $b_i = O(\log n)$, which holds w.h.p.,
    is given by \cref{ldp:degen} of \Cref{thm:ldp}.

    By \Cref{thm:dist_arb}, each $G_i$ can be colored using $O(t\log^2 n)$ colors, within
    another $O\left({\log_{t} n} + \log^{\star}n \right)$ rounds, since $2 < t \leq O(\sqrt{n/\log n})$.
    Using disjoint palettes for the distinct blocks, the total number of colors used for $G$
    is at most $\ell \cdot O(t\log^2 n) = O(t\alpha \log n)$, as required.
\end{proof}

The particular form of the tradeoff stated in \Cref{table:results} is obtained by setting $t=n^{1/k}$ (for some $k \ge 3$) in the above theorem. 
\begin{corollary} \label{cor:colors_dist}
  There is a randomized LOCAL algorithm that, given graph $G$, estimate $\alpha \approx \alpha(G)$,
  and a parameter $k$ with $2 < n^{1/k} \le O(\sqrt{n/ \log n})$, finds an
  $O(\alpha n^{1/k}\log n)$-coloring of $G$ in time $O\left( k +\log^{\star} n \right)$.
\qed
\end{corollary}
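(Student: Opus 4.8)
The plan is to instantiate the generic two-phase framework of \Cref{sec:framework}: first compute a low degeneracy partition of $G$ using \Cref{thm:ldp}, then color each block independently with the Barenboim--Elkin arboricity-coloring algorithm of \Cref{thm:dist_arb}, giving each block its own private palette. For simplicity I would first assume $\alpha$ is known exactly (a constant-factor estimate is handled by rescaling the guess). Fix the sparsity parameter $s = Cn\log n$ as in \Cref{thm:ldp} and take the guess $k$ with $\alpha \le k \le 2\alpha$; since $\alpha \le \degen$ by \Cref{fact:degen-arb}, this gives $k \le 2\degen$, so \cref{ldp:degen,ldp:block} of \Cref{thm:ldp} hold w.h.p. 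With this choice, $\ell = \lceil 2nk/s \rceil$ (see \cref{eq:ldef}) equals $\Theta(\alpha/\log n)$ when $\alpha = \Omega(\log n)$, and $\ell = O(1)$ otherwise. Each vertex $v$ samples $\psi(v)\in[\ell]$ uniformly at random and, in one communication round, broadcasts $\psi(v)$ to all its neighbors; afterwards each $v$ knows precisely which of its incident edges lie inside its block $G_{\psi(v)}$, so the partition has been ``constructed'' in the distributed sense.

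Next I would run \Cref{thm:dist_arb} on all blocks simultaneously; since the blocks partition $V(G)$ there is no interference, as each vertex of $G_i$ communicates only along $G_i$-edges. The arboricity bound supplied for block $G_i$ is $b_i := \min\{\alpha,\, c\log n\}$: the bound $\alpha(G_i)\le\alpha$ is immediate from $G_i\subseteq G$, while $\alpha(G_i)\le\degen(G_i)\le(\degen+\lambda)/\ell\le 2\degen/\ell = O(\alpha/\ell) = O(\log n)$ holds w.h.p.\ by \cref{ldp:degen} of \Cref{thm:ldp} together with $\degen\le 2\alpha$ and the bound $\lambda\le\degen$ from \cref{eq:lambda-degen}. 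Then \Cref{thm:dist_arb} colors $G_i$ with $O(t b_i^2)$ colors in $O(\log_t n + \log^{\star}n)$ rounds, and using disjoint palettes the overall number of colors is $\ell\cdot O(tb_i^2)$. When $\alpha=\Omega(\log n)$ this is $\Theta(\alpha/\log n)\cdot O(t\log^2 n) = O(t\alpha\log n)$; when $\alpha = O(\log n)$ we have $\ell = O(1)$ and it is $O(t\alpha^2) = O(t\alpha\log n)$. Together with the single partitioning round, the total round complexity is $O(\log_t n + \log^{\star}n)$, as claimed.

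The step requiring care is verifying the hypothesis ``$2 < t \le O(\sqrt{n'/b})$'' of \Cref{thm:dist_arb} at the moment we invoke it on a block. A block $G_i$ typically has only $n_i = \Theta(n/\ell) = \Theta(n\log n/\alpha)$ vertices, so a literal application would demand $t \le O(\sqrt{n_i/b_i}) = O(\sqrt{n/\alpha})$, which for large $\alpha$ is strictly stronger than the hypothesis $t \le O(\sqrt{n/\log n})$ we are granted. The resolution is that every node knows the global count $n$, so we run the Barenboim--Elkin algorithm on $G_i$ inside the ambient universe of $n$ vertices---equivalently, pad $G_i$ with $n - n_i$ isolated vertices, which cost no communication, do not change $\alpha(G_i)$, and hence do not change the $O(tb_i^2)$ color count---whereupon the admissible range becomes exactly $t \le O(\sqrt{n/b_i}) = O(\sqrt{n/\log n})$ and the round bound is $O(\log_t n + \log^{\star} n)$ irrespective of $n_i$. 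The remaining items---the LDP guarantees holding with probability $1 - 1/\poly(n)$, disposing of the trivial case $\ell = 1$ by running \Cref{thm:dist_arb} directly on $G$, and turning a constant-factor estimate of $\alpha$ into a usable guess $k$ with $k = \Theta(\alpha)$ and $k \le 2\degen$---are routine.
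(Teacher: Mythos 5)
Your proof is correct and takes essentially the same approach as the paper: the corollary is obtained by substituting $t = n^{1/k}$ into \Cref{thm:color_dist}, and your argument simply unfolds the paper's proof of that theorem (random LDP via \Cref{thm:ldp} with $s = Cn\log n$, then \Cref{thm:dist_arb} on each block with disjoint palettes). If anything you are more careful than the paper on one subtle point---you explicitly justify applying \Cref{thm:dist_arb} with the ambient $n$ (rather than $|V(G_i)|$) by padding with isolated vertices, a step the paper performs implicitly.
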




\section{Lower Bounds} \label{sec:lb}

Can we improve the guarantees of our algorithms so that they use at most $\degen+1$ colors, rather than $\degen+o(\degen)$? After all, every graph $G$ does have a proper $(\degen(G)+1)$-coloring. The main message of this section is that answer is a strong ``No,'' at least in the data streaming and query models. If we insist on a coloring that good, we would incur the worst possible space or query complexity: $\Omega(n^2)$. In fact, this holds even if $\degen$ is known to the algorithm in advance. Moreover, all our streaming lower bounds hold even if the input stream consists of edge insertions alone.

Our lower bounds generalize to the problem of producing a $(\degen+\lambda)$-coloring. We show that this requires $\Omega(n^2/\lambda^2)$ space or query complexity. Such generalizations are based on the following Blow-Up Lemma.

\begin{definition} \label{def:blowup}
  Let $G$ be a graph and $\lambda$ a positive integer. The {\em blow-up graph} $G^\lambda$ is obtained by replacing each vertex of $G$ with a copy of the complete graph $K_\lambda$ and replacing each edge of $G$ with a complete bipartite graph between the copies of $K_\lambda$ at its endpoints. More succinctly, $G^\lambda$ is the lexicographical product $G[K_\lambda]$.
\end{definition}
  
\begin{lemma}[Blow-Up Lemma] \label{lem:blowup}
  For all graphs $G$ and positive integers $\lambda,c$, if $G$ has a $c$-clique, then $G^\lambda$ has a $(c\lambda)$-clique.
  Also, $\degen(G^\lambda) \le (\degen(G)+1)\lambda -1$.
\end{lemma}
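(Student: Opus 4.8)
The plan is to prove the two assertions separately, both by direct combinatorial arguments; neither requires anything beyond the setup already in the paper.

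\emph{Clique claim.} Suppose $G$ contains a clique on vertices $u_1,\dots,u_c$. Recall that in $G^\lambda = G[K_\lambda]$ each $u_i$ is replaced by a copy $Q_i$ of $K_\lambda$ on $\lambda$ fresh vertices, and each edge $\{u_i,u_j\}\in E(G)$ becomes a complete bipartite graph between $Q_i$ and $Q_j$. I would simply check that $W := V(Q_1)\cup\dots\cup V(Q_c)$ induces a complete graph in $G^\lambda$: two vertices lying in the same $Q_i$ are adjacent because $Q_i\cong K_\lambda$; two vertices in distinct $Q_i,Q_j$ are adjacent because $\{u_i,u_j\}\in E(G)$ (the $u_i$ form a clique), so all of $V(Q_i)\times V(Q_j)$ is present. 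Since $|W| = c\lambda$, this exhibits a $(c\lambda)$-clique.

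\emph{Degeneracy bound.} Here I would invoke the ordered-degree characterization of degeneracy, \Cref{lem:degen-odeg}. Fix a degeneracy ordering $\before$ of $G$, so that $\odeg_{G,\before}(u)\le\degen(G)$ for every $u\in V(G)$. Define an ordering $\before'$ of $V(G^\lambda)$ by first ordering the blocks $Q_u$ ($u\in V(G)$) according to the order $\before$ induces on $V(G)$, and then ordering the $\lambda$ vertices inside each block arbitrarily. Fix a vertex $w\in V(Q_u)$ and count $\odeg_{G^\lambda,\before'}(w)$, i.e.\ the neighbors of $w$ appearing after it under $\before'$. These are of two kinds: (i) the vertices of $Q_u$ that follow $w$ within the block, of which there are at most $\lambda-1$; and (ii) all vertices of every block $Q_{u'}$ with $\{u,u'\}\in E(G)$ and $u\before u'$. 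The number of such blocks $u'$ is exactly $\odeg_{G,\before}(u)\le\degen(G)$, and each contributes $\lambda$ vertices. Hence $\odeg_{G^\lambda,\before'}(w)\le(\lambda-1)+\degen(G)\,\lambda = (\degen(G)+1)\lambda-1$. As this holds for every $w$, the ``if'' direction of \Cref{lem:degen-odeg} yields $\degen(G^\lambda)\le(\degen(G)+1)\lambda-1$.

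\emph{Main obstacle.} Frankly there is no serious obstacle; both parts are routine once one adopts the blockwise viewpoint. The only point demanding care is step (ii): one must be sure that a vertex $w\in V(Q_u)$ has no later neighbors outside the blocks indexed by $\before$-successors of $u$ in $G$. This is precisely because $G^\lambda$ has edges between distinct blocks $Q_u,Q_{u'}$ if and only if $\{u,u'\}\in E(G)$, so blocks that are non-adjacent to $u$ in $G$, or that precede $u$ under $\before$, contribute nothing to the ordered degree of $w$.
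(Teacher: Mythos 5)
Your proof is correct and follows exactly the paper's approach: the clique claim by direct inspection of the blocks, and the degeneracy bound by lifting a degeneracy ordering of $G$ to a blockwise ordering of $G^\lambda$ (with arbitrary order inside each block) and appealing to \Cref{lem:degen-odeg}. The paper states this argument in one sentence; you have merely filled in the ordered-degree computation $(\lambda-1)+\degen(G)\lambda$, which is precisely what the authors intended.
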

\begin{proof}
  The claim about cliques is immediate. The bound on $\degen(G^\lambda)$ follows by taking a degeneracy ordering of $G$ and replacing each vertex $v$ by a list of vertices of the clique that replaces $v$ in $G^\lambda$, ordering vertices within the clique arbitrarily.
\end{proof}

Our lower bounds come in two flavors. The first address the hardness of distinguishing low-degeneracy graphs from high-chromatic-number graphs. This is encapsulated in the following abstract problem.
\begin{definition}[\gdct problem] \label{def:graph-dist}
  Consider two graph families: $\cG_1 := \cG_1(n,q,\lambda)$, consisting of $n$-vertex graphs with chromatic number $\chi \geq (q+1)\lambda$,
  and $\cG_2 := \cG_2(n,q,\lambda)$, consisting of  $n$-vertex graphs with $\degen \leq q\lambda - 1$.
  Then $\gdct(n,q,\lambda)$ is the problem of distinguishing $\cG_1$ from $\cG_2$;
  note that $\cG_1 \cap \cG_2 = \varnothing$. More precisely, given an input graph $G$ on $n$ vertices,
  the problem is to decide whether $G \in \cG_1$ or $G \in \cG_2$, with success probability at least $2/3$.
\end{definition}
We shall prove that \gdct is ``hard'' in the insertion-only streaming setting and in the query setting, thereby establishing that in these models it is hard to produce a $(\degen+\lambda)$-coloring. In fact, our proofs will show that it is just as hard to estimate the parameter $\degen$; this goes to show that the hardness of the coloring problem is not just because of the large output size.

Lower bounds of the above flavor raise the following question: since estimating $\degen$ itself is hard, does the coloring problem become easier if the value of $\degen(G)$ is given in advance, before the algorithm starts to read $G$? In fact, the $(\Delta+1)$-coloring algorithms by Assadi \etal~\cite{AssadiCK19} assume that $\Delta$ is known in advance. However, perhaps surprisingly, we prove a second flavor of lower bounds, showing that {\em a priori} knowledge of $\degen$ does not help and $(\degen+1)$-coloring (more generally, $(\degen+\lambda)$-coloring) remains a hard problem even under the strong assumption that $\degen$ is known in advance. 


\begin{figure*} \label{fig:diagram}
\centering
\begin{subfigure}{0.3\textwidth}
    \centering
  \includegraphics[height=2in, width=1.7in]{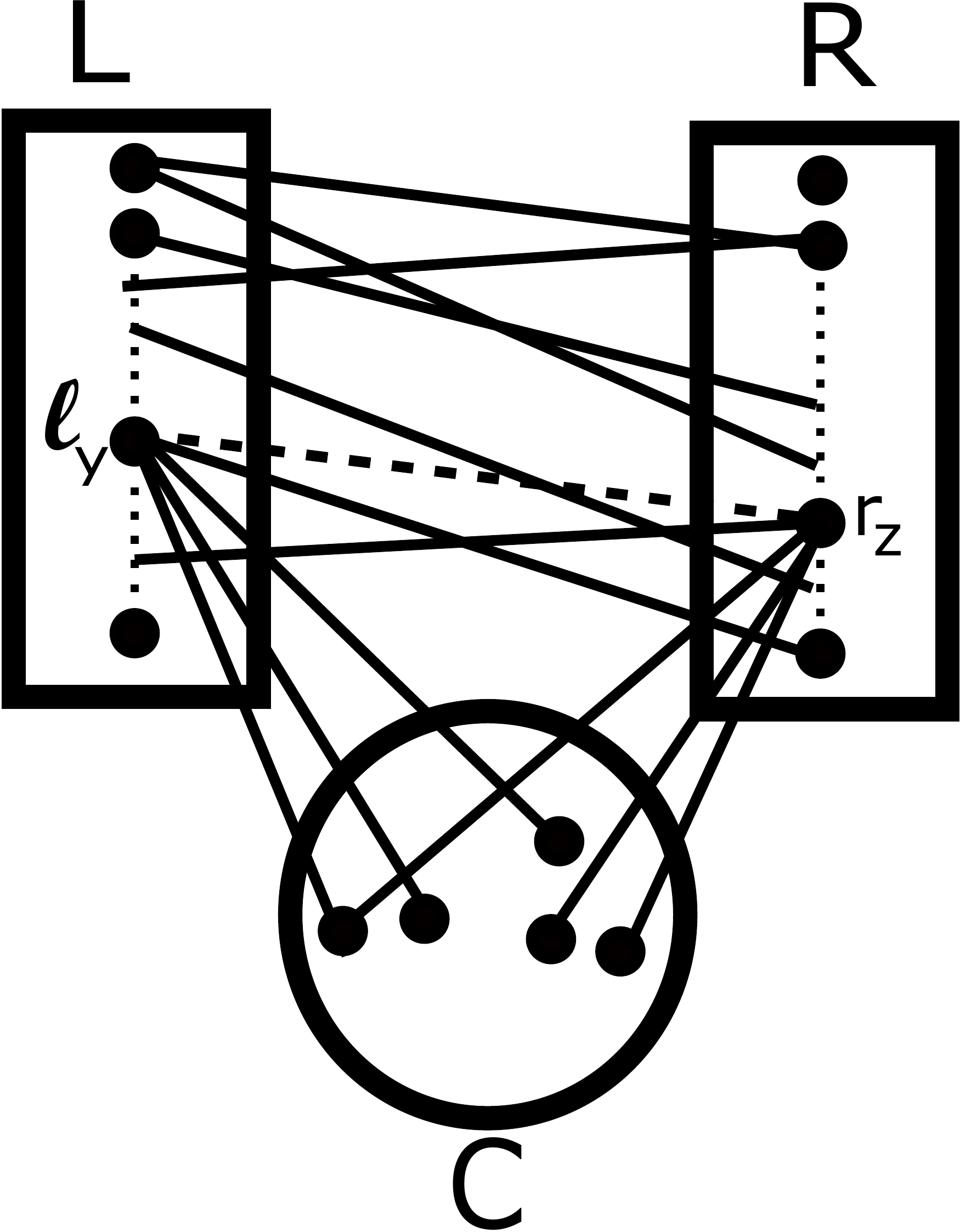}
  \caption{}\label{fig:1}
\end{subfigure}\hfill
\begin{subfigure}{0.3\textwidth}
    \centering
  \includegraphics[height=2.4in, width=1.5in]{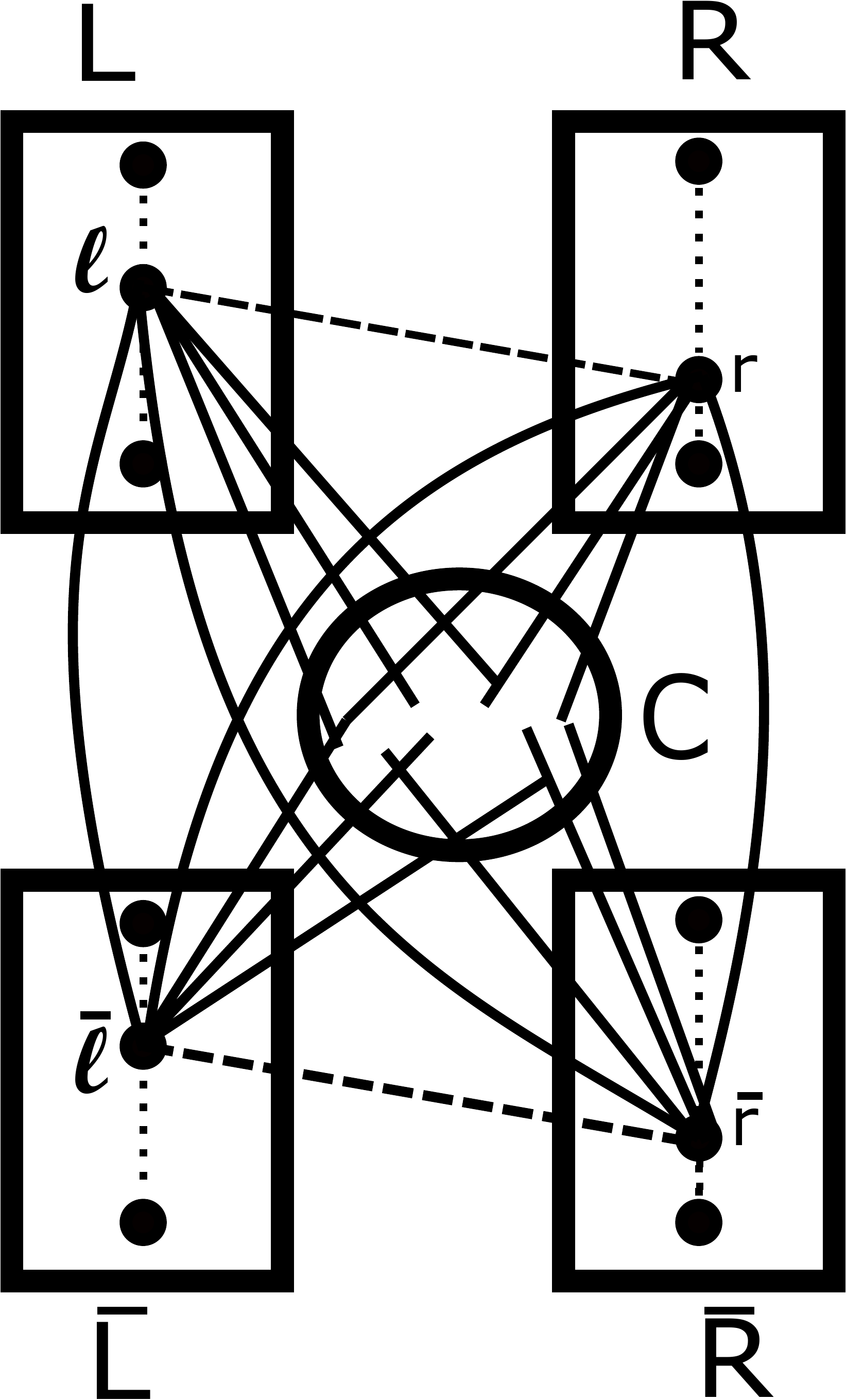}
  \caption{}\label{fig:2}
\end{subfigure}\hfill
\begin{subfigure}{0.3\textwidth}%
    \centering
  \includegraphics[height=1.5in, width=1.7in]{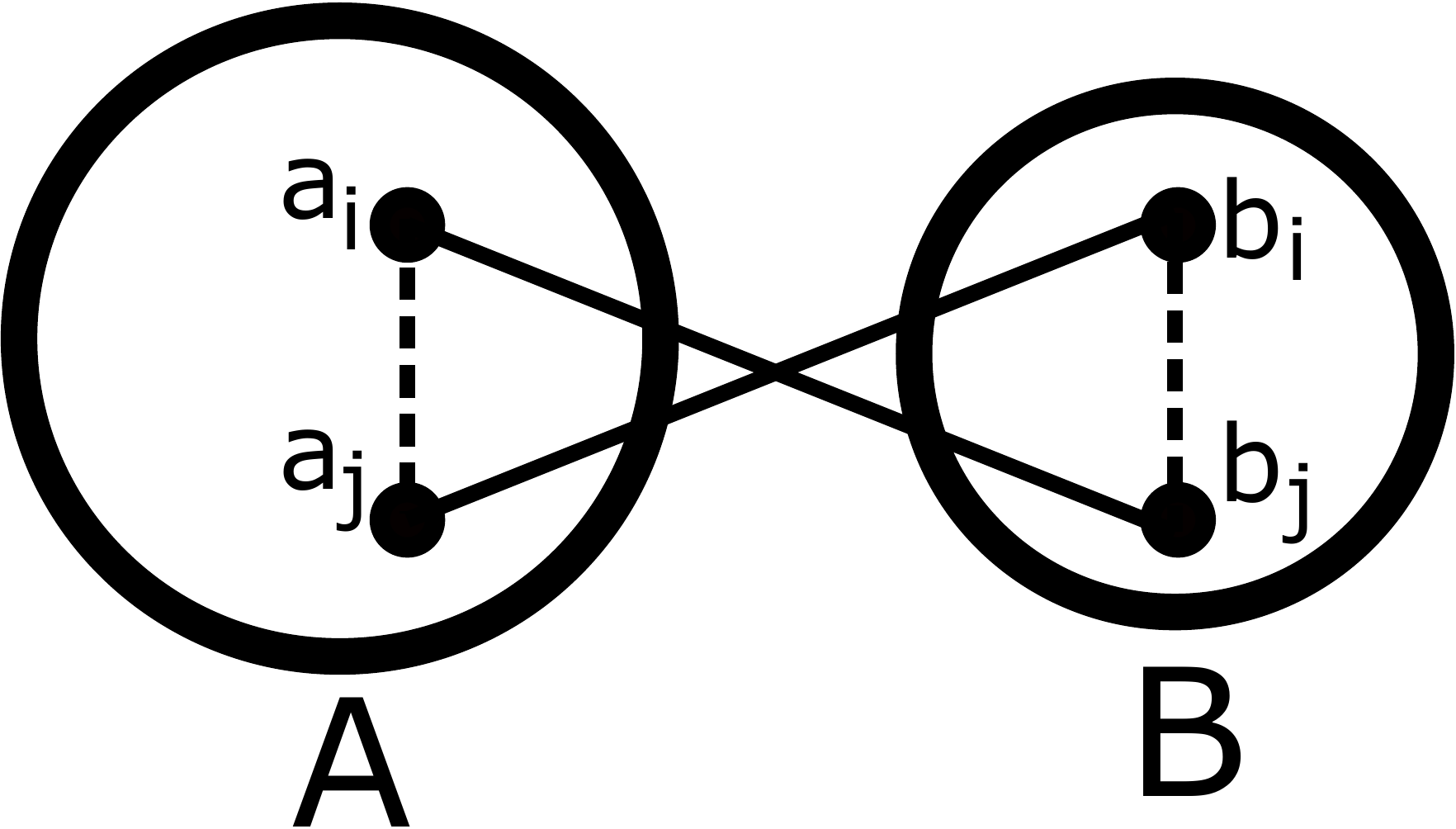}
  \caption{}\label{fig:3}
\end{subfigure}
\caption{Gadget graphs used in (a) \Cref{lem:streaming-lb-distinguish}; 
(b) \Cref{thm:streamcoloring-lb};
(c) \Cref{lem:query-lb-distinguish} and \Cref{thm:query-colorlb-lambda}.}
\end{figure*}

\subsection{Streaming Lower Bounds}

In this section, we prove both flavors of lower bounds in the one-pass streaming setting. The next section takes up the query model.

Our streaming lower bounds use reductions from the \indx and \intfind (intersection finding, a variant of \textsc{disjointness}) problems in communication complexity. In the $\indx_N$ problem, Alice is given a vector $\bx = (x_1, \ldots, x_N) \in \b^N$ and Bob is given an index $k\in[N]$. The goal is for Alice to send Bob a (possibly random) $c$-bit message that enables Bob to output $x_k$ with probability at least $2/3$. The smallest $c$ for which such a protocol exists is called the one-way randomized communication complexity, $\R^\to(\indx_N)$. In $\intfind_N$, Alice and Bob hold vectors $\bx,\by \in \{0,1\}^N$, interpreted as subsets of $[N]$, satisfying the promise that $|\bx\cap\by| = 1$. They must find the unique index $i$ where $x_i = y_i = 1$, using at most $c$ bits of randomized interactive communication, succeeding with probability at least $2/3$. The smallest $c$ for which such a protocol exists is the randomized communication complexity, $\R(\intfind_N)$.
As is well known, $\R^\to(\indx_N) = \Omega(N)$~\cite{Ablayev96} and $\R(\intfind_N) = \Omega(N)$; the latter is a simple extension of the \textsc{disjointness} lower bound~\cite{Razborov92}.

We shall in fact consider instances of $\indx_N$ where $N = p^2$, for an integer $p$. Using a canonical bijection between $[N]$ and $[p] \times [p]$, we reinterpret $\bx$ as a matrix with entries $(x_{ij})_{i,j \in [p]}$, and Bob's input as $(y,z) \in [p] \times [p]$. We further interpret this matrix $\bx$ as the bipartite adjacency matrix of a $(2p)$-vertex balanced bipartite graph $H_\bx$. Such graphs $H_\bx$ will be key gadgets in the reductions to follow.

\begin{definition} \label{def:gadget-graph}
  For $\bx \in \b^{p \times p}$, a realization of $H_\bx$ on a list $(\ell_1, \ldots, \ell_p, r_1, \ldots, r_p)$ of distinct vertices is a graph on these vertices whose edge set is $\{\{\ell_i, r_j\}:\, x_{ij} = 1\}$. 
\end{definition}

\mypar{First Flavor: Degeneracy Not Known in Advance} 
To prove lower bounds of the first flavor, we start by demonstrating the hardness
of the abstract problem $\gdct$, from \Cref{def:graph-dist}.
\begin{lemma} \label{lem:streaming-lb-distinguish}
  Solving $\gdct(n,q,\lambda)$ in one randomized streaming pass requires $\Omega(n^2/\lambda^2)$ space.
  
  More precisely, there is a constant $c > 0$ such that for every integer $\lambda \ge 1$ and every sufficiently
  large integer $q$, there is a setting $n = n(q,\lambda)$ for which every randomized one-pass streaming algorithm 
  for $\gdct(n,q,\lambda)$ requires at least $c n^2/\lambda^2$ bits of space.
\end{lemma}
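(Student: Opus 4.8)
The plan is a reduction from $\indx_N$ with $N = p^2$, where $p := q-1$, interpreting Alice's input as a gadget graph $H_{\bx}$ (cf.~\Cref{def:gadget-graph}) and lifting from the case $\lambda=1$ to general $\lambda$ via the Blow-Up Lemma (\Cref{lem:blowup}). I would first build a graph on $n_0 := 3(q-1)$ vertices that, depending on a single ``query edge'', either has a $(q+1)$-clique (hence lies in $\cG_1(n_0,q,1)$) or has degeneracy at most $q-1$ (hence lies in $\cG_2(n_0,q,1)$), and in which Alice can freely embed $\Theta(n_0^2)$ bits.

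\emph{The base gadget.} Take vertex classes $L=\{\ell_1,\dots,\ell_p\}$, $R=\{r_1,\dots,r_p\}$, $U=\{u_1,\dots,u_{q-1}\}$, so $p=q-1$ and $n_0=3p$. Alice, holding $\bx\in\b^{p\times p}$, contributes a realization of $H_{\bx}$ on $(\ell_1,\dots,\ell_p,r_1,\dots,r_p)$. Bob, holding an index reinterpreted as $(y,z)\in[p]\times[p]$, adds a clique on $U$ and joins every vertex of $U$ to both $\ell_y$ and $r_z$. Then $U\cup\{\ell_y,r_z\}$ is a $(q+1)$-clique exactly when the edge $\{\ell_y,r_z\}$ is present, i.e.\ when $x_{yz}=1$; so in that case $\chi\ge q+1$ and the graph is in $\cG_1(n_0,q,1)$. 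When $x_{yz}=0$, I claim $\degen\le q-1$: in any induced subgraph, a vertex $\ell_i$ with $i\ne y$ (symmetrically $r_j$ with $j\ne z$) has all its neighbors on the opposite side, hence degree at most $p=q-1$ and can be peeled off; once all such vertices are gone the remainder sits inside $U\cup\{\ell_y,r_z\}$, which induces $K_{q+1}$ with the edge $\{\ell_y,r_z\}$ deleted -- a $(q-1)$-degenerate graph. Hence the $x_{yz}=0$ graph is in $\cG_2(n_0,q,1)$. This is the combinatorial heart of the argument: a single query edge must toggle degeneracy $q-1$ versus chromatic number $q+1$, while the (potentially dense) bipartite part $H_{\bx}$ contributes nothing to the degeneracy; the choice $|L|=|R|=|U|=q-1$ is exactly what reconciles these demands.

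\emph{Lifting to general $\lambda$ and the reduction.} Blow up the gadget by $\lambda$ (\Cref{lem:blowup}): the $(q+1)$-clique becomes a $(q+1)\lambda$-clique, so $\chi\ge(q+1)\lambda$ in the $x_{yz}=1$ case, while $\degen\le q-1$ becomes $\degen\le((q-1)+1)\lambda-1=q\lambda-1$ in the $x_{yz}=0$ case; thus the blow-up lands in $\cG_1$ or $\cG_2$ of $\gdct(n,q,\lambda)$ accordingly, where $n:=n(q,\lambda)=3(q-1)\lambda$ (pad with isolated vertices if an exact count is wanted). Crucially, Alice's contribution to the blow-up depends only on $\bx$ (the blow-up of $H_{\bx}$, plus the internal $K_\lambda$'s inside the $L$- and $R$-copies), and Bob's depends only on $(y,z)$. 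Now, given a one-pass randomized streaming algorithm for $\gdct(n,q,\lambda)$ using $s$ bits, Alice feeds it her portion of the stream, sends the resulting $\le s$-bit memory state to Bob, Bob resumes the algorithm on his portion, and outputs ``$x_{yz}=1$'' iff the algorithm reports $\cG_1$. Since every constructed graph lies in $\cG_1\cup\cG_2$, this solves $\indx_{p^2}$ with the same success probability, so $s\ge\R^\to(\indx_{p^2})=\Omega(p^2)=\Omega(n^2/\lambda^2)$, with a constant independent of $q$ and $\lambda$. (One-pass streaming is exactly the one-way Alice$\to$Bob communication that $\indx$ needs, and every token above is an insertion, so the bound also holds in the insertion-only model; and since $\cG_1,\cG_2$ have disjoint degeneracy ranges, it holds even though $\degen$ is not revealed in advance.)

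\emph{Main obstacle.} The routine parts are the communication simulation and the parameter bookkeeping; the real work is designing the gadget and proving the degeneracy bound in the NO case, i.e.\ arguing that peeling off the ``cheap'' side-vertices of $H_{\bx}$ first leaves only a $K_{q+1}$-minus-an-edge core. Everything else -- the blow-up step and the reduction -- then follows mechanically.
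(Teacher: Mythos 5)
Your proposal is correct and is essentially the same proof as the paper's: the same gadget (Alice embeds $H_\bx$ on $L\cup R$, Bob adds a $p$-clique $C$ joined to $\ell_y$ and $r_z$ but not the edge $\{\ell_y,r_z\}$), the same size parameters $p=q-1$, $n=3\lambda p$, and the same lift via the Blow-Up Lemma. The only cosmetic difference is that you verify $\degen\le q-1$ in the NO case by a peeling argument, while the paper exhibits an explicit degeneracy ordering $L\cup R\setminus\{\ell_y,r_z\}\before \ell_y\before \{r_z\}\cup C$; the two are equivalent via \Cref{lem:degen-odeg}.
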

\begin{proof}
  Put $p = q-1$. We reduce from $\indx_N$, where $N = p^2$, using the following plan.
  Starting with an empty graph on $n = 3 \lambda p$ vertices, Alice adds certain edges based on her input
  $\bx \in \b^{p \times p}$ and then Bob adds certain other edges based on his
  input $(y,z) \in [p] \times [p]$. By design, solving $\gdct(n,q,\lambda)$ on the resulting final graph
  reveals the bit $x_{yz}$, implying that a one-pass streaming algorithm for $\gdct$
  requires at least $\R^\to(\indx_N) = \Omega(N) = \Omega(p^2) = \Omega(n^2/\lambda^2)$ bits of memory. The 
  details follow.
  
  We first consider $\lambda = 1$. We use the vertex set $L \uplus R \uplus C$
  (the notation ``$\uplus$'' denotes a disjoint union), where $L = \{\ell_1, \ldots, \ell_p\}$,
  $R = \{r_1, \ldots, r_p\}$, and $|C| = p$. Alice introduces
  the edges of the gadget graph $H_\bx$ (from \Cref{def:gadget-graph}), realized on the vertices
  $(\ell_1, \ldots, \ell_p, r_1, \ldots, r_p)$. Bob introduces all possible edges within $C \cup \{\ell_y, r_z\}$, except for $\{\ell_y, r_z\}$.
  Let $G$ be the resulting graph.
  
  If $x_{yz} = 1$, then $G$ contains a clique on $C \cup \{\ell_y, r_z\}$, whence $\chi(G) \ge p+2$.
  If, on the other hand, $x_{yz} = 0$, then we claim that $\degen(G) \le p$.
  By \Cref{lem:degen-odeg}, the claim will follow if we exhibit a vertex ordering $\before$
  such that $\odeg_{G,\before}(v) \le p$ for all $v \in V(G)$. We use an ordering where
  \[
    L \cup R \setminus \{\ell_y, r_z\} \before \ell_y \before \{r_z\} \cup C
  \]
  and the ordering within each set is arbitrary. By construction of $H_\bx$, each vertex
  in $L \cup R \setminus \{\ell_y, r_z\}$ has {\em total} degree at most $p$. For each
  vertex $v \in \{r_z\} \cup C$, we trivially have $\odeg_{G,\before}(v) \le p$ because
  $|C| = p$. Finally, since $x_{yz} = 0$, the vertex $r_z$ is not a neighbor of $\ell_y$;
  so $\odeg_{G,\before}(\ell_y) = |C| = p$. This proves the claim.
  
  When $\lambda \ge 1$, Alice and Bob introduce edges so as to create the blow-up 
  graph $G^\lambda$, as in \Cref{def:blowup}. By \Cref{lem:blowup}, if $x_{yz} = 1$,
  then $G^\lambda$ has a $(p+2)\lambda$-clique, whereas if $x_{yz} = 0$, then
  $\degen(G^\lambda) \le (p+1)\lambda - 1$. In the former case, $\chi(G^\lambda) \ge
  (p+2)\lambda = (q+1)\lambda$, so that $G^\lambda \in \cG_1(n,q,\lambda)$; cf.~\Cref{def:graph-dist}. In the
  latter case, $\degen(G^\lambda) \le q\lambda - 1$, so that $G^\lambda \in \cG_2(n,q,\lambda)$.
  Thus, solving $\gdct(n,q,\lambda)$ on $G^\lambda$ reveals $x_{yz}$.
\end{proof}

Our coloring lower bounds are straightforward consequences of the above lemma.

\begin{theorem} \label{thm:stream-lb}
  Given a single randomized pass over a stream of edges of an $n$-vertex graph $G$, succeeding with probability at least $2/3$ at either of the following tasks requires $\Omega(n^2/\lambda^2)$ space, where $\lambda \ge 1$ is an integer parameter:
\begin{enumerate}[label=(\roman*), itemsep=1pt]
    \item \label{task:color-stream} produce a proper $(\degen+\lambda)$-coloring of $G$;
    \item \label{task:estim-stream} produce an estimate $\hat{\degen}$ such that $|\hat{\degen} - \degen| \le  \lambda$.
\end{enumerate}
Furthermore, if we require $\lambda = O\bigl(\degen^{\frac12-\gamma}\bigr)$, where $\gamma > 0$, then neither task admits a semi-streaming algorithm.
\end{theorem}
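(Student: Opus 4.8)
Both statements are intended to be quick corollaries of \Cref{lem:streaming-lb-distinguish}: each task induces an algorithm for the \gdct problem for a suitable choice of its parameters. For the coloring task, the plan is to reduce from $\gdct(n,q,\lambda)$ directly. Run the purported $(\degen+\lambda)$-coloring algorithm (which is well-posed on every input, since a proper $(\degen+1)$-coloring always exists by \Cref{lem:degen-color}) and count the distinct colors it outputs. If $G\in\cG_2(n,q,\lambda)$ then $\degen(G)\le q\lambda-1$, so a proper $(\degen+\lambda)$-coloring uses at most $(q+1)\lambda-1$ colors; if $G\in\cG_1(n,q,\lambda)$ then $\chi(G)\ge(q+1)\lambda$, so \emph{every} proper coloring uses at least $(q+1)\lambda$ colors. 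Hence thresholding the color count at $(q+1)\lambda-1$ distinguishes $\cG_1$ from $\cG_2$ with the same success probability as the coloring algorithm, and \Cref{lem:streaming-lb-distinguish} yields the claimed $\Omega(n^2/\lambda^2)$ space lower bound.

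For the estimation task, the one point needing care is that a $\pm\lambda$ estimate of $\degen$ can only separate two graph families whose degeneracies differ by strictly more than $2\lambda$, whereas $\cG_1(n,q,\lambda)$ and $\cG_2(n,q,\lambda)$ differ by only about $\lambda$. I would therefore reduce instead from $\gdct(n,q,\lambda')$ with $\lambda':=2\lambda+1$: there $\cG_2$ has $\degen\le q\lambda'-1$, while $\cG_1$ has $\chi\ge(q+1)\lambda'$ and hence $\degen\ge(q+1)\lambda'-1=(q\lambda'-1)+(2\lambda+1)$, a gap of $2\lambda+1>2\lambda$ over the $\cG_2$ bound. Thus a $\pm\lambda$ estimator, followed by thresholding, distinguishes the two families, and since $\gdct(n,q,\lambda')$ needs $\Omega(n^2/\lambda'^2)=\Omega(n^2/\lambda^2)$ space, the bound for the estimation task follows as well. (The same argument also shows it is hard to estimate $\degen$ even when the output-size objection is removed, as noted in the introduction.)

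For the ``furthermore'' clause, the key quantitative observation is that in all the instances produced by these reductions the degeneracy satisfies $\degen=\Theta(n)$: the base graph (before the $\lambda$- or $\lambda'$-blow-up) has $3p$ vertices and contains a clique of size $\Theta(p)$ (either the $(p+2)$-clique or the clique on $C$), so its degeneracy is $\Theta(p)=\Theta(n)$, and applying \Cref{lem:blowup} multiplies both $n$ and $\degen$ by the blow-up factor. Consequently, if a semi-streaming (i.e.\ $O(n\polylog n)$-space) algorithm solved either task, the $\Omega(n^2/\lambda^2)$ bound would force $\lambda=\Omega(\sqrt n/\polylog n)=\Omega(\degen^{1/2}/\polylog\degen)$; but $\lambda=O(\degen^{1/2-\gamma})$ is $o(\degen^{1/2}/\polylog\degen)$ for every fixed $\gamma>0$, a contradiction for large $\degen$. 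Since in these instances $\degen=\Theta(\lambda q)$, taking $q$ (hence $n$, hence $\degen$) large keeps the family nonempty while respecting the hypothesis $\lambda=O(\degen^{1/2-\gamma})$, so the contradiction is realized. The only mildly subtle step in the whole argument is the gap-widening in the estimation reduction; everything else is bookkeeping on top of \Cref{lem:streaming-lb-distinguish}.
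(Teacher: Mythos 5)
Your proof is correct and follows the paper's own reduction-from-$\gdct$ strategy; the paper's proof simply asserts that each task \emph{immediately solves $\gdct$ with appropriate parameters} and invokes \Cref{lem:streaming-lb-distinguish}, together with the observation that the $\cG_2$ instances have $\degen=\Theta(n)$ for the ``furthermore'' clause. Your color-counting threshold and the $\degen=\Theta(n)$ bookkeeping match the paper's intent exactly, and your gap-widening for the estimation task (reducing from $\gdct(n,q,2\lambda+1)$ because the degeneracy gap of exactly $\lambda$ between $\cG_1(n,q,\lambda)$ and $\cG_2(n,q,\lambda)$ is too small to be separated by a $\pm\lambda$ estimator) is a genuinely necessary detail that the paper leaves implicit in the phrase ``appropriate parameters''; it is correct since $\Omega(n^2/(2\lambda+1)^2)=\Omega(n^2/\lambda^2)$.
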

\begin{proof}
  An algorithm for either task~\ref{task:color-stream} and or task~\ref{task:estim-stream} immediately solves $\gdct$ with appropriate parameters, implying the $\Omega(n^2/\lambda^2)$ bounds, thanks to \Cref{lem:streaming-lb-distinguish}. For the ``furthermore'' statement, note that the graphs in the family $\cG_2$ constructed in the proof of \Cref{lem:streaming-lb-distinguish} have $\degen = \Theta(n)$, so performing either task with the stated guarantee on $\lambda$ would require $\Omega(n^{1+2\gamma})$ space, which is not in $\tO(n)$.
\end{proof}

Combining the above result with the algorithmic result in \Cref{thm:color_kappa}, we see that producing a $(\degen + o(\degen))$-coloring is possible in semi-streaming space whereas producing a $(\degen + O\bigl(\degen^{\frac12-\gamma}\bigr))$-coloring is not. We leave open the question of whether this gap can be tightened.

%

\mypar{Second Flavor: Degeneracy Known in Advance}
We now show that the coloring problem remains just as hard even if 
the algorithm knows the degeneracy of the graph before seeing the edge stream.

\begin{theorem} \label{thm:streamcoloring-lb}
  Given as input an integer $\degen$, followed by a stream of edges of an $n$-vertex
  graph $G$ with degeneracy $\degen$, a randomized one-pass algorithm that 
  produces a proper $(\degen+\lambda)$-coloring of $G$ requires $\Omega(n^2/\lambda^2)$ bits of space.
  Furthermore, if we require $\lambda = O\bigl(\degen^{\frac12-\gamma}\bigr)$, where $\gamma > 0$, then the task does not admit a semi-streaming algorithm.

\end{theorem}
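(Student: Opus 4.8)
The plan is to reduce from the intersection-finding problem $\intfind$, for which $\R(\intfind_N) = \Omega(N)$: a one-pass streaming algorithm using $s$ bits of space yields, by letting Alice run it on the first part of the stream and pass her $s$-bit memory state to Bob who then finishes it, a one-way (hence interactive) protocol of cost $s$, so $s = \Omega(N)$. Note also that the reduction will only ever \emph{insert} edges, so the bound will hold for insertion-only streams. I will first prove the base case $\lambda = 1$, obtaining an $\Omega(n^2)$ bound, and then lift to general $\lambda$ using the Blow-Up Lemma.

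For $\lambda = 1$, identify the ground set $[N]$ with $[p]\times[p]$, so that Alice holds $A$ and Bob holds $B \subseteq [p]\times[p]$ with $A\cap B = \{(i^\star,j^\star)\}$, and they must output $(i^\star,j^\star)$. On the vertex set $L\uplus R$ with $L=\{\ell_1,\dots,\ell_p\}$, $R=\{r_1,\dots,r_p\}$ (so $n=2p$), Alice streams all edges of the clique on $L$, all edges of the clique on $R$, and every edge $\{\ell_i,r_j\}$ with $(i,j)\notin A$; then Bob streams every edge $\{\ell_i,r_j\}$ with $(i,j)\notin B$. The resulting graph is exactly $K_{2p}$ with the single edge $\{\ell_{i^\star},r_{j^\star}\}$ removed, since that is the only pair omitted by both parties. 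Its degeneracy is exactly $2p-2$ (delete either endpoint of the missing edge to obtain $K_{2p-1}$), a quantity depending only on $n$, so both players know it in advance and the promise ``$\degen(G)=\degen$'' is honored. The key point is then: a proper coloring of this $n$-vertex graph using at most $\degen+1 = 2p-1$ colors must make two vertices share a color, and since the only non-edge is $\{\ell_{i^\star},r_{j^\star}\}$, that monochromatic pair is exactly $\{\ell_{i^\star},r_{j^\star}\}$. Bob reads it off the coloring produced by the algorithm and recovers $(i^\star,j^\star)$, so $s = \Omega(\R(\intfind_{p^2})) = \Omega(p^2) = \Omega(n^2)$.

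For general $\lambda$, I would compose with the blow-up of \Cref{def:blowup}: replacing each vertex by $K_\lambda$ turns the above graph into $G^\lambda$ on $n' := 2p\lambda$ vertices, which is $K_{2p\lambda}$ with a complete bipartite graph $K_{\lambda,\lambda}$ removed between the two $\lambda$-cliques replacing $\ell_{i^\star}$ and $r_{j^\star}$. By \Cref{lem:blowup}, together with a direct check of the matching lower bound on degeneracy, $\degen(G^\lambda) = (2p-1)\lambda - 1$, still a function of $n'$ alone. Alice streams the blow-ups of her edges and Bob those of his, so the one-way reduction is unchanged. Since the complement of $G^\lambda$ is a $K_{\lambda,\lambda}$ plus isolated vertices, every color class of $G^\lambda$ has size at most $2$, so a palette of $\degen(G^\lambda)+\lambda = 2p\lambda - 1$ colors on $2p\lambda$ vertices forces at least one monochromatic pair, and every such pair is a (copy of $\ell_{i^\star}$, copy of $r_{j^\star}$) pair because those copies span the only non-edges. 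Bob again recovers $(i^\star,j^\star)$, forcing $s = \Omega(p^2) = \Omega(n'^2/\lambda^2)$. The ``furthermore'' claim is then immediate: here $\degen = \Theta(n)$, so a guarantee $\lambda = O(\degen^{1/2-\gamma}) = O(n^{1/2-\gamma})$ makes $n^2/\lambda^2 = \Omega(n^{1+2\gamma}) = \omega(n\,\polylog n)$, which rules out a semi-streaming algorithm.

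The step I expect to be the crux is engineering the input so that its degeneracy is \emph{exactly} the value known to the algorithm in all cases. A reduction from $\indx$ would be natural but fails here: Bob's index-based edges leave either zero or one edge of $K_{2p}$ absent, so the degeneracy would jump between $2p-1$ and $2p-2$, and no single correct value of $\degen$ could be handed to the algorithm up front. It is precisely the promise $|A\cap B| = 1$ in $\intfind$ that pins the number of missing edges — hence the degeneracy — to one value regardless of the inputs, which is why this theorem, unlike the others, draws on a disjointness-type source problem. The remaining ingredients — the degeneracy computations for $K_m$ minus an edge and for $K_m$ minus a $K_{\lambda,\lambda}$, and the observation that every monochromatic pair in an admissible coloring is informative — are routine verifications.
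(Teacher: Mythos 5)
Your construction is essentially the one the paper uses for a \emph{different} theorem (\Cref{thm:stream_multipass_lb}, the multi-pass bound), and the reason the paper keeps the two separate is exactly the issue your reduction glosses over: your stream contains repeated edges. When $(i,j)\notin A\cup B$, \emph{both} Alice and Bob emit $\{\ell_i,r_j\}$, so what reaches the streaming algorithm is a multigraph stream, not ``a stream of edges of an $n$-vertex graph $G$'' as \Cref{thm:streamcoloring-lb} promises. Since Alice cannot see $B$ (and it is a one-way reduction), she cannot suppress the edges Bob will also emit, so the duplication is unavoidable with this choice of source problem. A lower bound proved only for multigraph streams does not automatically transfer to algorithms that are only guaranteed to be correct on simple-graph streams; the paper acknowledges this by stating \Cref{thm:stream_multipass_lb} explicitly for multigraphs ``with each edge appearing at most twice'' and proving \Cref{thm:streamcoloring-lb} a different way.

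Your claim that a reduction from $\indx$ ``fails here'' because the degeneracy would jump between two values is exactly the obstacle the paper engineers around, and this is what the five-part gadget $L\uplus R\uplus \Lbar\uplus \Rbar\uplus C$ is for: by realizing $H_\bx$ on $L\cup R$ and $H_{\overline{\bx}}$ on $\Lbar\cup\Rbar$, precisely one of the pairs $\{\ell,r\},\{\lbar,\rbar\}$ is a non-edge inside the Bob-clique $S$ for either value of $x_{yz}$, so $\degen(G)=p+2$ regardless of the instance and the value can be handed to the algorithm up front. Crucially, Alice's edges live on $L\times R$ and $\Lbar\times\Rbar$ while Bob's live inside $S$, so the two edge sets are structurally disjoint and no edge is repeated. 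Everything else in your write-up --- the pigeonhole argument, the $K_{2p}$-minus-an-edge degeneracy computation, the blow-up analysis with $\degen(G^\lambda)=(2p-1)\lambda-1$, and the ``furthermore'' deduction from $\degen=\Theta(n)$ --- is correct; the gap is solely in producing a stream that honors the simple-graph promise.
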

\begin{proof}
  We reduce from $\indx_N$, where $N = p^2$, using a plan analogous to the one 
  used in proving \Cref{lem:streaming-lb-distinguish}. Alice and Bob will construct a
  graph on $n = 5\lambda p$ vertices, using their respective inputs $\bx \in \b^{p\times p}$
  and $(y,z) \in [p]\times [p]$.

  First, we consider the case $\lambda = 1$. We use the vertex set $L \uplus R \uplus \Lbar \uplus \Rbar \uplus C$,
  where $L = \{\ell_1, \ldots \ell_p\}$, $R = \{r_1, \ldots, r_p\}$, $\Lbar = \{\lbar_1, \ldots, \lbar_p\}$, $\Rbar = \{\rbar_1, \ldots, \rbar_p\}$,
  and $|C| = p$. Let $\overline{\bx}$ be the bitwise complement of $\bx$.
  Alice introduces the edges of the gadget graph $H_\bx$ (from \Cref{def:gadget-graph}), realized on
  $L \cup R$, and the edges of $H_{\overline{\bx}}$ realized on $\Lbar \cup \Rbar$.
  For ease of notation, put $\ell := \ell_y$, $r := r_z$, $\lbar := \lbar_y$, $\rbar := \rbar_z$,
  and $S := C \cup \{\ell, r, \lbar, \rbar\}$.
  Bob introduces all possible edges within $S$, except for $\{\ell,r\}$
  and $\{\lbar,\rbar\}$. Let $G$ be the resulting graph.
  
  We claim that the degeneracy $\degen(G) = p+2$. To prove this, we consider the case $x_{yz} = 1$ (the other case, $x_{yz} = 0$, is symmetric). By construction, $G$ contains a clique on the $p+3$ vertices in $C \cup \{\ell, r, \lbar\}$; therefore, by definition of degeneracy, $\degen(G) \ge p+2$. To show that $\degen(G) \le p+2$, it will suffice to exhibit a vertex ordering $\before$ such that $\odeg_{G,\before}(v) \le p+2$ for all $v \in V(G)$. To this end, consider an ordering where
  \[
    V(G) \setminus S \before \lbar \before S \setminus \{\lbar\}
   \]
   and the ordering within each set is arbitrary. Each vertex $v \in V(G) \setminus S$ has $\odeg_{G,\before}(v) \le \deg(v) \le p$ and each vertex $v \in S \setminus \{\lbar\}$ has $\odeg_{G,\before}(v) \le \big|S \setminus \{\lbar\}\big| - 1 = p+2$. As for the vertex $\lbar$, since $\overline{x}_{yz} = 1 - x_{yz} = 0$, by the construction in \Cref{def:gadget-graph}, $\rbar$ is not a neighbor of $\lbar$; therefore, $\odeg_{G,\before}(\lbar) \le \big|S \setminus \{\lbar,\rbar\}\big| = p+2$.
   
   Let $\cA$ be a streaming algorithm that behaves as in the theorem statement. Recall that we are considering $\lambda=1$. Since $\degen(G) = p+2$ for every instance of $\indx_N$, Alice and Bob can simulate $\cA$ on their constructed graph $G$ by first feeding it the number $p+2$, then Alice's edges, and then Bob's. When $\cA$ succeeds, the coloring it outputs is a proper $(p+3)$-coloring; therefore it must repeat a color inside $S$, as $|S| = p+4$. But $S$ has exactly one pair of non-adjacent vertices: the pair $\{\ell,r\}$ if $x_{yz} = 0$, and the pair $\{\lbar,\rbar\}$ if $x_{yz}=1$. Thus, an examination of which two vertices in $S$ receive the same color reveals $x_{yz}$, solving the $\indx_N$ instance. It follows that $\cA$ must use at least $\R^\to(\indx_N) = \Omega(N) = \Omega(p^2)$ bits of space.

  Now consider an arbitrary $\lambda$. Alice and Bob proceed as above, except that they simulate $\cA$ on the blow-up graph $G^\lambda$. Since $G$ always has a $(p+3)$-clique and $\degen(G) = p+2$, the two halves of \Cref{lem:blowup} together imply $\degen(G^\lambda) = (p+3)\lambda - 1$. So, when $\cA$ succeeds, it properly colors $G^\lambda$ using at most $(p+4)\lambda -1$ colors. For each $A \subseteq V(G)$, abusing notation, let $A^\lambda$ denote its corresponding set of vertices in $G^\lambda$ (cf.~\Cref{def:blowup}). Since $|S^{\lambda}| = (p+4)\lambda$, there must be a color repetition within $S^\lambda$. Reasoning as above, this repetition must occur within $\{\ell,r\}^\lambda$ when $x_{yz} = 0$ and within $\{\lbar,\rbar\}^\lambda$ when $x_{yz} = 1$. Therefore, Bob can examine the coloring to solve $\indx_N$, showing that $\cA$ must use $\Omega(N) = \Omega(p^2) = \Omega(n^2/\lambda^2)$ space.

  The ``furthermore'' part follows by observing that $\degen(G^{\lambda}) = \Theta\bigl( |V(G^{\lambda})| \bigr)$.
\end{proof}

\mypar{Multiple Passes}
The streaming algorithm from \Cref{sec:streaming} is one-pass, as are the lower bounds proved above. Is the coloring problem any easier if we are allowed multiple passes over the edge stream? We now give a simple argument showing that, if we slightly generalize the problem, it stays just as hard using multiple ($O(1)$ many) passes.

The generalization is to allow some edges to be {\em repeated} in the stream. In other words, the input is a multigraph $\hat{G}$. Clearly, a coloring is proper for $\hat{G}$ iff it is proper for the underlying simple graph $G$, so the relevant algorithmic problem is to properly $(\degen+\lambda)$-color $G$, where $\degen := \degen(G)$. Note that our algorithm in \Cref{sec:streaming} does, in fact, solve this more general problem.

\begin{theorem} \label{thm:stream_multipass_lb}
  Given as input an integer $\degen$, followed by a stream of edges of an $n$-vertex
  multigraph $\hat{G}$ whose underlying simple graph has degeneracy $\degen$, a randomized $p$-pass algorithm that 
  produces a proper $(\degen+\lambda)$-coloring of $G$ requires $\Omega(n^2/(\lambda^2 p))$ bits of space. This holds even
  if the stream is insertion-only, with each edge appearing at most twice.
\end{theorem}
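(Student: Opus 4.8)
The plan is to reduce from $\intfind_N$ rather than $\indx_N$: the bound $\R(\intfind_N) = \Omega(N)$ holds against interactive protocols with \emph{any} number of rounds (it descends from the disjointness lower bound), whereas $\indx$ becomes trivial once two‑way communication is allowed, so the $\indx$‑based arguments of \Cref{lem:streaming-lb-distinguish} and \Cref{thm:streamcoloring-lb} do not survive multiple passes. The bridge to multi‑pass streaming is standard: a $p$‑pass algorithm using $S$ bits of space, run on a stream that is the concatenation of an ``Alice part'' followed by a ``Bob part'', induces a communication protocol in which the two players alternately exchange the algorithm's current memory state, using at most $2p-1$ messages of $S$ bits (with shared randomness), and with Bob holding the final state and hence the output. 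So it suffices to encode an $\intfind_N$ instance $(\bx,\by)$, $N = q^2$, as a multigraph $\hat G$ on $n = 2q\lambda$ vertices, built by an insertion‑only stream (Alice's edges, then Bob's edges) in which no edge appears more than twice, such that (i) the degeneracy $\degen$ of the underlying simple graph $G$ depends only on $q$ and $\lambda$ — so it may legitimately be revealed to the algorithm in advance — and (ii) any proper $(\degen+\lambda)$‑coloring of $G$ exposes the unique cell where $\bx$ and $\by$ intersect, letting Bob solve $\intfind_N$.

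For the gadget, first take $\lambda = 1$ and reinterpret $\bx,\by \in \b^{q\times q}$, with a unique cell $(i^\star, j^\star)$ where $x_{i^\star j^\star} = y_{i^\star j^\star} = 1$. On vertex sets $L = \{\ell_1,\dots,\ell_q\}$ and $R = \{r_1,\dots,r_q\}$, let $G$ be $K_{2q}$ on $L\cup R$ with the single edge $\{\ell_{i^\star}, r_{j^\star}\}$ deleted. To realize $G$ with the required split: Alice inserts all edges within $L$, all edges within $R$, and the edge $\{\ell_i, r_j\}$ for every $(i,j)$ with $x_{ij} = 0$; Bob inserts $\{\ell_i, r_j\}$ for every $(i,j)$ with $y_{ij} = 0$. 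Then $\{\ell_i, r_j\}$ is present iff $x_{ij} = 0$ or $y_{ij} = 0$, i.e.\ iff $(i,j) \ne (i^\star, j^\star)$, and it is inserted twice only when $x_{ij} = y_{ij} = 0$, so no edge appears more than twice. Now $K_{2q}$ minus an edge has minimum degree $2q-2$, and the ordering that places its two non‑adjacent vertices first certifies ordered degree $\le 2q-2$ for every vertex, so $\degen(G) = 2q-2$ independent of the input (\Cref{lem:degen-odeg}); moreover $G \supseteq K_{2q-1}$, so $\chi(G) \ge 2q-1 = \degen+1$. Hence a proper $(\degen+1)$‑coloring uses at most $2q-1$ colors on the $2q$ vertices of $G$ and so repeats a color; a repeated color occupies an independent pair, and $\{\ell_{i^\star}, r_{j^\star}\}$ is the only non‑adjacent pair, so Bob reads $(i^\star, j^\star)$ off the coloring.

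For general $\lambda$, replace $G$ by the blow‑up $G^\lambda$: the players now insert, in place of each edge $\{\ell_i, r_j\}$ above, the complete bipartite graph $K_{\lambda,\lambda}$ between the corresponding blocks, and Alice additionally inserts (once) all intra‑block clique edges and all edges of $L^\lambda$ and of $R^\lambda$; the at‑most‑twice property is preserved. By \Cref{lem:blowup}, $G^\lambda$ contains a $(2q-1)\lambda$‑clique and $\degen(G^\lambda) \le (2q-1)\lambda - 1$, and since $G^\lambda \supseteq K_{(2q-1)\lambda}$ we get $\degen(G^\lambda) = (2q-1)\lambda-1$, still a function of $q,\lambda$ only, and $\chi(G^\lambda) \ge (2q-1)\lambda$. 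So a proper $(\degen+\lambda)$‑coloring of $G^\lambda$ uses at most $2q\lambda-1$ colors on the $2q\lambda$ vertices of $(L\cup R)^\lambda$, forcing a repeated color whose vertices form an independent set; but the only independent sets of size $\ge 2$ in $(L\cup R)^\lambda$ consist of one vertex in the block $\{\ell_{i^\star}\}^\lambda$ and one in $\{r_{j^\star}\}^\lambda$, so Bob again recovers $(i^\star,j^\star)$. Thus the induced protocol solves $\intfind_N$, whence $(2p-1)S \ge \R(\intfind_N) = \Omega(q^2)$; since $q = n/(2\lambda)$, this yields $S = \Omega\!\left( n^2/(\lambda^2 p) \right)$.

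I expect the only delicate point to be the combinatorial bookkeeping around this gadget: checking that ``$K_{2q\lambda}$ minus a $K_{\lambda,\lambda}$'' has degeneracy \emph{exactly} $(2q-1)\lambda-1$ — so that this value can legitimately be handed to the algorithm up front — and \emph{exactly} one less than $2q\lambda-1$, the largest palette for which a proper coloring of $(L\cup R)^\lambda$ is still forced to collide on the single missing bipartite block. This is what ensures the $\lambda$ units of palette slack are absorbed with none to spare, so that the forced collision must land precisely where it reveals the $\intfind$ answer. The remaining ingredients — the pass‑to‑rounds simulation, the insertion‑only / at‑most‑twice accounting, and the input $\R(\intfind_N) = \Omega(N)$ — are routine.
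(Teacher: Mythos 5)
Your proof is correct and takes essentially the same approach as the paper's: reduce from $\intfind_N$ (whose randomized communication complexity is $\Omega(N)$ even with unbounded interaction), encode the instance as a multigraph whose underlying simple graph is a complete graph with exactly one edge deleted at the unique intersection point, observe that the degeneracy is therefore input-independent and may be announced up front, and note that any proper $(\degen+1)$-coloring must repeat a color precisely on the unique non-adjacent pair. The paper's gadget is marginally simpler---it indexes $\intfind$ by all $N=\binom{n}{2}$ pairs and has each player insert edge $\{i,j\}$ whenever their own bit is zero, so the simple graph is literally $K_n$ minus the intersection edge---whereas you restrict the $\intfind$ indexing to a $q\times q$ bipartite grid inside $K_{2q}$ and have Alice insert the intra-$L$ and intra-$R$ cliques separately; both variants give the same $\Omega(n^2/(\lambda^2 p))$ bound after the blow-up.
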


\begin{proof}
As usual, we prove this for $\lambda=1$ and appeal to the Blow-Up Lemma (\Cref{lem:blowup}) to generalize.

We reduce from $\intfind_N$, with $N=\binom{n}{2}$. Let Alice and Bob treat their inputs as $(x_{ij})_{1 \le i < j \le n}$ and $(y_{ij})_{1 \le i < j \le n}$ in some canonical way. Alice (resp.~Bob) converts their input into an edge stream consisting of pairs $(i,j)$ such that $i<j$ and $x_{ij}=0$ (resp.~$y_{ij}=0$). The concatenation of these streams defines the multigraph $\hat{G}$ given to the coloring algorithm. Let $(h,k)$ be the unique pair such that $x_{hk} = y_{hk} = 1$. Note that the underlying simple graph $G$ is $K_n$ minus the edge $\{h,k\}$. Therefore, $\degen = n-2$ and so, in a proper $(n-1)$-coloring of $\hat{G}$, there must be a repeated color and this can only happen at vertices $h$ and $k$.

Thus, a $p$-pass $(\degen+1)$-coloring algorithm using $s$ bits of space leads to a protocol for $\intfind_N$ using $(2p-1)s$ bits of communication. Therefore, $s = \Omega(N/p) = \Omega(n^2/p)$.
\end{proof}

\ifthenelse{\equal{\acmconf}{1}}
{

\subsection{Query Complexity Lower Bounds} \label{sec:query-lb}

In \Cref{sec:query-ub}, we gave a sublinear algorithm for 
producing a $(\degen+o(\degen))$-coloring in the general graph query model. 
We now prove lower bounds with the message that this can not be 
improved all the way to a $(\degen+1)$-coloring: that would preclude sublinear complexity. We also
give many generalizations of our lower bounds, similar in spirit to the
streaming lower bounds. Due to space limitation, we defer the
proofs of the theorems in this section to \Cref{app:query-lb-app}.
\footnote{For reader's convenience, we have duplicated \Cref{sec:query-lb}
in \Cref{app:query-lb-app} on its entirety; including the missing proofs of course.}

We first prove a lower bound for the $\gdct$ problem (see \Cref{def:graph-dist}).
This lower bound then serves as a building block for the coloring 
problem lower bound, as demonstrated in the previous section.
Due to space restrictions, we defer the proof of this lemma to
~\Cref{proof:query-lb-distinguish}.
\begin{lemma}
\label{lem:query-lb-distinguish-short}
Any randomized query algorithm that solves the \gdct($n,p,\lambda$) 
problem with success probability at least $2/3$, 
requires $\Omega(n^2/\lambda^2)$ many queries.
\end{lemma}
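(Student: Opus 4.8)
The plan is to replay the argument behind \Cref{lem:streaming-lb-distinguish}, but to reduce from a Boolean decision-tree primitive in place of the communication problem $\indx$, and to use the query-model gadget sketched in \Cref{fig:diagram}(c) rather than the streaming gadget. Concretely, I would reduce from $\uor_N$: given oracle access to a string $\bx \in \b^N$ promised to have Hamming weight at most $1$, decide whether $\bx = \bzero$. A routine adversary argument shows that any randomized algorithm solving $\uor_N$ with probability $2/3$ must make $\Omega(N)$ queries. Taking $N = \Theta(p^2)$ (so that an $n$-vertex graph with $n = \Theta(\lambda p)$ will give the bound $\Omega(N) = \Omega(n^2/\lambda^2)$) and reinterpreting $\bx$ as the bipartite adjacency matrix of a gadget graph $H_\bx$ as in \Cref{def:gadget-graph}, I would embed $H_\bx$ into a graph $G = G(\bx)$ on $n$ vertices, built from a fixed clique $C$ of size about $p$, the gadget vertices, and suitable connecting and ``padding'' edges.

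The correctness of this embedding would be established exactly as for the streaming gadget, via \Cref{lem:degen-odeg}. When $\bx = \bzero$, the graph $H_\bx$ is empty and one exhibits a vertex ordering of $G$ in which every vertex has ordered degree at most $p\lambda - 1$, so $\degen(G) \le p\lambda - 1$ and $G \in \cG_2$; when the unique $1$ of $\bx$ sits in the ``critical'' location, $C$ together with the two corresponding gadget endpoints forms a $(p+1)$-clique, so $\chi(G) \ge (p+1)\lambda$ and $G \in \cG_1$ (cf.\ \Cref{def:graph-dist}). The dependence on $\lambda$ is handled exactly as in \Cref{lem:streaming-lb-distinguish}: pass to the blow-up $G^\lambda$ and invoke the Blow-Up Lemma (\Cref{lem:blowup}), which multiplies the vertex count by $\lambda$, preserves the clique-versus-degeneracy dichotomy with the correct constants, and (crucially) lets any $\Pair$ or $\Nbr$ query to $G^\lambda$ be answered from a single $\Pair$ or $\Nbr$ query to $G$, incurring no extra queries.

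The key technical step, and the one I expect to be the main obstacle, is the query simulation: answering every $\Pair$ and every $\Nbr$ query to $G$ using only $O(1)$ queries to $\bx$. $\Pair$ queries are easy, since any pair internal to $C$, internal to a side class of the gadget, or crossing between $C$ and the gadget has a fixed answer, while a pair between the two sides of the gadget reads exactly one entry of $\bx$. The $\Nbr$ queries are the delicate point, and this is precisely where the query gadget must differ from the streaming one: a naive design would let the algorithm ``discover'' the planted edge simply by walking down the adjacency list of one gadget vertex, which would yield only an $\Omega(n)$ bound. The fix is to engineer the padding so that every gadget vertex has a \emph{fixed} degree and a \emph{fixed} adjacency-list prefix consisting solely of $\bx$-independent ``frame'' vertices; a $\Nbr$ query then either lands in this frame (and is answered with no $\bx$-queries), or lands beyond it, where the weight-at-most-$1$ promise pins down the answer after reading $O(1)$ bits of $\bx$. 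Equivalently, one may run an adversary argument directly on graph queries, maintaining the set of inputs still consistent with the answers given and showing that each query rules out only an $O(1/N)$ fraction of it. Checking that the added padding does not spoil the ordered-degree bound used in the $\bx = \bzero$ case is the part that calls for care.

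Assembling the pieces: a $T$-query randomized algorithm for $\gdct(n,p,\lambda)$ with success probability $2/3$ yields, through the simulation and the blow-up, a randomized decision-tree algorithm for $\uor_N$ that makes $O(T)$ queries and succeeds with probability $2/3$; since the latter requires $\Omega(N) = \Omega(p^2) = \Omega(n^2/\lambda^2)$ queries, so does $\gdct(n,p,\lambda)$.
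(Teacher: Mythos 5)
You correctly identify the reduction source ($\uor_N$), the target parameters ($N=\Theta(p^2)$, $n=\Theta(\lambda p)$), and, importantly, the real obstacle: $\Nbr$ queries can leak information about $\bx$ if the gadget is not degree-preserving. However, the specific ``fix'' you propose does not work, and this is not a detail—it is the entire content of the proof. In your construction, planting the single edge $\{\ell_i,r_j\}$ of $H_{\be_{ij}}$ necessarily \emph{raises} $\deg(\ell_i)$ and $\deg(r_j)$ by one relative to the $\bx=\bzero$ case. No amount of $\bx$-independent ``frame'' or padding can hide this: if the padding is independent of $\bx$ it contributes the same number of neighbors in both cases, so $\Nbr(\ell_i,\deg_{\bzero}(\ell_i)+1)$ still returns $\bot$ iff $\ell_i$ is not an endpoint of the hidden edge. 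An adversary therefore needs only $O(p)$ neighbor queries—one per gadget vertex—to decide $\uor_N$, killing the bound down to $\Omega(n)$. Likewise, ``removing one padding edge'' to compensate just relocates the degree leak to the other endpoint of the removed edge.

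The paper sidesteps this with a genuinely different gadget (\Cref{def:query-gadget}), not the $H_\bx$ of \Cref{def:gadget-graph}: on $A\uplus B$ with $|A|=p+1$, $|B|=p$, the base graph $H$ consists of two disjoint cliques, and $H_{ij}$ is obtained by \emph{simultaneously} deleting $\{a_i,a_j\},\{b_i,b_j\}$ and inserting $\{a_i,b_j\},\{a_j,b_i\}$. Each of the four touched vertices loses one neighbor and gains one, so every vertex has the same degree in $H$ and in every $H_{ij}$; consequently every $\Nbr$ query can be answered after reading exactly one bit $x_{ij}$, and $\Pair$ queries similarly. (The direction is also the reverse of yours: $\bx=\bzero$ gives $H$ with $\chi=p+1\in\cG_1$, and $\bx=\be_{ij}$ gives $H_{ij}$ with $\degen=p-1\in\cG_2$; see \Cref{lem:hij-degen}.) The blow-up step you describe is then carried out exactly as in the paper. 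In short: your outline is on the right track and correctly locates the crux, but the gadget you commit to cannot be patched as proposed; the degree-preserving edge-swap gadget is the missing and essential ingredient.
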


As an immediate consequence of \Cref{lem:query-lb-distinguish-short},
we get the following query lower bounds.
\begin{theorem}
Given a graph $G$ on $n$ vertices, and an integer parameter $\lambda \geq 1$,
any randomized query algorithm that, with probability at least $2/3$,
\begin{enumerate}[label=(\roman*),itemsep = 1pt]
    \item produces a $(\degen+\lambda)$-coloring of $G$,
            needs to make $\Omega(n^2/\lambda^2)$ queries.
    \item  outputs an estimate $\hat{\degen}$ for 
            $\degen(G)$ such that $|\hat{\degen} - \degen(G) | \leq  \lambda$,
            needs to make $\Omega(n^2/\lambda^2)$ queries.
\end{enumerate}
\end{theorem}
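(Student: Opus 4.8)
\medskip
\noindent\textbf{Proof plan.}
Both bounds should fall out as immediate corollaries of \Cref{lem:query-lb-distinguish-short}, via query-preserving reductions from $\gdct$ that mirror the streaming arguments behind \Cref{thm:stream-lb}. The plan is: run the hypothesized algorithm on a suitably parametrized $\gdct$ instance, then post-process its output (using our a~priori knowledge of the parameters $q$ and $\lambda$) into a correct $\gdct$ decision, so that a subquadratic query bound would contradict the lemma.

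For part~(i), suppose $\cA$ uses $o(n^2/\lambda^2)$ queries and, with probability at least $2/3$, outputs a proper $(\degen+\lambda)$-coloring. Given a $\gdct(n,q,\lambda)$ instance $G$ (cf.~\Cref{def:graph-dist}), I would run $\cA$ on $G$ and count the distinct colors it uses. If $G\in\cG_2$ then $\degen(G)\le q\lambda-1$, so on success $\cA$ uses at most $(q\lambda-1)+\lambda<(q+1)\lambda$ colors; if $G\in\cG_1$ then $\chi(G)\ge(q+1)\lambda$, so every proper coloring of $G$, in particular $\cA$'s output on success, uses at least $(q+1)\lambda$ colors. Thresholding the color count at $(q+1)\lambda$ thus solves $\gdct(n,q,\lambda)$ with probability at least $2/3$, contradicting \Cref{lem:query-lb-distinguish-short} and giving the $\Omega(n^2/\lambda^2)$ lower bound.

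For part~(ii), the one wrinkle is that a degeneracy estimate accurate to within $\pm\lambda$ only separates two graph families whose degeneracies differ by strictly more than $2\lambda$, whereas a $\gdct(n,q,\lambda)$ instance has a degeneracy gap of only $\lambda$ (using $\chi\le\degen+1$: the $\cG_1$ graphs have $\degen\ge(q+1)\lambda-1$, the $\cG_2$ graphs have $\degen\le q\lambda-1$). I would repair this by reducing instead from $\gdct(n,q,\lambda')$ with $\lambda'=2\lambda+1$ (any $\lambda'=\Theta(\lambda)$ with $\lambda'>2\lambda$ works): then the $\cG_1$ degeneracy is at least $q\lambda'+\lambda'-1$ and the $\cG_2$ degeneracy is at most $q\lambda'-1$, so an estimate $\hat\degen$ with $|\hat\degen-\degen|\le\lambda$ can be thresholded at $q\lambda'-1+\lambda$ to decide $\gdct(n,q,\lambda')$ correctly whenever the estimator succeeds. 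By \Cref{lem:query-lb-distinguish-short} this needs $\Omega(n^2/\lambda'^2)=\Omega(n^2/\lambda^2)$ queries.

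All of the real work is inside \Cref{lem:query-lb-distinguish-short}; the only things to get right in this theorem are the two threshold computations and the choice of $\lambda'$ in part~(ii), neither of which is an actual obstacle.
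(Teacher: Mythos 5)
Your proposal is correct and takes the same route as the paper, which states this theorem as an immediate consequence of \Cref{lem:query-lb-distinguish} and leaves the details implicit (just as the analogous streaming theorem defers to ``$\gdct$ with appropriate parameters''). Your handling of part~(ii) is a genuine refinement of that terse phrase: you correctly observe that the degeneracy gap between $\cG_1$ and $\cG_2$ in a $\gdct(n,q,\lambda)$ instance is only $\lambda$ (since $\cG_1$ gives $\degen \ge (q+1)\lambda-1$ via $\chi\le\degen+1$, while $\cG_2$ gives $\degen \le q\lambda-1$), which is too narrow for a $\pm\lambda$ estimator to separate, and that instantiating the lemma with $\lambda'=2\lambda+1$ (or any $\lambda'>2\lambda$ with $\lambda'=\Theta(\lambda)$) repairs the reduction without weakening the bound.
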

This $(\degen+\lambda)$-coloring lower bound, however, fails 
if $\degen$ is assumed to be known {\em a priori}.
So, we next address the question of whether knowing $\degen$ in advance
helps in producing a $(\degen+1)$-coloring in the query model. 
We show that the story here is same as that of the streaming model (see discussions
before \Cref{thm:streamcoloring-lb}); that is, any algorithm 
that receives a graph $G$ and $\degen(G)$ as input,
and produces a $(\degen+c)$-coloring of $G$ as output,
for any constant $c$, must make $\Omega(n^2)$ many queries. 
In fact, we prove a more general lower bound, as captured
by the following theorem. Again,
we defer the proof of this theorem to \Cref{proof:query-colorlb-lambda}.
\begin{theorem} \label{thm:query-colorlb-lambda-short}
  Suppose that a randomized algorithm in the general graph 
  query model reads an $n$-vertex input graph $G$ and 
  its degeneracy $\degen(G)$, and 
  with probability at least $2/3$, 
  produces a proper coloring of $G$ using at most $\degen(G)+\lambda$ colors. 
  Then the algorithm must make $\Omega(n^2/\lambda^2)$ queries.
\end{theorem}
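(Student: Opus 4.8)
\emph{Proof plan.} The approach mirrors the streaming lower bound \Cref{thm:streamcoloring-lb}: first handle $\lambda = 1$ by a direct reduction, then lift to general $\lambda$ by running the same argument on a blow-up graph and invoking \Cref{lem:blowup}. Since a query algorithm is not a one-way communication protocol, the reduction cannot be from $\indx$; instead I would reduce from a search problem in Boolean decision-tree complexity, concretely from $\nih$ on an $N$-bit input --- the problem of locating the unique $1$ in the string --- whose randomized query complexity is $\Omega(N)$ (a standard fact, essentially equivalent to the hardness of $\uor$). The parameters will be chosen so that $N = \Theta(n^2)$ when $\lambda = 1$; passing to the blow-up multiplies the vertex count by $\lambda$ while leaving $N$ fixed, which produces the claimed $\Omega(n^2/\lambda^2)$.

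The core is the gadget construction. Given a hidden string $\bx$ with a unique $1$ at coordinate $k^\star$, I would build an $n$-vertex graph $G_\bx$ out of a fixed ``frame'' of deterministic edges together with a sparse ``hidden layer'' of edges governed by $\bx$, designed so that: (i) the degeneracy $\degen(G_\bx)$ equals a fixed value $\degen_0$ that does not depend on $\bx$ --- proved exactly as in \Cref{thm:streamcoloring-lb}, by exhibiting an ordering $\before$ witnessing $\odeg_{G_\bx,\before}(v) \le \degen_0$ for all $v$ via \Cref{lem:degen-odeg}, together with a clique of size $\degen_0+1$ for the matching lower bound; and (ii) there is a distinguished vertex set $S$ with $|S| = \degen_0 + \lambda + 1$ such that $G_\bx[S]$ is a complete graph minus exactly one edge, the location of that non-edge being a fixed, easily computed function of $k^\star$. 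Then any proper $(\degen_0+\lambda)$-coloring of $G_\bx$, restricted to $S$, is forced by pigeonhole to make two vertices of $S$ monochromatic; since $G_\bx[S]$ has a unique non-adjacent pair, that monochromatic pair must be exactly the one encoding $k^\star$. Hence examining the color classes reveals $k^\star$, so a coloring algorithm solves $\nih$. Because $\degen_0$ is independent of $\bx$, it can be supplied to the algorithm up front, which is precisely the setting of the theorem.

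It remains to show that every query the coloring algorithm issues to $G_\bx$ can be answered after reading only $O(1)$ bits of $\bx$, so that a $q$-query coloring algorithm becomes an $O(q)$-query algorithm for $\nih$, giving $q = \Omega(N) = \Omega(n^2/\lambda^2)$; this is the step I expect to be the main obstacle. A $\Pair$ query is immediate --- frame edges are known for free, and each potential hidden edge is a single bit of $\bx$. Neighbor queries are the delicate case: for a carelessly chosen gadget (say, a fixed clique joined to two independent sets with the hidden layer a bipartite graph between them), a single query $\Nbr(u,\cdot)$ can expose an entire ``row'' of $\bx$, and a tell-tale $\bot$ at the end of a short adjacency list can flag the special vertices, which would make the whole problem solvable in $O(n)$ queries and collapse the bound. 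I would defeat this by (a) fixing the adversarial adjacency-list ordering so that all frame-neighbors of a vertex precede all of its hidden-layer neighbors, and (b) making the hidden layer \emph{degree-regular}, i.e., arranging that every vertex's number of hidden-layer incidences does not depend on $\bx$, so that a neighbor query never returns $\bot$ prematurely and, when it does fall into the hidden layer, returns exactly one hidden edge and hence leaks only a single bit of $\bx$. Verifying that such a degree-regular hidden layer can still enforce properties (i) and (ii) --- still pinning the degeneracy at $\degen_0$ while still forcing the color repetition to land at the coordinate-$k^\star$ non-edge --- is the crux of the argument; the remaining generalization from $\lambda = 1$ to all $\lambda$ is then the routine blow-up step, using both halves of \Cref{lem:blowup} to control $\degen(G_\bx^\lambda)$ and the clique sizes.
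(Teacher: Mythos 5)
Your proposal matches the paper's proof strategy exactly: a reduction from $\nih$ to a degree-regular gadget family, pigeonhole on a $(\degen_0+\lambda+1)$-vertex set with a unique non-edge, and blow-up via \Cref{lem:blowup} for general $\lambda$. The concrete gadget you are searching for is realized in the paper (\Cref{def:query-gadget}) by two disjoint cliques $A$ (size $p+1$) and $B$ (size $p$), with the hidden layer being the degree-preserving swap $\{a_i,a_j\},\{b_i,b_j\} \leadsto \{a_i,b_j\},\{a_j,b_i\}$; this makes $A$ your distinguished set $S$, pins $\degen(H_{ij})=p-1$ independently of $(i,j)$, and lets each $\Nbr$ query be answered from a single bit of $\bx$, which is exactly the content of \Cref{lem:hij-degen} and the oracle construction in \Cref{lem:query-lb-distinguish}.
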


}
{
\ifthenelse{\equal{\acmconf}{1}}
{
\section{Missing Proofs from Section~\ref{sec:query-lb}} \label{app:query-lb-app}
}
{
\subsection{Query Complexity Lower Bounds} \label{sec:query-lb}
}

We now turn to the general graph query model~\cite{Goldreich-proptest-book}.
Recall that our algorithm from \Cref{sec:query-ub} produces a 
$(\degen+o(\degen))$-coloring while making at most $\tO(n^{3/2})$ queries,
without needing to know $\degen$ in advance. Here, we shall prove that the
number of colors cannot be improved to $\degen+1$: that would preclude
sublinear complexity. In fact, we prove more general results,
similar in spirit to the streaming lower bounds from the previous section.
For these lower bounds, we use another family of gadget graphs.

\begin{definition} \label{def:query-gadget}
Given a large integer $p$ (a size parameter), the gadgets for that size are 
$(2p+1)$-vertex graphs on vertex set $A \uplus B$, where 
$A = \{a_1, \ldots, a_{p+1}\}$ and $B = \{b_1, \ldots, b_p\}$. Let
$H$ be the graph consisting of a clique on $A$ and a clique on $B$,
with no edges between $A$ and $B$. For $1 \le i < j \le p$, let
$H_{ij}$ be a graph on the same vertex set obtained by slightly modifying
$H$ as follows (see \Cref{fig:3}):
  \begin{equation} \label{eq:hij-def}
    E(H_{ij}) = E(H) \setminus \big\{\, \{a_i, a_j\},\{b_i, b_j\} \,\big\}
    \cup \big\{\, \{a_i, b_j\},\{a_j, b_i\} \,\big\} \,.
  \end{equation}
\end{definition}
Notice that the vertex $a_{p+1}$ is not touched by any of these modifications.
The relevant properties of these gadget graphs are as follows.
\begin{lemma} \label{lem:hij-degen}
  For all $1 \le i < j \le p$, $\degen(H_{ij}) = p-1$, whereas the chromatic number $\chi(H) = p+1$.
\end{lemma}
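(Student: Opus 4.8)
The plan is to dispatch the chromatic‑number claim immediately and to prove the degeneracy claim by a matching lower and upper bound. For the chromatic number: $H$ is the disjoint union of a clique on $A$, which is a copy of $K_{p+1}$, and a clique on $B$, which is a copy of $K_p$, with no edges between the two parts; hence $\chi(H) = \max\{p+1,\,p\} = p+1$.

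For the lower bound $\degen(H_{ij}) \ge p-1$: inspecting \cref{eq:hij-def}, the only edge deleted from within $A$ is $\{a_i,a_j\}$, so the $p$-element set $A \setminus \{a_j\}$ still induces a complete graph $K_p$ in $H_{ij}$. Since $H_{ij}$ contains an induced $K_p$, and $K_p$ has minimum degree $p-1$, the definition of degeneracy forces $\degen(H_{ij}) \ge p-1$.

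For the upper bound $\degen(H_{ij}) \le p-1$ I would apply the ``if'' direction of \Cref{lem:degen-odeg}, exhibiting a single vertex ordering $\before$ of $V(H_{ij})$ in which $\odeg_{H_{ij},\before}(v) \le p-1$ for every $v$. The ordering I propose puts first the $p-2$ vertices $\{b_k : k \in [p] \setminus \{i,j\}\}$ (in any order), then $b_i$, then $b_j$, then $a_i$, then $a_j$, and finally the $p-1$ vertices of $A \setminus \{a_i,a_j\}$ (in any order). Checking the bound is a short, mechanical case analysis: each $b_k$ with $k \ne i,j$ has all of its at most $p-1$ neighbours inside $B$, so at most $p-1$ come later; $b_i$ and $b_j$ each have a unique later neighbour, namely $a_j$ and $a_i$ respectively (their $B$-neighbours all precede them, and $\{b_i,b_j\}$ is not an edge), so their ordered degree is $1$; $a_i$ and $a_j$ each have exactly the $p-1$ vertices of $A\setminus\{a_i,a_j\}$ as later neighbours (the cross edges go back to $b_i,b_j$, which precede them, and $\{a_i,a_j\}$ is a non-edge); and every vertex of $A\setminus\{a_i,a_j\}$ has all $p$ of its neighbours inside $A$, at most $p-1$ of which sit after it in $A$, hence at most $p-2$ come later. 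Together with the previous paragraph this yields $\degen(H_{ij}) = p-1$.

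I do not foresee a genuine obstacle; the only thing demanding care is the bookkeeping in that case analysis---correctly crediting each vertex with exactly the right deleted/added incident edges when counting later neighbours---together with a sanity check of the degenerate small case $p=2$, where the first block is empty and $H_{ij}$ is simply a path on five vertices (degeneracy $1 = p-1$). As an alternative to writing down the ordering explicitly, one could instead run the greedy min-degree peeling of \Cref{def:degen-order}: the $B$-vertices (each of degree $p-1$) peel off first, then $b_i,b_j$ (now of degree $1$), then the copy of $K_{p+1}$ minus an edge on $A$, and this peeling realises essentially the ordering above.
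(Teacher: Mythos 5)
Your proof is correct and follows essentially the same approach as the paper: both establish the lower bound via a $p$-clique inside $A$ and the upper bound by exhibiting a vertex ordering with all of $B$ before all of $A$ and invoking Lemma~\ref{lem:degen-odeg}. Your ordering is a finer refinement of the paper's (which simply uses $B \before a_i \before A\setminus\{a_i\}$), and the extra bookkeeping you carry out is accurate but not needed for the bound.
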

\begin{proof}
  The claim about $\chi(H)$ is immediate.

  Consider a particular graph $H_{ij}$. The subgraph induced by $A\setminus \{a_i\}$ is a $p$-clique, so $\degen(H_{ij}) \ge p-1$.

  Now consider the following ordering $\before$ for $H_{ij}$: $B\before a_i\before A\setminus \{a_i\}$, where the order within each set is arbitrary. For each $v\in B$, $\odeg_{H_{ij},\before}(v) \le \deg(v) = p-1$. For each $v \in A\setminus \{a_i\}$, $\odeg_{H_{ij},\before}(v) \le |A \setminus \{a_i\}| - 1 = p-1$. Finally, $a_i$ has exactly $p-1$ neighbors in $A\setminus \{a_i\}$ (by construction, $a_j$ is not a neighbor), so $\odeg_{H_{ij},\before}(a_i) = p-1$. By \Cref{lem:degen-odeg}, it follows that $\degen(H_{ij}) \le p-1$.
\end{proof}

Our proofs will use these gadget graphs in reductions from a pair of basic
problems in decision tree complexity. Consider inputs that are vectors in
$\b^N$: let $\bzero$ denote the all-zero vector 
and, for $i \in [N]$, let $\be_i$ denote the vector 
whose $i$th entry is $1$ while all other entries are $0$. 
Let $\uor_N$ and $\nih_N$ denote the following partial functions on $\b^N$:
\[
  \uor_N(\bx) = \begin{cases}
    0 \,, & \text{if } \bx = \bzero \,, \\
    1 \,, & \text{if } \bx = \be_i \,, \text{ for } i \in [N] \,, \\
    \star \,, & \text{otherwise;}
  \end{cases}
  \qquad
  \nih_N(\bx) = \begin{cases}
    i \,, & \text{if } \bx = \be_i \,, \text{ for } i \in [N] \,, \\
    \star \,, & \text{otherwise.}
  \end{cases}
\]
Informally, these problems capture, respectively, the tasks of 
(a)~determining whether there is a needle in a haystack under the
promise that there is at most one needle, and (b)~finding a needle
in a haystack under the promise that there is exactly one needle.
Intuitively, solving either of these problems with high accuracy 
should require searching almost the entire haystack. Formally,
let $\Rdt_\delta(f)$ denote the $\delta$-error randomized query complexity (a.k.a.~decision tree complexity) of $f$.
Elementary considerations of decision tree complexity lead to the bounds below (for a thorough discussion, including formal definitions, we refer the reader to the survey by Buhrman and de Wolf~\cite{BuhrmanW02}).

\begin{fact} \label{fact:uor-lb}
  For all $\delta \in (0, \frac12)$, we have $\Rdt_\delta(\uor_N)
  \ge (1-2\delta)N$ and $\Rdt_\delta(\nih_N) \ge (1-\delta)N-1$. \qed
\end{fact}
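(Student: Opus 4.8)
\emph{Proof proposal.} The plan is to prove both inequalities by combining the easy direction of Yao's principle---an averaging argument over the algorithm's internal coins---with a single structural observation about how a deterministic decision tree behaves on the all-zeros input. Fix $\delta \in (0,\tfrac12)$ and suppose $\mathcal{A}$ is a randomized query algorithm that, on every input, makes at most $q$ queries and outputs $f(\bx)$ with probability at least $1-\delta$, where $f \in \{\uor_N,\nih_N\}$. I want to show $q \ge (1-2\delta)N$ when $f = \uor_N$ and $q \ge (1-\delta)N-1$ when $f = \nih_N$.

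First I would pick a hard input distribution $\mu$ supported on legal (promise-satisfying) inputs: for $\uor_N$, put mass $\tfrac12$ on $\bzero$ and mass $\tfrac1{2N}$ on each $\be_i$; for $\nih_N$, take $\mu$ uniform on $\{\be_i : i \in [N]\}$. Since $\mathcal{A}$ errs with probability at most $\delta$ on each individual input, it errs with probability at most $\delta$ under $\mu$ too, so by averaging over its coins there is a deterministic decision tree $D$ that still makes at most $q$ queries on every input and satisfies $\Pr_{\bx \sim \mu}\bigl[D(\bx) \neq f(\bx)\bigr] \le \delta$.

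The structural step is the crux. Let $P$ be the computation path of $D$ obtained when every query is answered $0$; let $S_0$ be the set of coordinates queried along $P$ (there are at most $q$ of them, as $D$ may be assumed never to repeat a query); and let $v_0$ be the value $D$ outputs at the end of $P$. The key observation is that for every $i \notin S_0$, running $D$ on $\be_i$ retraces $P$ exactly---the only coordinate on which $\be_i$ differs from $\bzero$ is $i$, which $D$ never queries along $P$---so $D(\be_i) = v_0$. For $\uor_N$: we cannot have $v_0 = 1$ (that would make $D$ err on $\bzero$, which has mass $\tfrac12 > \delta$), so $v_0 = 0$, and then $D$ errs on $\be_i$ for each of the $N - |S_0| \ge N - q$ indices $i \notin S_0$; hence $\delta \ge \tfrac1{2N}(N-q)$, i.e.\ $q \ge (1-2\delta)N$. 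For $\nih_N$: all inputs $\be_i$ with $i \notin S_0$ are mapped by $D$ to the single value $v_0$, so at most one of them is answered correctly, and $D$ errs on at least $N - |S_0| - 1 \ge N - q - 1$ of the $N$ equiprobable inputs; hence $\delta \ge \tfrac1N(N-q-1)$, i.e.\ $q \ge (1-\delta)N - 1$.

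I do not expect a real obstacle. The only points needing care are (i)~interpreting $\Rdt_\delta(\cdot)$ as the worst-case---rather than expected---number of queries, so that the averaging step produces a deterministic tree with a \emph{pointwise} query bound of $q$; and (ii)~the observation that $D$'s output on $\be_i$ agrees with its all-zeros output whenever $i \notin S_0$, which follows by a one-line induction on the steps of $D$ from the fact that $\be_i$ and $\bzero$ differ only in coordinate $i$.
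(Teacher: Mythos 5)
Your proof is correct. Note, however, that the paper does not actually supply a proof of this Fact: it is stated with a terminal \qed and the preceding sentence defers to ``elementary considerations of decision tree complexity,'' pointing the reader to the Buhrman--de~Wolf survey. What you have written is exactly the standard argument the authors are alluding to: the easy (averaging) direction of Yao's principle to pass to a deterministic tree $D$, followed by the observation that every input $\be_i$ whose support is disjoint from the all-zeros computation path of $D$ must receive the same answer as $\bzero$. Both the choice of hard distribution ($\tfrac12$ at $\bzero$ plus uniform on $\{\be_i\}$ for \uor; uniform on $\{\be_i\}$ for \nih) and the bookkeeping that yields $q \ge (1-2\delta)N$ and $q \ge (1-\delta)N-1$ are exactly right. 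The only caveats you flag---interpreting $\Rdt_\delta$ as worst-case depth so the averaging step preserves the pointwise query bound, and the one-line induction showing $D(\be_i)=D(\bzero)$ when $i$ is never queried on the all-zeros path---are genuine but minor and you handle them correctly. In short: you have supplied precisely the missing proof.
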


With this setup, we turn to lower bounds of the first flavor.
\begin{lemma} \label{lem:query-lb-distinguish}
  Solving $\gdct(n,p,\lambda)$ in the general graph query model requires $\Omega(n^2/\lambda^2)$ queries.
  
  More precisely, there is a constant $c > 0$ such that for every integer $\lambda \ge 1$ and every sufficiently
  large integer $p$, there is a setting $n = n(p,\lambda)$ for which every randomized query algorithm
  for $\gdct(n,p,\lambda)$ requires at least $c n^2/\lambda^2$ queries in the worst case.
\end{lemma}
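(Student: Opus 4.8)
The plan is to mirror the streaming reduction in \Cref{lem:streaming-lb-distinguish}, but now reducing from the decision-tree problem $\uor_N$ and invoking the bound $\Rdt_{1/3}(\uor_N) \ge N/3$ from \Cref{fact:uor-lb}. I would first handle $\lambda = 1$. Set $N = \binom{p}{2}$, index $\uor_N$ by the unordered pairs $\{i,j\}$ with $1 \le i < j \le p$, and use the $(2p+1)$-vertex gadgets of \Cref{def:query-gadget}, so that $n = 2p+1$. Given an input $\bx \in \b^N$, let the associated graph be $H$ if $\bx = \bzero$ and $H_{ij}$ if $\bx = \be_{\{i,j\}}$. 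By \Cref{lem:hij-degen}, $\chi(H) = p+1$, so $H \in \cG_1(n,p,1)$, and $\degen(H_{ij}) = p-1$, so $H_{ij} \in \cG_2(n,p,1)$; thus any algorithm solving $\gdct(n,p,1)$ on the associated graph computes $\uor_N(\bx)$.

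The crux is that \emph{every} query a $\gdct$ algorithm makes into its input graph can be answered by reading at most one coordinate of $\bx$. For a pair query this is immediate from \cref{eq:hij-def}: edges inside $\{a_1,\dots,a_p\}$, edges inside $B$, and the possible cross edges $\{a_k,b_l\}$ each depend on the single bit $x_{\{k,l\}}$, while edges incident to $a_{p+1}$ are present no matter what. Neighbor queries require committing to an ordering of the adjacency lists, and here I would exploit two structural facts: the degree of every vertex is the same in $H$ and in all $H_{ij}$ (namely $p$ for each $a_i$ and for $a_{p+1}$, and $p-1$ for each $b_i$), and, for a fixed $a_i$, each index $j \in [p]\setminus\{i\}$ contributes exactly one of $a_j,b_j$ to the neighborhood of $a_i$ — it is $a_j$ when $x_{\{i,j\}}=0$ and $b_j$ when $x_{\{i,j\}}=1$. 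Accordingly, order $a_i$'s list as $a_{p+1}$ followed by, for $j \in [p]\setminus\{i\}$ in increasing order, whichever of $a_j,b_j$ is the neighbor; order each $b_i$'s list analogously (its $j$-th entry, $j \in [p]\setminus\{i\}$, being $b_j$ or $a_j$ according to $x_{\{i,j\}}$); and let $a_{p+1}$'s list be $a_1,\dots,a_p$, independent of $\bx$. With these orderings any $\Nbr$ query is answered after at most one probe of $\bx$. Hence a $q$-query randomized algorithm for $\gdct(n,p,1)$ of success probability $2/3$ yields a depth-$\le q$ randomized decision tree for $\uor_N$ with error $\le 1/3$, so $q \ge N/3 = \Omega(p^2) = \Omega(n^2)$.

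For general $\lambda$, replace the gadgets by their blow-ups $H^\lambda$ and $H_{ij}^\lambda$ (\Cref{def:blowup}) on $n = (2p+1)\lambda$ vertices, fixing a canonical identification of $[n]$ with (base vertex, copy index) pairs. By the Blow-Up Lemma (\Cref{lem:blowup}), the $(p+1)$-clique of $H$ yields a $(p+1)\lambda$-clique in $H^\lambda$, so $\chi(H^\lambda) \ge (p+1)\lambda$ and $H^\lambda \in \cG_1(n,p,\lambda)$, while $\degen(H_{ij}^\lambda) \le (\degen(H_{ij})+1)\lambda - 1 = p\lambda-1$, so $H_{ij}^\lambda \in \cG_2(n,p,\lambda)$. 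A query into a blow-up again costs at most one bit of $\bx$: two copies of the same base vertex are always adjacent, two copies of distinct base vertices $u,v$ are adjacent iff $\{u,v\}$ was an edge of the base gadget, and ordering every blow-up adjacency list so that the copies of the base vertex itself, then (for the $a_i$'s) the copies of $a_{p+1}$, come first — exactly mirroring the base case, followed by blocks of $\lambda$ copies in the base-gadget order — keeps every neighbor query to a single probe. Since $n^2/\lambda^2 = \Theta(p^2) = \Theta(N)$, the same decision-tree bound gives $q = \Omega(n^2/\lambda^2)$, with the constant $c$ absorbing the $\Theta$'s.

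The one place needing care — the expected main obstacle — is pinning down the adjacency-list orderings for the base gadget and checking that one fixed ordering is simultaneously consistent with $H$ and with every $H_{ij}$; this works precisely because degrees are gadget-invariant and the ``missing $a_j$'' in $H_{ij}$ is replaced, in the same list position, by the ``new $b_j$''. Everything else (the pair-query case, the two reductions, and the blow-up) is routine.
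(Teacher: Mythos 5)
Your proof is correct and follows essentially the same approach as the paper: a reduction from $\uor_N$ via the gadgets $H, H_{ij}$ of \Cref{def:query-gadget}, answering each pair or neighbor query with at most one probe of $\bx$ (using that degrees are gadget-invariant and that the ``missing'' $a_j$ in $H_{ij}$ occupies the same list position as the ``new'' $b_j$), then invoking \Cref{fact:uor-lb} and the Blow-Up Lemma for general $\lambda$. The only cosmetic difference is that you place $a_{p+1}$ first in each $a_i$'s adjacency list while the paper places it last; either fixed convention works.
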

\begin{proof}
We reduce from $\uor_N$, where $N = \binom{p}{2}$, using the following plan.
Put $n = (2p+1)\lambda$. Let $\cC$ be a query algorithm for $\gdct(n,p,\lambda)$.
Based on $\cC$, we shall design a $\frac13$-error algorithm $\cA$ for $\uor_N$ that
makes at most as many queries as $\cC$. By \Cref{fact:uor-lb}, this number of
queries must be at least $N/3 = \Omega(p^2) = \Omega(n^2/\lambda^2)$.

As usual, we detail our reduction for $\lambda=1$; the
Blow-up Lemma (\Cref{lem:blowup}) then handles general $\lambda$. By \Cref{lem:hij-degen},
$H \in \cG_1$ whereas each $H_{ij} \in \cG_2$ (cf.~\Cref{def:graph-dist}, taking $q = p$).

We now design $\cA$. Let $\bx \in \b^{N}$ be the input to $\cA$. Using a canonical bijection, let
us index the bits of $\bx$ as $x_{ij}$, where $1 \le i < j \le p$. Algorithm $\cA$ simulates
$\cC$ and outputs $1$ iff $\cC$ decides that its input lies in $\cG_2$. Since $\cC$
makes queries to a graph, we shall design an oracle for $\cC$ whose answers,
based on query answers for input $\bx$ to $\cA$, will implicitly define a graph on vertex set 
$V := A \uplus B$, as in \Cref{def:query-gadget}.
The oracle answers queries as follows.
  \begin{itemize}[itemsep=1pt]
    \item For $i,j \in [p]$, it answers $\Pair(a_i, a_j)$ and $\Pair(b_i, b_j)$ with $1-x_{ij}$.
    \item For $i,j \in [p]$, it answers $\Pair(a_i, b_j)$ and $\Pair(a_j, b_i)$ with $x_{ij}$.
    \item For $i \in [p]$, it answers $\Pair(a_{p+1}, a_i)$ with $1$ and $\Pair(a_{p+1}, b_i)$ with $0$.
    \item For $i\in [p]$ and $d\in [p-1]$, it answers $\Nbr(a_i, d)$ with $a_j$ if $x_{ij} = 0$ and $b_j$ if $x_{ij} = 1$, where $j=d$ if $d<i$, and $j=d+1$ otherwise.
    \item For $i, d\in [p]$, it answers $\Nbr(a_i, p)$ with $a_{p+1}$ and $\Nbr(a_{p+1}, d)$ with $a_d$.
    \item For $i\in [p]$ and $d\in [p-1]$, it answers $\Nbr(b_i, d)$ with $b_j$ if $x_{ij} = 0$ and $a_j$ if $x_{ij} = 1$, where $j=d$ if $d<i$, and $j=d+1$ otherwise.
    \item For all other combinations of $v \in V$ and $d \in \NN$, it answers $\Nbr(v,d) = \bot$.
  \end{itemize}
By inspection, we see that the graph defined by this oracle is $H$ if $\bx = \bzero$ and is $H_{ij}$ if $\bx = \be_{ij}$. Furthermore, the oracle answers each query by making at most one query to the input $\bx$. It follows that $\cA$ makes at most as many queries as $\cC$ and decides $\uor_N$ with error at most $\frac13$. This completes the proof for $\lambda=1$.

To handle $\lambda > 1$, we modify the oracle in the natural way so that the implicitly defined graph is $H^\lambda$ when $\bx = \bzero$ and $H_{ij}^\lambda$ when $\bx = \be_{ij}$. We omit the details, which are routine.
\end{proof}

As an immediate consequence of \Cref{lem:query-lb-distinguish},
we get the following query lower bounds.
\begin{theorem}
  Given query access to an $n$-vertex graph $G$, succeeding with probability
  at least $2/3$ at either of the following tasks requires $\Omega(n^2/\lambda^2)$
  queries, where $\lambda \ge 1$ is an integer parameter:
  \begin{enumerate}[label=(\roman*), itemsep=1pt]
    \item produce a proper $(\degen+\lambda)$-coloring of $G$;
    \item produce an estimate $\hat{\degen}$ such that $|\hat{\degen} - \degen| \le  \lambda$.
    \qed
  \end{enumerate}
\end{theorem}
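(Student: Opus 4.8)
The plan is to obtain both bounds as direct corollaries of the \gdct query lower bound (\Cref{lem:query-lb-distinguish}), mirroring the streaming derivation in \Cref{thm:stream-lb}: I will show that any query algorithm accomplishing task~(i) or~(ii) can be turned, with \emph{no additional queries}, into an algorithm distinguishing the families $\cG_1$ and $\cG_2$ of \Cref{def:graph-dist} for appropriately chosen parameters, so that \Cref{lem:query-lb-distinguish} forces $\Omega(n^2/\lambda^2)$ queries.

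For task~(i), fix a sufficiently large integer $q$ and let $n = n(q,\lambda)$ be as guaranteed by \Cref{lem:query-lb-distinguish}. Given an input $G$ to $\gdct(n,q,\lambda)$, run the supposed $(\degen+\lambda)$-coloring algorithm $\cA$ and let $c$ be the number of distinct colors it uses (reading this off costs no queries). If $G \in \cG_2$ then $\degen(G) \le q\lambda - 1$, so a successful run has $c \le \degen(G) + \lambda \le (q+1)\lambda - 1$. If $G \in \cG_1$ then $\chi(G) \ge (q+1)\lambda$, so every proper coloring of $G$ --- in particular a successful output of $\cA$ --- has $c \ge (q+1)\lambda$. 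Hence declaring ``$G \in \cG_2$'' exactly when $c \le (q+1)\lambda - 1$ solves \gdct with probability $\ge 2/3$ using only $\cA$'s queries, and we are done by \Cref{lem:query-lb-distinguish}.

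For task~(ii) the only twist is parameter bookkeeping. I would reduce from $\gdct(n,q,\lambda')$ with an inflated gadget parameter $\lambda' := 2\lambda + 1$. On input $G$, run the estimator $\cB$ to get $\hat{\degen}$. If $G \in \cG_2$ then $\degen(G) \le q\lambda' - 1$; if $G \in \cG_1$ then $\chi(G) \ge (q+1)\lambda'$, and since $\chi \le \degen + 1$ (\Cref{lem:degen-color}) we get $\degen(G) \ge (q+1)\lambda' - 1 = (q\lambda' - 1) + \lambda' > (q\lambda' - 1) + 2\lambda$. The two possible ranges of $\degen(G)$ are thus separated by strictly more than $2\lambda$, so a correct $\pm\lambda$ estimate $\hat{\degen}$ pins down which family $G$ lies in (compare $\hat{\degen}$ against an appropriate threshold inside the gap). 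This again adds no queries, and since $\lambda' = \Theta(\lambda)$, \Cref{lem:query-lb-distinguish} yields $\Omega(n^2/\lambda'^2) = \Omega(n^2/\lambda^2)$.

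There is no substantive obstacle here --- all the combinatorial content is already packed into \Cref{lem:hij-degen} and \Cref{lem:query-lb-distinguish}. The one point to handle with a little care is the separation arithmetic in part~(ii): the raw degeneracy gap between $\cG_1$- and $\cG_2$-graphs at parameter $\lambda'$ is only $\lambda'$, so one must take $\lambda' > 2\lambda$ rather than $\lambda' = \lambda$; this costs only a constant factor and hence leaves the stated bound unchanged. It is also worth stating explicitly (as above) that both reductions are query-free post-processing of the given algorithm's output, so its query complexity is inherited verbatim.
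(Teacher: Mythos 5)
Your proposal is correct and is exactly the reduction the paper has in mind (the paper leaves it implicit, recording the theorem as an immediate consequence of \Cref{lem:query-lb-distinguish}, just as the analogous streaming result \Cref{thm:stream-lb} is a one-line deduction from \Cref{lem:streaming-lb-distinguish}). Your attention to the arithmetic in part~(ii) --- choosing $\lambda' = 2\lambda+1$ so that the degeneracy gap between $\cG_1$ and $\cG_2$ strictly exceeds $2\lambda$ --- is the one point that genuinely needs care, and you handled it correctly.
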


We now prove a lower bound of the second flavor, where the algorithm
knows $\degen$ in advance.

\begin{theorem} \label{thm:query-colorlb-lambda}
  Given an integer $\degen$ and query access to an $n$-vertex graph $G$ with $\degen(G) = \degen$,
  an algorithm that, with probability $\frac23$, produces a proper $(\degen+\lambda)$-coloring of $G$
  must make $\Omega(n^2/\lambda^2)$ queries.
\end{theorem}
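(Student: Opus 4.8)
The plan is to adapt the strategy behind the streaming lower bound \Cref{thm:streamcoloring-lb} to the query model, replacing the reduction from $\indx$ by one from $\nih_N$ and using the query gadgets $H_{ij}$ of \Cref{def:query-gadget}. The crucial point of reducing from $\nih_N$ (whose input is promised to be a single basis vector $\be_{ij}$) is that the gadget graph implicitly presented to the algorithm is always one of the $H_{ij}$, and by \Cref{lem:hij-degen} all of these have degeneracy exactly $p-1$, independent of the hidden pair $\{i,j\}$. This is precisely what lets us hand the algorithm its input graph's degeneracy in advance, as the theorem demands.

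In more detail, set $N=\binom{p}{2}$ and $n=(2p+1)\lambda$, and first handle $\lambda=1$. Suppose $\cC$ is a randomized query algorithm that, given an $n$-vertex graph of known degeneracy $p-1$, outputs a proper $p$-coloring with probability at least $2/3$. We run the oracle constructed in the proof of \Cref{lem:query-lb-distinguish}, but only along its ``$\bx=\be_{ij}$'' branch, so that $\cC$'s $\Pair$ and $\Nbr$ queries are answered consistently with $H_{ij}$ while each graph query is simulated using at most one probe of the $\nih_N$ input. When $\cC$ succeeds, its output is a proper $p$-coloring of $H_{ij}$; since $|A|=p+1$ but only $p$ colors are available, some color is repeated within $A$, and by properness the repeated vertices form a non-edge of $H_{ij}$ lying inside $A$. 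The only such non-edge is $\{a_i,a_j\}$; moreover a short independent-set argument shows the repetition within $A$ is unique and occurs exactly there. Hence the algorithm $\cA$ that simulates $\cC$, inspects the returned coloring, and outputs the unique monochromatic pair inside $A$ solves $\nih_N$ with error at most $1/3$ and makes no more queries than $\cC$.

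For general $\lambda$, I would run the same reduction on the blow-up graph $H_{ij}^{\lambda}$, modifying the oracle in the routine way indicated in the proof of \Cref{lem:query-lb-distinguish}. By \Cref{lem:blowup} and the observation that $A\setminus\{a_i\}$ is a $p$-clique of $H_{ij}$, we get $\degen(H_{ij}^\lambda)=p\lambda-1$, still independent of $\{i,j\}$, so $\cC$ can be told $\degen=p\lambda-1$. A proper $(\degen+\lambda)=((p+1)\lambda-1)$-coloring must repeat a color among the $(p+1)\lambda$ vertices of $A^\lambda$; since the only non-adjacent pairs within $A^\lambda$ run between the $K_\lambda$-copy of $a_i$ and that of $a_j$, any monochromatic pair inside $A^\lambda$ identifies $\{i,j\}$. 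Thus a $Q$-query coloring algorithm yields an $O(Q)$-query algorithm for $\nih_N$ with error $1/3$, and $\Rdt_{1/3}(\nih_N)\ge \tfrac23 N-1=\Omega(N)=\Omega(p^2)=\Omega(n^2/\lambda^2)$ by \Cref{fact:uor-lb}, giving the claimed bound.

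The hard part is purely a matter of careful bookkeeping: specifying the query oracle for $H_{ij}^\lambda$ precisely enough — in particular fixing a canonical ordering of the adjacency lists so that $\Nbr$ queries are well defined on the blown-up graph, and checking that every $\Pair$/$\Nbr$ answer can still be produced from $O(1)$ probes of $\bx$. The conceptual content — the fixed-degeneracy trick afforded by the ``exactly one needle'' promise, the pigeonhole that forces a monochromatic pair in $A^\lambda$, and the uniqueness of its location — is short and rests directly on \Cref{lem:hij-degen}, \Cref{lem:blowup}, and \Cref{fact:uor-lb}.
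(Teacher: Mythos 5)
Your proposal is correct and follows essentially the same route as the paper: a reduction from $\nih_N$ via the gadgets $H_{ij}$ and the oracle from \Cref{lem:query-lb-distinguish}, using the fact that $\degen(H_{ij})=p-1$ is independent of $\{i,j\}$ so the degeneracy can be revealed in advance, and then reading off the hidden pair from the forced color repetition inside $A$ (or $A^\lambda$ after blow-up). The only superfluous step is the "short independent-set argument" for uniqueness of the repetition — properness already forces any monochromatic pair in $A$ to be a non-edge, and $\{a_i,a_j\}$ is the only one.
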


\begin{proof}
  We focus on the case $\lambda=1$; the general case is handled by the Blow-up Lemma, as usual.
  
  Let $\cC$ be an algorithm for the coloring problem. We design
  an algorithm $\cA$ for $\nih_N$, where $N = \binom{p}{2}$, using
  the same reduction as in \Cref{lem:query-lb-distinguish}, changing the
  post-processing logic as follows: $\cA$ outputs $(i,j)$ as its
  answer to $\nih_N(\bx)$, where $1 \le i < j \le p$ is such that
  $a_i$ and $a_j$ are colored the same by $\cC$.
  
  To prove the correctness of this reduction, note that when $\bx = \be_{ij}$,
  the graph defined by the simulated oracle is $H_{ij}$ and $\degen(H_{ij})
  = p-1$ (\Cref{lem:hij-degen}). Suppose that $\cC$ is successful, which 
  happens with probability at least $\frac23$. Then $\cC$ properly $p$-colors
  $H_{ij}$. Recall that $V(H_{ij}) = A \uplus B$, where $|A| = p+1$; there must
  therefore be a color repetition within $A$. The only two non-adjacent vertices
  inside $A$ are $a_i$ and $a_j$, so $\cA$ correctly answers $(i,j)$. By
  \Cref{fact:uor-lb}, $\cA$ must make $\Omega(N) = \Omega(p^2)$ queries.
\end{proof}


}

\subsection{A Combinatorial Lower Bound} 
\label{sec:impossibility}

Finally, we explore a connection between degeneracy based coloring and the {\em list coloring problem}.
In the latter problem, each vertex has a list of colors and
the goal is to find a corresponding list coloring---i.e., a proper coloring of the graph where
each vertex receives a color from its list---or to report that none exists.
Assadi \etal~\cite{AssadiCK19} proved a beautiful {\em Palette Sparsification Theorem},
a purely graph-theoretic result that connects the $(\Delta+1)$-coloring problem
to the list coloring problem.

Define a graph $G$ to be {\em $[\ell,r]_\delta$-randomly list colorable} (briefly, $[\ell,r]_\delta$-RLC) if choosing $r$ random colors per vertex, independently and uniformly without replacement from the palette $[\ell]$, permits a list coloring with probability at least $1-\delta$ using these chosen lists.\footnote{When $r \ge l$, this procedure simply produces the list $[\ell]$ for every vertex.} Their theorem can be paraphrased as follows.

\begin{fact}[Assadi et al.\cite{AssadiCK19}, Theorem 1]
  There exists a constant $c$ such that every $n$-vertex graph $G$ is $[\Delta(G)+1,\, c\log n]_{1/n}$-RLC. \qed
\end{fact}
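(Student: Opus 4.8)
The plan is to follow the sparse--dense paradigm that underlies modern $(\Delta+1)$-coloring results. First I would dispose of the trivial regime: if $\Delta(G)+1 \le c\log n$, then every vertex's sampled list is the entire palette $[\Delta+1]$, and a proper coloring exists greedily, so I may assume $\Delta = \omega(\log n)$. The core is a structural decomposition of $V(G)$: call a vertex $v$ \emph{sparse} if $G[N(v)]$ spans at most $(1-\eps)\binom{\Delta}{2}$ edges (for a small absolute constant $\eps$), and \emph{dense} otherwise. A standard argument shows the dense vertices partition into \emph{almost-cliques} $K_1,\dots,K_t$, where each $K_i$ has size $(1\pm O(\eps))\Delta$, spans all but $O(\eps\Delta^2)$ of its potential edges, and every vertex of $K_i$ sends only $O(\eps\Delta)$ edges outside $K_i$. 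Let $V_{\mathrm{sp}}$ be the remaining sparse vertices.

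Second, I would color $V_{\mathrm{sp}}$. The point of sparsity is \emph{slack}: a sparse vertex $v$ has $\Omega(\eps\Delta^2)$ non-adjacent pairs inside $N(v)$, so in any proper partial coloring the number of \emph{distinct} colors used on $N(v)$ is at most $\Delta+1-\Omega(\eps\Delta)$, i.e.\ a constant fraction of $v$'s palette stays free. One then argues that random lists of size $c\log n$ suffice for $G[V_{\mathrm{sp}}]$: conditioning on an arbitrary proper coloring of everything colored so far, the probability that none of $v$'s $c\log n$ random colors is free is at most $(1-\Omega(\eps))^{c\log n} \le n^{-10}$; a union bound over $v \in V_{\mathrm{sp}}$, made rigorous either by processing vertices in a fixed order with a martingale/conditioning argument or by the Lov\'asz Local Lemma on the ``bad list'' events, gives failure probability at most $1/(3n)$.

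Third, and this is the heart, I would color each almost-clique $K_i$ \emph{after} the sparse vertices, fixing an arbitrary order among the $K_i$'s so that each vertex of $K_i$ has only $O(\eps\Delta)$ already-colored external neighbors, whose colors are simply forbidden. Since $|K_i| \approx \Delta+1$ equals the palette size and $K_i$ is nearly complete, a proper list coloring of $K_i$ is essentially a system of distinct representatives: form the bipartite graph $B_i$ with the uncolored vertices of $K_i$ on the left, the palette $[\Delta+1]$ on the right, and an edge $(v,a)$ whenever $a\in L(v)$ and $a$ is not yet used by a neighbor of $v$. A proper list coloring of $K_i$ corresponds to a left-saturating matching, so it suffices to verify Hall's condition in $B_i$ with high probability. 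Because each list is $c\log n$ uniform colors out of $\Delta+1$, while each vertex loses only $O(\eps\Delta)$ colors to external constraints and the $O(\eps\Delta)$ internal non-neighbors give additional room, $B_i$ behaves like a random bipartite graph in which every left vertex has $\Omega(\log n)$ random right-neighbors; the classical fact that such a graph has a left-perfect matching except with probability $n^{-\omega(1)}$ handles one almost-clique, and a union bound over the at most $n$ almost-cliques closes the argument. Summing the three failure probabilities yields total failure at most $1/n$.

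The main obstacle I anticipate is Step three: making ``$B_i$ looks random enough for Hall'' precise requires controlling the \emph{dependence} between a vertex's set of still-available colors and the choices already made for other vertices of the same $K_i$ --- one cannot treat the lists as independent of the set of blocked colors. The standard resolution is to expose the randomness in two phases (reserve a small sub-list first, used only to kill the ``dense'' Hall sets in which every color has high demand, then finish greedily with the remaining colors), or to quote a quantitative SDR/matching lemma for almost-cliques from the $(\Delta+1)$-coloring literature; either way, this conditioning step, together with correctly propagating the $O(\eps\Delta)$ external forbidden colors into the Hall count, is where essentially all the work lies.
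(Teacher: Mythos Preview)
The paper does not prove this statement at all: it is stated as a \emph{Fact} with a \qed, explicitly attributed to Theorem~1 of Assadi, Chen, and Khanna~\cite{AssadiCK19}, and is invoked only as background for the combinatorial lower bound in \Cref{sec:impossibility}. There is therefore no ``paper's own proof'' to compare against.

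That said, your proposal is a faithful outline of the actual proof in~\cite{AssadiCK19}. The sparse--dense (almost-clique) decomposition, the slack argument for sparse vertices, and the matching/Hall-type argument inside each almost-clique are exactly the three pillars of their argument. You have also correctly identified where the real work lies: the dependence between a vertex's surviving palette and the lists of other vertices in the same almost-clique is the delicate point, and~\cite{AssadiCK19} resolves it with a two-phase exposure (they first find a ``colorful matching'' on the non-edges inside the almost-clique to create slack, then finish with a system-of-distinct-representatives argument on the residual). One minor correction: your Step~2 union bound over sparse vertices is not quite right as stated, since the event ``$v$'s list misses all free colors'' depends on the partial coloring, which in turn depends on other vertices' lists; the original proof handles this by first performing a single round of random coloring to \emph{create} the slack deterministically (conditioning on that round succeeding), after which the remaining list-coloring of sparse vertices can be done greedily with fresh randomness. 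Your martingale/LLL alternatives would also work but need more care than a bare union bound.
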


Indeed, this theorem is the basis of the various coloring results in
their work. Let us outline how things work in the streaming model, focusing on the space usage.
Given an input graph $G$ that is promised to be $[\ell,r]_{1/3}$-RLC, for some
parameters $\ell,r$ that may depend on $G$, we sample $r$ random colors 
from $[\ell]$ for each vertex before reading the input. Chernoff bounds
imply that the {\em conflict graph}---the subgraph of $G$ consisting
only of edges between vertices whose color lists intersect---is of size
$O(|E(G)| r^2/\ell)$, w.h.p.. Using $|E(G)| \le n\Delta/2$, taking $\ell = \Delta+1$ and $r = O(\log n)$ bounds this size by $\tO(n)$, so a semi-streaming space bound suffices to collect the entire conflict graph.
(For full details, see Lemma 4.1 in \cite{AssadiCK19}.) Finding a list coloring of the conflict graph (which exists with probability at least $2/3$) yields an $\ell$-coloring of $G$.

For a similar technique to work in our setting, we would want $\ell \approx \degen$. Recalling that $|E(G)| \le n\degen$, for the space usage to be $\tO(n)$, we need $r=O(\polylog n)$. This raises the following combinatorial question: what is the smallest $\lambda$ for which we can guarantee that every graph is $[\degen+\lambda, O(\polylog n)]_{1/3}$-RLC?

By the discussion above, our streaming lower bound in \Cref{thm:streamcoloring-lb} already tells us that such a result is not possible with $\lambda = O(\degen^{\frac12-\gamma})$. Our final result (\Cref{thm:comb-lb} below) proves that we can say much more.

\medskip Let $J_{n,t}$ denote the graph $K_t + \overline{K}_{n-t}$, i.e., the graph join of a $t$-clique and an $(n-t)$-sized independent set. More explicitly,
\begin{equation} \label{eq:hnt-def}
  J_{n,t} = (A \uplus B, E) \,, \quad\text{where }
  |A| = t,\, |B| = n-t,\, E = \{\{u,v\}:\, u \in A, v \in A \cup B, u \ne v\} \,.
\end{equation}


\begin{lemma}
\label{lem:comb-lb}
  For integers $0 < r \le t < n$, if $J_{n,t}$ is $[\degen+\degen/r,\, r]_{\delta}$-RLC, then $\delta \ge 1 - r^n/(r+1)^{n-t}$.
\end{lemma}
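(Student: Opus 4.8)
The plan is to prove the \emph{unconditional} bound
\[
  \Pr\left[\text{the random lists on } J_{n,t} \text{ admit a list coloring}\right] \le \frac{r^{n}}{(r+1)^{n-t}},
\]
where each vertex receives $r$ colors drawn uniformly without replacement from a palette of size $m := \degen+\degen/r$; the lemma follows at once, since $[\degen+\degen/r,r]_\delta$-RLC means this probability is at least $1-\delta$. Write $J_{n,t}=(A\uplus B,E)$ as in \cref{eq:hnt-def}: $A$ is a $t$-clique, $B$ an independent set of size $n-t$, and $A,B$ completely joined. First I would note $\degen(J_{n,t})=t$: the vertices of $B$ have degree exactly $t$ and those of $A$ degree $n-1\ge t$, so $J_{n,t}$ itself has no vertex of degree $<t$, hence $\degen\ge t$; while the ordering that puts all of $B$ before all of $A$ has every ordered degree at most $t$, so $\degen\le t$ by \Cref{lem:degen-odeg}. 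Thus $m=t+t/r$; I will assume $r\mid t$ (implicitly required for $m$ to be an integer) and set $s:=t/r$, so that $m=(r+1)s$.

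\emph{Reduction to counting colorings of the clique.} Since $A$ is a clique completely joined to the independent set $B$, the random lists admit a list coloring if and only if there is an \emph{injective} map $\phi\colon A\to[m]$ with $\phi(a)\in L(a)$ for every $a\in A$ and $L(b)\not\subseteq\phi(A)$ for every $b\in B$ --- given such a $\phi$, each $b$ can take any color of $L(b)\setminus\phi(A)$, and conversely. Fixing an order on each list, every tuple $\pi\in[r]^{t}$ picks the $\pi_i$-th element of $L(a_i)$ and so determines a candidate map $\phi_\pi$, and every proper list coloring of the clique $A$ equals some $\phi_\pi$. Hence the success event is contained in $\bigcup_{\pi\in[r]^{t}}E_\pi$, where $E_\pi$ is the event that $\phi_\pi$ is injective and $L(b)\not\subseteq\phi_\pi(A)$ for all $b\in B$. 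Conditioning on the lists of the $A$-vertices: if $\phi_\pi$ is injective then $\phi_\pi(A)$ is a fixed $t$-set, and --- using that the $B$-lists are mutually independent and independent of the $A$-lists --- $\Pr[E_\pi\mid A\text{-lists}]=\bigl(1-\binom{t}{r}/\binom{m}{r}\bigr)^{n-t}$. Summing over the $r^{t}$ tuples and bounding $\sum_\pi\Pr[\phi_\pi\text{ injective}]\le r^{t}$ gives
\[
  \Pr[\text{list coloring}]\;\le\; r^{t}\Bigl(1-\tfrac{\binom{t}{r}}{\binom{m}{r}}\Bigr)^{n-t}.
\]

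\emph{The combinatorial heart.} It remains to show $\binom{t}{r}/\binom{m}{r}\ge 1/(r+1)$, equivalently $\binom{(r+1)s}{r}\le(r+1)\binom{rs}{r}$. Here I would use a one-line pigeonhole argument: partition an $(r+1)s$-element ground set into $r+1$ blocks of size $s$; any $r$-subset meets at most $r$ of the $r+1$ blocks, hence lies entirely in the union of some $r$ of them, a set of size $rs$. Therefore every $r$-subset is counted among the $r$-subsets of one of the $r+1$ "union of $r$ blocks" sets, which yields $\binom{(r+1)s}{r}\le(r+1)\binom{rs}{r}$. Plugging $1-\binom{t}{r}/\binom{m}{r}\le r/(r+1)$ into the displayed bound gives $\Pr[\text{list coloring}]\le r^{t}(r/(r+1))^{n-t}=r^{n}/(r+1)^{n-t}$, which finishes the proof.

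\emph{Main obstacle.} I expect two points to carry the weight. First, arranging the union bound to range over the $r^{t}$ colorings of the clique $A$ --- rather than over the $\binom{m}{t}$ possible color-sets, which would be hopelessly lossy --- this is exactly where the clique-plus-independent-set structure of $J_{n,t}$ pays off, and it produces the crucial factor $r^{t}$ (hence $r^{n}$). Second, the inequality $\binom{(r+1)s}{r}\le(r+1)\binom{rs}{r}$: the pigeonhole proof above is short once spotted, but directly estimating the product $\prod_{i=0}^{r-1}\frac{t-i}{m-i}$ is surprisingly fiddly and a crude bound does not obviously deliver the clean constant $1/(r+1)$. The probabilistic bookkeeping in between (independence of the $B$-lists) is routine.
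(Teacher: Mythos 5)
Your proof is correct and follows essentially the same path as the paper's: you note $\degen(J_{n,t}) = t$, reduce list-coloring $J_{n,t}$ to the event that some proper coloring of the clique $A$ leaves every $B$-vertex a free color, union-bound over the $r^t$ candidate colorings of $A$, and bound the per-$B$-vertex failure probability by $r/(r+1)$. The only place you diverge is in establishing that per-vertex bound: the paper obtains $r/(r+1)$ in one line by union-bounding over the $r$ samples each $B$-vertex draws, namely $r\cdot\frac{t/r}{t+t/r}=\frac{r}{r+1}$, whereas you compute the exact (without-replacement) probability $1-\binom{t}{r}/\binom{m}{r}$ and then prove $\binom{(r+1)s}{r}\le(r+1)\binom{rs}{r}$ via a pigeonhole argument. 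Both are valid; the paper's union-bound shortcut spares you the combinatorial lemma you flagged as the fiddly step.
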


\begin{proof}
  Take a graph $J_{n,t}$ with vertices partitioned into $A$ and $B$ as in \cref{eq:hnt-def}. An ordering with $B \lhd A$ shows that $\degen = \degen(J_{n,t}) = t$. We claim that for every choice of colors lists for vertices in $A$, taken from the palette $[t+t/r]$, the probability that the chosen lists for $B$ permit a proper list coloring is at most $p := r^n/(r+1)^{n-t}$. This will prove that $\delta \ge 1-p$.
  
  To prove the claim, consider a particular choice of lists for $A$. Fix a partial coloring $\psi$ of $A$ consistent with these lists. If $\psi$ is not proper, there is nothing to prove. Otherwise, since $A$ induces a clique, $\psi$ must assign $t$ distinct colors to $A$. In order for a choice of lists for $B$ to permit a proper extension of $\psi$ to the entire graph, every vertex of $B$ must sample a color from the remaining $t/r$ colors in the palette. Since $r$ colors are chosen per vertex, this event has probability at most
  \[
    \left(r \cdot \frac{t/r}{t+t/r}\right)^{|B|} = \left(\frac{r}{r+1}\right)^{n-t} \,.
  \]
  The claimed upper bound on $p$ now follows by a union bound over the $r^t$ possible partial colorings $\psi$.
\end{proof}

This easily leads to our combinatorial lower bound, given below. In reading the theorem statement, note that the restriction on edge density {\em strengthens} the theorem.
\begin{theorem} \label{thm:comb-lb}
  Let $n$ be sufficiently large and let $m$ be such that $n \le m \le n^2/\log^2 n$.
  If every $n$-vertex graph $G$ with $\Theta(m)$ edges is 
  $[\degen(G)+\lambda,\, c\log n]_{1/3}$-RLC for some parameter $\lambda$ and some
  constant $c$, then we must have $\lambda > \degen(G)/(c\log n)$.
\end{theorem}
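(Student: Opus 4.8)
The plan is to establish the contrapositive by exhibiting, for each admissible $m$, a single $n$-vertex graph with $\Theta(m)$ edges that fails to be $[\degen+\lambda,\,c\log n]_{1/3}$-RLC whenever $\lambda \le \degen/(c\log n)$, where $\degen$ denotes that graph's degeneracy. The hard instance will be the join graph $J_{n,t}$ of \cref{eq:hnt-def} for a suitable choice of $t$, and the engine is the counting argument already behind \Cref{lem:comb-lb}.

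First I would calibrate $t$. Since $|E(J_{n,t})| = \binom{t}{2} + t(n-t) = \Theta(tn)$ for every $1 \le t \le n/2$, putting $t := \lceil m/n\rceil$ gives a graph with $\Theta(m)$ edges, and the hypothesis $n \le m \le n^2/\log^2 n$ ensures $1 \le t \le n/\log^2 n \le n/2$ once $n$ is large. The proof of \Cref{lem:comb-lb} already records $\degen(J_{n,t}) = t$. Write $r = c\log n$ and assume, toward a contradiction, that $\lambda \le \degen(J_{n,t})/(c\log n) = t/r$. If $t < r$ then $t/r < 1 \le \lambda$, a contradiction already; so we may assume $t \ge r$, whence the lists sampled per vertex are genuine $r$-subsets of $[t+\lambda]$.

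Next I would rerun the argument of \Cref{lem:comb-lb} with $\lambda$ spare colors in place of $\degen/r$. Fix arbitrary color lists for the clique part $A$ and an arbitrary proper partial coloring $\psi$ of $A$; as $A$ is a $t$-clique, $\psi$ uses $t$ distinct colors, leaving exactly $\lambda$ unused colors in the palette. Each vertex of the independent part $B$ is adjacent to all of $A$, so it can receive a compatible color only if one of its $r$ sampled colors lies among these $\lambda$ leftover ones; since each sample is marginally uniform over the palette, a union bound bounds this probability by $r\lambda/(t+\lambda) \le r/(r+1)$, the last inequality using $\lambda \le t/r$. A union bound over the at most $r^t$ partial colorings of $A$ then shows the randomly chosen lists admit a list coloring with probability at most $r^t\,(r/(r+1))^{n-t} = r^n/(r+1)^{n-t}$, so $J_{n,t}$ is not $[\degen+\lambda,\,r]_{1/3}$-RLC provided $r^n/(r+1)^{n-t} < 2/3$. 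As $\ln\bigl(r^n/(r+1)^{n-t}\bigr) = t\ln r - (n-t)\ln(1+1/r) \le t\ln r - (n-t)/(2r)$, this holds once $2rt\ln r < n-t$; with $r = c\log n$ and $t \le n/\log^2 n$ the left side is $O(n\log\log n/\log n) = o(n)$ while $n-t = \Omega(n)$, so it is true for all large $n$. This contradicts the assumed RLC property of every $\Theta(m)$-edge graph, so $\lambda > \degen(J_{n,t})/(c\log n)$.

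The step I expect to be the genuine (if modest) difficulty is the simultaneous juggling of the parameter constraints rather than any single inequality: $t = \Theta(m/n)$ must be large enough for the conclusion $\lambda > \degen/(c\log n)$ to be non-vacuous, yet small enough --- precisely $t \le n/\log^2 n$, which is exactly the role of the upper bound $m \le n^2/\log^2 n$ --- for the tail estimate $r^n/(r+1)^{n-t} < 2/3$ to survive, all while $|E(J_{n,t})|$ stays $\Theta(m)$. A minor formal point is that \Cref{lem:comb-lb} is stated only for palette size $\degen + \degen/r$; one either inlines the short recomputation above or notes that the bound $r^t\,(r\lambda/(t+\lambda))^{n-t}$ on the list-colorability probability is nondecreasing in $\lambda$ and hence is at most its value at $\lambda = \degen/r$.
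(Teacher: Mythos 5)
Your proof is correct and follows essentially the same route as the paper: instantiate $t = \lceil m/n\rceil$, take $J_{n,t}$ as the hard instance, and drive the probability bound of \Cref{lem:comb-lb} to a contradiction under $t = O(n/\log^2 n)$, $r = c\log n$. The only difference is that you re-derive the $\bigl(r\lambda/(t+\lambda)\bigr)^{n-t}$ estimate for a general $\lambda \le t/r$ (and check the degenerate $t < r$ case) rather than invoking the lemma verbatim at $\lambda = \degen/r$; this is a tidy bit of extra care but does not change the argument.
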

\begin{proof}
  Suppose not. Put $t = \ceil{m/n}$, $r = c\log n$, and
  consider the graph $J_{n,t}$ defined in \cref{eq:hnt-def}.
  By the bounds on $m$, $|E(J_{n,t})| = t(t-1)/2 + t(n-t) = \Theta(nt) = \Theta(m)$.
  Put $\degen := \degen(J_{n,t})$. By assumption, $J_{n,t}$ is
  $[\degen + \degen/r,\, r]$-RLC, so \Cref{lem:comb-lb} implies that
  \[
    \frac23 \le \frac{r^n}{(r+1)^{n-t}}
    = \left(1 - \frac1{r+1}\right)^n (r+1)^t
    \le \exp\left( - \frac{n}{r+1} + t \ln(r+1) \right) \,.
  \]
  Since $t = O(n/\log^2 n)$ and $r = c\log n$, this is a contradiction for sufficiently large $n$. \end{proof}

We remark that the above result rules out the possibility of using a palette sparsification theorem along the lines of Assadi et al.~\cite{AssadiCK19} to obtain a semi-streaming coloring algorithm that uses fewer colors than \Cref{alg:streaming} (with the setting $\eps = 1/\log n$).

More generally, suppose we were willing to tolerate a weaker notion of palette sparsification by sampling $O(\log^d n)$ colors per vertex, for some $d\ge 1$: this would increase the space complexity of an algorithm based on such sparsification by a $\polylog n$ factor. By \Cref{lem:comb-lb}, arguing as in \Cref{thm:comb-lb}, we would need to spend at least $\degen + \degen/\Theta(\log^d n)$ colors. This is no better than the number of colors obtained using \Cref{alg:streaming} with the setting $\eps = 1/\log^d n$, which still maintains semi-streaming space. In fact, palette sparsification does not immediately guarantee a post-processing runtime that is better than exponential, because we need to color the conflict graph in post-processing. Meanwhile, recall that \Cref{alg:streaming} has $\tO(n)$ post-processing time via a straightforward greedy algorithm. Furthermore, since there exist ``hard'' graphs $J_{n,t}$ at all edge densities from $\Theta(n)$ to $\Theta(n^2/\log^2 n)$, we cannot even hope for a semi-streaming palette-sparsification-based algorithm that might work only for sparse graphs or only for dense graphs.

\section*{Acknowledgement}

We gratefully acknowledge several helpful discussions we have had with Sepehr Assadi (especially those that called to our attention a nuance with the Congested Clique algorithm) and Deeparnab Chakrabarty.

\ifthenelse{\equal{\acmconf}{1}}
{
    \bibliographystyle{ACM-Reference-Format}
}
{
    \bibliographystyle{alpha}
}

\newcommand{\etalchar}[1]{$^{#1}$}

\newpage
\appendix
\ifthenelse{\equal{\acmconf}{1}}
{

}

\end{document}